\theoremstyle{plain}
\newtheorem{theorem}{Theorem}[section]
\newtheorem{lemma}[theorem]{Lemma}
\newtheorem{proposition}[theorem]{Proposition}
\theoremstyle{definition}
\newtheorem{ass}[theorem]{Assumption}
\theoremstyle{remark}
\newtheorem{remark}[theorem]{Remark}
\numberwithin{equation}{section}
\newcommand{\bE}{\mathbb{E}}
\newcommand{\E}{\bE}
\newcommand{\bR}{\mathbb{R}}
\newcommand{\bN}{\mathbb{N}}
\newcommand{\etamax}{\Vert\eta\Vert_{\infty}}%
\newcommand{\etamin}{\Vert \eta^{-1}\Vert_\infty}%
\newcommand{\abs}[1]{\lvert#1\rvert} %
\DeclareMathOperator{\sign}{sign}
\begin{document}
\title{Mean-Field Liquidation Games with Market Drop-out}

\author{Guanxing Fu\thanks{The Hong Kong Polytechnic University, Department of Applied Mathematics, Hung Hom, Kowloon, Hong Kong; guanxing.fu@polyu.edu.hk. G.
Fu’s research is supported by Hong Kong RGC (ECS) Grant No. 25215122 and NSFC Grant No. 12101523.} \qquad  Paul P. Hager\thanks{Humboldt University Berlin, Department of Mathematics, Unter den Linden 6,10099 Berlin; paul.hager@hu-berlin.de.} \qquad Ulrich Horst \thanks{Humboldt University Berlin, Department of Mathematics and School of Business and Economics, Unter den Linden 6,
10099 Berlin; horst@math.hu-berlin.de.}}

\maketitle

\begin{abstract}
We consider a novel class of portfolio liquidation games with market drop-out (``absorption''). More precisely, we consider mean-field and finite player liquidation games where a player drops out of the market when her position hits zero. In particular round-trips are not admissible. This can be viewed as a no statistical arbitrage condition. In a model with only sellers we prove that the absorption condition is equivalent to a short selling constraint. We prove that equilibria (both in the mean-field and the finite player game) are given as solutions to a non-linear higher-order integral equation with endogenous terminal condition. We prove the existence of a unique solution to the integral equation from which we obtain the existence of a unique equilibrium in the MFG and the existence of a unique equilibrium in the $N$-player game. We establish the convergence of the equilibria in the finite player games to the obtained mean-field equilibrium and illustrate the impact of the drop-out constraint on equilibrium trading rates.    
\end{abstract}

{\bf AMS Subject Classification:} 93E20, 91B70, 60H30

{\bf Keywords:}{ portfolio liquidation, mean-field game, Nash equilibrium, absorption, non-linear integral equations}

\section{Introduction}

	Models of optimal portfolio liquidation under market impact have received substantial consideration in the financial mathematics literature in recent years. Starting with the work of Almgren and Chriss \cite{AC-2001} existence and uniqueness of solutions in different settings have been established by a variety of authors including \cite{AJK-2014}, \cite{bank:voss:16}, \cite{FruthSchoenebornUrusov14}, \cite{GatheralSchied11}, \cite{GH-2017}, \cite{GHQ-2015}, \cite{GHS-2013}, \cite{HXZ-2020}, \cite{Kratz14}, \cite{KP-2016}, \cite{OW-2013}, and \cite{PZ-2018}. In this paper we consider a game-theoretic model of optimal portfolio liquidation where a player drops out of the market as soon as her position hits zero; both finite player games and the corresponding mean-field game are considered. 
	
	Mean-field games (MFGs) of optimal liquidation {\sl without} a strict liquidation constraint and {\sl without} market drop-out have been analyzed in the literature before. Cardaliaguet and Lehalle \cite{CL-2018} considered an MFG where each player has a different risk aversion. Casgrain and Jaimungal \cite{C-Jai-2018,C-Jai-2018b} considered games with partial information and different beliefs, respectively. Huang et al. \cite{HJN-2015} considered a game between a major agent who is liquidating a large number of shares and many minor agents that trade against the major player. Mean-field (type) games {\sl with} strict liquidation constraint have so far only been analyzed by Fu and co-workers \cite{FGHP-2018, FH-2020, FHX1} and Evangelista and Thamsten \cite{ET-2020}. 
	
	Our model is different, due to the market drop-out (``absorption'') constraint. In particular, we do not allow round-trips where players with zero (initial) position trade the asset to benefit from favorable future market dynamics. It has been shown in many papers including  \cite{FGHP-2018, FH-2020} that round-trips arise naturally in equilibrium when players with different initial holdings interact in the same market. At the same time beneficial round-trips are usually considered statistical arbitrage. Although the players do not generate positive profits in any state of the world, they generate profits on average from trading an asset that they do not hold. Our absorption constraint may hence be viewed as a ``no statistical arbitrage condition.'' %
		
		 Requiring a player to drop out of the market as soon as her position hits zero also avoids ``hot potato effects'' as they occur in \cite{Schied-2017b, Schied-2019} where different players repeatedly take long and short positions in the same asset to benefit from their own positive impact on market dynamics. It has been argued by many authors including ~\cite{Alfonsi2012, Gatheral2012} that cyclic fluctuations in the players' asset holdings should be viewed as model irregularities or a form of arbitrage and should hence be avoided. 
		
		In a benchmark model with only sellers we show that our drop-out constraint is equivalent to a short-selling constraint where the players are not allowed to change the direction of trading. {To the best of our knowledge, this is in accordance with legal US requirements that ban market participants (in US equity markets)} that liquidate portfolios on a client's behalf to take the opposite side of the market. In models with both buyers and sellers we prove that if sellers initially dominate, then the drop-out constraint is still equivalent to a no short-selling constraint on sellers, while buyers with sufficiently small initial conditions may still benefit from first selling the asset and then buying it back when the market environment is more favorable. We illustrate numerically that the drop-out condition leads to initially slower and eventually faster aggregate equilibrium trading rates compared to models without drop-out constraint. In the absence of a drop-out constraint some traders will initially ``oversell'' and buy the stock back later.  

	The assumption of absorption leads to endogenously controlled trading horizons. In stochastic settings games with endogenous trading horizons are challenging to analyze; the literature on such games is hence sparse. Specific stochastic MFGs with absorption have been considered by, e.g.~Campi and coauthors in \cite{campi2021, campi2018, campi2020}. Their arguments heavily rely on the non-degenerate assumption on the state dynamics, thus cannot be applied to our model because liquidation problems are degenerate. Graber and Bensoussan \cite{graber} studied Bertrand and Cournot MFGs with absorption, where the state was not allowed to be degenerate and thus it excludes the situation we are interested in. Graber and Sircar \cite{Sircar2023} studied a similar model by solving a master equation with Dirichlet boundary condition. Their arguments, too, rely on the non-degeneracy assumption on the state dynamics. Dumitrescu et al.~\cite{Dum}  developed a linear programming approach to study MFGs with stopping and absorption. They use relaxed solutions which allows them to prove existence results under weak assumptions and which easily lends itself to numerical implementation. Cesari and Zheng \cite{Zheng2022} established a necessary stochastic maximum principle for a class of stochastic control problems with absorption under strong assumptions that are difficult to verify in general. 
		
	The situation is simpler for deterministic MFGs. The work of Bonnans et al \cite{Bonnans}  establishes an abstract existence of solutions result for a class of finite-time MFGs of controls with mixed state-control and terminal state constraints. Their analysis is based on a sophisticated, yet abstract fixed point argument which makes it difficult to solve MFGs in closed form. A very different approach that applies to a different class of deterministic MFGs has recently been taken by Graewe et al. \cite{PUR}. They considered a deterministic linear-quadratic MFG of control in which players extract an exhaustible resource and drop out of the game as soon as their resources hit zero. In their model the representative player problem can be rephrased as a standard convex control problem with state constraints and unknwon terminal value of the adjoint process. The authors consider the Hamiltonian of the representative player model only up to a candidate optimal exploitation time in terms of which a candidate terminal condition for the adjoint process can be obtained. The key observation is that the dynamics of the amount of resource extracted by an individual player up to any {\sl given} time by using a strategy that would be optimal if the optimal exhaustion time was equal to that given time is independent of the {\sl equilibrium} mean production rate. 
			
	Our approach is inspired by the work of Graewe et al \cite{PUR}. However, unlike their model, our control problem {\sl cannot} be rewritten as a control problem with pointwise constraints on the controls and/or the state process. This makes it much more difficult to identify the Hamiltonian associated with the representative player's control problem. In the mean-field model we establish a verification result that states that given the optimal liquidation time, the ``usual'' Hamiltonian - and the forward-backward system derived from it by an application of Pontryagin's principle - where the absorption constraint is ignored is indeed the ``correct'' Hamiltonian - and hence giving us the correct forward-backward system on an endogenously determined time interval - of the representative player's problem if the exogenous mean trading rate does not change sign. This reduces the problem of solving the MFG to finding the optimal liquidation time for the representative agent problem. 
	
	While the sign condition on the mean trading rate is highly non-trivial and calls for a novel approach to solving the MFG, it does provide a clear path towards solving the game. Identifying Nash equilibria in the $N$-player game is much more difficult. Due to the endogenously determined trading horizons it seems impossible to derive a Nash equilibrium by solving a forward-backward system of Pontryagin-type. 
	
	Our key idea is to replace the mean-trading rate in the co-state equations (and only there) by an exogenous trading rate. This provides a unified mathematical framework within which to solve both the $N$-player game and the MFG. If $N=\infty$, then the system reduces to the forward-backward system for the representative player's problem in the MFG. If $N< \infty$, the system has no direct economic meaning but turns out to be very useful for establishing the existence of Nash equilibria in the $N$-player game. {We emphasize that our MFG provides the intuition for solving the $N$-player game. Instead of showing that equilibria in the MFG are approximate equilibria in the corresponding $N$-player game if the number of players is large we use the MFG to solve the $N$-player game. It is the nature of the MFG that motivates the forward-backward system in terms of which we characterize equilibria in the finite player game.} %
	
	 The solutions to the modified forward-backward system allow us to define a family of admissible trading strategies as functions of exogenous mean trading rates in terms of which we derive a general fixed-point equation for the candidate Nash equilibria in both games. The fixed-point equation can be rewritten in terms of a complex integral equation that  seems hard to solve at first sight. However, motivated by the previously established verification result for the MFG we expect any solution to the said equation to be of constant sign. This motivates an a priori estimate from which we do indeed deduce this property. The ``constant sign property'' substantially simplifies the integral equation and eventually allows us to prove an existence and uniqueness of solutions result.  
	
	Subsequently, we prove a necessary and sufficient maximum principle that states that the solution to our integral equations yields the unique mean-field equilibrium in a subclass of equilibria with continuous mean trading rates. The analysis of the $N$-player game is more subtle as we do not have a general verification result. Instead, we prove that a verification result applies ``in equilibrium'' from which we deduce the existence of a unique Nash equilibrium with continuous aggregate trading rate in the finite player game. 	 In a final step, we prove the convergence of the unique equilibria in the $N$-player game to the mean-field equilibrium established before when the number of players tends to infinity.  
	
	The reminder of this paper is organized as follows. In Section \ref{sec:single-player} we introduce our liquidation models along with a family of benchmark trading strategies in terms of which we can formulate our fixed-point condition for equilibrium aggregate trading rates. The fixed-point equation is analyzed in Section \ref{sec:MFG}. Existence, uniqueness of convergence of equilibrium results are established in Section 4. Section 5 illustrates the impact of drop-out constraints on equilibrium trading. Our numerical simulations suggest that the MFG equilibrium is a very good approximation of finite-player equilibria even for relatively small numbers of players. Section 6 concludes.

{\sl Notation.} For $y\in\mathbb R^{d_1\times d_2}$, $\|y\|$ denotes its $2$-norm. For $y\in L^\infty([0,T]), C^{0}([0,T])$ we denote by $\|y\|_\infty$ the supremum norm. For an almost everywhere differentiable function $y$, we denote by $\dot y$ its derivative. For space $\mathcal S$ of real-valued functions we denote by $\mathcal S_\pm$ the subspace of $\mathcal S$ whose elements take nonnegative/nonpositive values.

\section{The Model}\label{sec:single-player}

In this section we introduce a game-theoretic liquidation model with permanent price impact where a player drops out of the game as soon as her portfolio process hits zero. 

\subsection{Single player model with market drop-out}\label{sec:2}

We consider the problem of a large investor that needs to liquidate a large number $x\in\bR$ of shares over the time interval $[0,T]$. Following the majority of the optimal liquidation literature we assume that only absolutely continuous trading strategies are allowed. In particular, the portfolio process satisfies
$$X_t = x-\int_0^t \xi_s \,ds, \quad t \in [0,T]$$
where $\xi_t$ denotes the trading rate at time $t \in [0,T]$; positive rates mean that the trader is selling the asset while negative rates mean she is buying it. We assume that the player drops out of the market as soon as her position hits zero. As a result, the set of admissible trading strategies is given by the set 
\begin{equation*}
	\mathcal A_{x}:=\left\{\xi\in L^2([0,T]) \;\bigg|\;  \exists  \tau\in(0, T] \text{ s.t. } \abs{X_t} > 0 \mbox{ for } t \in [0,\tau) \mbox{ and } X_t = 0 \mbox{ for } t \in [\tau,T] \right\}
\end{equation*}
of all square integrable processes $\xi$ that satisfy almost surely the liquidation constraint $X_T = 0$ and whose associated portfolio process is absorbed at zero.

We also assume that the unaffected price process against which the trading costs are benchmarked follows some Brownian martingale $S$, and that the trader’s transaction price process is given by
\[
\tilde S_t = S_t - \int_0^t \kappa_s \xi_s\,ds - {\frac{1}{2}} \eta_t \xi_t, \quad t \in [0,T]
\]
for deterministic positive impact processes $\kappa$ and $\eta$. The integral term accounts for permanent price impact while the term ${\frac{1}{2}}\eta_t \xi_t$ accounts for the instantaneous price impact that does not affect future transactions. {The liquidation cost $C$ is then defined as the difference between the book value and the proceeds from trading as
\[
	C = xS_0 - \int_0^T \tilde S_t \xi_t\, dt.
\]
Doing integration by parts and then taking expectations the martingale terms drops out and the expected liquidation cost equals 
\[
	\mathbb{E}[C] = \int_0^T \left( \frac{1}{2}\eta_t\xi^2_t+\kappa_t \xi_t X_t \right) \, dt.
\]}
Introducing an additional risk term  ${\frac{1}{2}}\lambda_t X_t^2$ for some deterministic non-negative process $\lambda$ that penalizes slow liquidation, the trader’s optimization problem reads 
\begin{equation}
\min_{\xi \in {\mathcal{A}}_x} J(\xi) \quad \mbox{ s.t. } \quad dX_t = - \xi_t\, dt, 
\end{equation}
where the cost function is given by
\[
	J(\xi):=\int_0^T \left(\frac{1}{2}\eta_t\xi^2_t+\kappa_t \xi_t X_t+\frac{1}{2}\lambda_t X^2_t\right)\,dt.
\]

\subsection{Game-theoretic liquidation models with market drop-out}

In $N$-player games of optimal liquidation it is usually assumed that transaction prices are driven by the average trading rate 
\begin{align*}
\overline\xi^N_t ~:=~ \frac{1}{N}\sum_{i=1}^{N} \xi^k_t 
\end{align*}
and that the transaction price process for some player $i=1, ..., N$ is of the form 
\[
	\tilde S^i_t = S_t - \int_0^t \kappa_s \overline\xi^N_s ds - \eta_t \xi^i_t.
\]
Assuming that all players are homogeneous, the cost functional for a generic player $i$ equals
\[
	J(\xi^i; \xi^{-i}):=\int_0^T \left(\frac{1}{2}\eta_t(\xi^i_t)^2+\frac{\kappa_t X^i_t}{ N}  \sum_{j=1}^N \xi^j_t  +\frac{1}{2}\lambda_t (X^i_t)^2\right)\,dt
\]
and his optimization problem reads 
\begin{equation}
\min_{\xi \in {\mathcal{A}}_{x_i}} J(\xi^i;\xi^{-i}) \quad \mbox{ s.t. } \quad dX^i_t = - \xi^i_t dt,\quad X_0^i=x_i.
\end{equation}

{
\begin{remark}
Some comments on our cost function are in order. Following \cite{FGHP-2018} we assume that aggregate net trading rates drive asset prices (mid-quote prices in our setting). The assumption that permanent market impact depends on aggregate behavior is standard in the literature on liquidation games, see e.g. \cite{CL-2018,C-Jai-2018b,FGHP-2018}; 
it accounts for the fact that (i) prices are driven by excess rather than absolute demand and (ii) the impact on an individual player on (mid-quote) prices  should be lower in markets with many participants. 
By contrast we assume that the instantaneous impact depends on individual, not aggregate demand. 
The reason for this is threefold. First, buyers and sellers act on different sides of the market. 
It would hence be inappropriate to assume that  buying and selling effects average out when computing instantaneous impact. 
Second, different sellers never trade at exactly the same time.\footnote{Trading rates should be viewed as approximating situations where market participants arrive at independent Poisson arrival times with rates $\xi^i$ in which case no two traders arrive at exactly the same time.}
Third, assuming that the instantaneous impact scales in the number of sellers would be equivalent to assuming that the instantaneous impact parameter $\eta$ converges to zero as the number of participants converges to infinity. The recent work of Horst and Kivman \cite{HK-2023} suggests that the limiting optimization problem would be very different from the pre-limit one.  In any case, the scaling assumptions are irrelevant in finite player games.  
\end{remark}
}

In the corresponding MFG the average trading rate is replaced by an exogenous trading rate $\mu$, the representative player's cost functional is given by  
\[
	J(\xi; \mu):=\int_0^T \left(\frac{1}{2}\eta_t\xi^2_t+\kappa_t \mu_t X_t+\frac{1}{2}\lambda_t X^2_t\right)\,dt
\]
and her control problem reads %
\begin{equation}\label{opt-rp}
	\min_{\xi \in \mathcal A_x} J(\xi;\mu) \quad \mbox{ s.t. } \quad dX_t = - \xi_t dt,\quad X_0=x.
\end{equation}
Given an optimal trading rate {$\xi^{*,x}(\mu)$ for the representative player with initial position $x$} as a function of the exogenous mean trading rate $\mu$ the equilibrium condition reads
{
$$\mu = \int_{\mathbb{R}}\xi^{\ast, x}(\mu)\nu_0(dx),$$
where $\nu_0$ denotes the distribution of the initial position.} We impose the following standing assumption on the cost coefficients. 

\begin{ass}\label{ass:single-player} The cost coefficients satisfy\footnote{The differentiability of $\kappa$ is only needed to solve the $N$-player game. It is not needed to solve the MFG.} $$ \lambda \in L^{\infty}_+([0,T]), ~\kappa\in C^{1}_+([0,T]), \quad \mbox{ and } \quad 1/\eta, \eta\in C^{1}([0,T];(0,\infty)).$$
\end{ass}

\subsection{Heuristics}

In this section we {\it heuristically} derive a family of strategies in terms of which we can formulate a fixed-point problem for equilibrium mean trading rates; {a rigorous proof is deferred to Section \ref{sec:single_player_verification}}. Following Graewe et al \cite{PUR} the main idea in solving our liquidation games is to reduce the problem of finding the optimal liquidation strategies to the problem of finding the optimal liquidation time for which we derive an explicit representation. 

One of the main difficulties when solving our liquidation games with market drop-out is that the underlying control problems are control problems of absorption that {\sl cannot} be rewritten as control problems with pointwise constraints on the controls and/or state process. {Rewriting the problem as a problem with state constraints would require an additional non-negativity constraint on the control as in  \cite{PUR}. } 

{In literature, control problems with absorption are documented as {\it exit time control problems}. As far as we can tell, there are basically three ways to address such problems. First, one could use the standard Hamiltonian and derive an HJB equation with a zero boundary condition $u(t,0) \equiv 0$ that captures the absorption constraint. In the presence of boundary conditions it is common practice to assume bounded domains and/or compact control sets; see e.g. \cite{FS-2006}. Assuming bounded domains would not be appropriate in our setting; moreover, in our setting the terminal condition is singular $\lim_{t\nearrow T}u(t,x) = +\infty {{\bf 1}}_{\{x\neq 0\}}$, hence the boundary condition on the parabolic boundary $\{0 \}\times[0,T]\cup\mathbb R\times\{T\}$ is not smooth. The second approach is based on weak formulation of control problems, e.g. \cite{campi2018}, where the absorption time is ``exogenous''; it is the absorption time for the canonical process. Thus, the Hamiltonian is the standard one. However, the weak formulation approach requires nondegeneracy of state dynamics, which is not our case. Third, one could apply a stochastic maximum principle as in \cite{Zheng2022} in which case one needs to take the optimality of the absorption time into account. In other words, one needs to differentiate the absorption time w.r.t.~the strategy. This would result in additional terms in the Hamiltonian and hence in a highly non-standard adjoint equation.  }

\subsubsection{Candidate strategy}

{In this paper we follow a different approach introduced in \cite{PUR} and consider the standard adjoint equation but on an endogenously determined time interval. Our educated guess is that  in equilibrium - hence for the optimal absorption time\footnote{The absorption time in our problem means the first liquidation time.} $\tau^i$ - the Hamiltonian 
\[    
	H(t,\xi^i,X^i,Y^i; \xi^{-i} )=-\xi^i Y^{i}  + \frac{1}{2}\eta_t(\xi^i)^2+\kappa_t \overline\xi^N_t X^i_t +\frac{1}{2}\lambda_t(X^i_t)^2
\]
resulting from an application of the standard stochastic maximum principle without absorption constraint} is the correct Hamiltonian for player $i$'s optimization problem in the $N$-player game {on the time interval $[0,\tau^i]$.\footnote{Intuitively, once an optimal liquidation time has been identified, there is no need to differentiate it.}} Assuming that this is in fact true, taking partial derivatives with respect to $X^i$ and $\xi^i$, {and still taking the liquidation constraint into account,} we expect to obtain the candidate best response strategy 
\begin{equation}\label{eq:xi-i}
	\xi^i_t=\frac{Y^{i}_t-\frac{\kappa_t}{N}X^i_t}{\eta_t}
\end{equation}
in terms of the solution to a system of $N$ fully coupled systems of forward-backward differential equations
\begin{equation}\label{eq:XY-i}
	\left\{\begin{split}
		\bigg.  \dot X^i_t=&~-\frac{Y^i_t-\frac{\kappa_t}{N}X^i_t   }{\eta_t} 1_{\{t \le \tau^{i}\}}  \\
		\bigg.  -\dot Y^i_t=&~(\lambda_t X^i_t+\kappa_t\overline{\xi}^{N}_t  )1_{\{t \le \tau^{i}\}}  \\
		X_0^i=&~x_i,\quad X^{i}_T = 0
	\end{split}\right., \qquad \text{for a.e. } t\in[0,T]. %
\end{equation}

Solving these systems simultaneously for all players is challenging, due to the dependence on the endogenous absorption times. Our idea is to consider instead - for any $\delta \in [0,1]$, any initial position $x \in \bR$, aggregate trading rate $\mu$ and candidate absorption time $\tau_\mu(x)$ - the auxiliary forward-backward system 
\begin{equation}\label{eq:FBODE} %
	\left\{\begin{split}
		\bigg. \dot X_t=&~- \frac{Y_t-\delta \kappa_t X_t   }{\eta_t}  1_{\{t \le \tau_\mu(x)\}}\\
		\bigg.  -\dot Y_t=&~	(\lambda_t X_t+\kappa_t\mu_t)1_{\{t \le \tau_\mu(x)\}} \\
		X_0=&~x,\quad  X_T = 0
	\end{split}\right., \qquad \text{for a.e. } t\in[0,T]. %
\end{equation}
For $\delta=\frac{1}{N}$ this system corresponds to the system \eqref{eq:XY-i} where the aggregate trading rate $\overline\xi^N$ in the co-state equation is replaced by a generic trading rate $\mu$. For $\delta=0$ the system reduces to the corresponding system for the MFG where $N=\infty$ and the term $ \frac{\kappa_t}{N}X_t$ drops out of the equation. 

{ 
\begin{remark}
The above system can be viewed as a forward-backward ODE with an initial and terminal condition on the forward process $X$ and an unknown terminal condition on the backward process $Y$.
\end{remark}
}

We will see that the above system provides a unified mathematical framework within which to analyze both the $N$-player game and the MFG but there are some subtleties to be accounted for. Specifically, we will {derive a particular solution $(X^{\delta,x,\mu}, Y^{\delta,x,\mu})$ to the above system for a {\sl specific} absorption time $\tau_\mu(x)$ -  that is to be determined - from which we then define the candidate strategy}
\begin{equation}\label{xi*}
	\xi^{\delta, x, \mu} := \frac{Y^{\delta,x,\mu} - \delta \kappa X^{\delta,x,\mu}}{\eta}.
\end{equation} 
The challenge is to characterize the absorption time; this will be achieved heuristically in the Section \ref{sec:candidate-absorption-time}. Subsequently, in Section \ref{sec:single_player_verification}, we prove that the candidate is indeed the absorption time of $\xi^{\delta, x, \mu}$ if the process $\mu$ does not change sign. From this we will deduce that the candidate strategy is indeed admissible and indeed an optimal response to $\mu$. 

{For a given distribution $\nu_0$ of the players' initial positions, we then proceed by showing that the fixed-point problem for the aggregated candidate strategy}
\begin{equation} \label{mu*}
	\mu = \int_{\mathbb{R}} \xi^{\delta, x, \mu} \nu_0(dx)
\end{equation} 
{has a unique solution for all $\delta \in [0,1]$ and, importantly, that this solution does not change its sign.
This finally allows us to verify that this fixed point yields an equilibrium aggregate trading rate in the MFG ($\delta = 0$). The verification for the $N$-player game ($\delta = \frac 1 N$) is slightly different and will be carried out in Section 4.}

\subsubsection{Candidate absorption time}\label{sec:candidate-absorption-time}

{In this section we heuristically identify a candidate optimal absorption time and, hence, a candidate optimal liquidation strategy. Motivated by liquidation problems without market drop-out as analyzed in, e.g.~\cite{FGHP-2018} our idea is to determine a solution to \eqref{eq:FBODE} by making the linear ansatz $Y=AX+B$ where the coefficients $A$ and $B$ satisfy the following ODE system:

\begin{equation}\label{eq:AB}
	\left\{\begin{split}
		-\dot A^\delta_t=&~-\frac{(A^\delta)^2_t}{\eta_t} + \delta \frac{\kappa_t}{\eta_t} A^\delta_t +\lambda_t\\
		-\dot B^\delta_t=&~\left(-\frac{A^\delta_tB^\delta_t}{\eta_t}+\kappa_t\mu_t\right) 1_{\{t\leq\tau_\mu(x)\}}\\
		\lim_{t\nearrow T}A^\delta_t=&~\infty,\quad B^\delta_{\tau_\mu(x)} =0.
	\end{split}\right. 
\end{equation}

\begin{remark}
For models without market drop-out where $\tau_\mu(x)=T$ it is well known that solving the above ODE system with singular terminal condition is equivalent to solving the forward-backward system \eqref{eq:FBODE} with unknown terminal condition, and that the unique optimal strategy is given by \eqref{xi*}. Our idea is to follow a similar approach. We will see that we can solve the system \eqref{eq:AB} for any $\tau_\mu(x) \in [0,T]$. In particular, for any such time we can {\sl define} the backward (adjoint) process $Y=AX+B$ and then solve for the forward (portfolio) process $X$ in  \eqref{eq:FBODE} with initial condition $X_0=x$. However, we cannot expect the terminal condition $X_{\tau_\mu(x)}=0$ to hold and hence the strategy in \eqref{xi*} to be admissible. For the strategy to be admissible, we will first identify those times $\tau_\mu(x)$ for which the liquidation constraint holds and then identify additional conditions on $\mu$ such that $\tau_{\mu(x)}$ is the {\sl first} liquidation time for the process $X$.  
\end{remark}
}

We notice that the dynamics of the process $A^\delta$ is independent of $\tau_\mu(x)$. The existence of a unique process $A^{\delta}$ that satisfies the singular terminal value problem follows from Lemma~\ref{lem:A-kappa}. The second equality in \eqref{eq:AB} then yields that
\begin{equation}\label{eq:explicit-B}
	B^\delta_t = B^{\delta,x,\mu}_t=1_{ \{t\leq\tau_\mu(x)\} }\int_t^{\tau_\mu(x)} e^{ -\int_t^s\frac{A^{\delta}_r}{\eta_r}\,dr	}\kappa_s\mu_s\,ds ,\qquad t\in [0,T].
\end{equation}

{
Armed with the solution $(A^\delta,B^{\delta,x,\mu})$ to the above singular ODE system we can now plug the linear ansatz 
	$Y^{\delta,x,\mu}:=A^\delta X+B^{\delta,x,\mu}$}
along with \eqref{eq:explicit-B} into \eqref{eq:FBODE} to obtain the following representation of the forward process $X^{\delta,x,\mu}$ in  \eqref{eq:FBODE}:
\begin{align}\label{eq:explicit-X}
  X^{\delta,x,\mu}_t=\left\{	xe^{-\int_0^t \frac{A^{\delta}_r - \delta\kappa_r}{\eta_r}\,dr}	-\int_0^t \frac{1}{\eta_s} e^{-\int_s^t \frac{A^{\delta}_r- \delta\kappa_r}{\eta_r}\,dr }\int_s^{\tau_\mu(x)} \kappa_u \mu_u e^{-\int_s^u\frac{A^{\delta}_r}{\eta_r}\,dr}\,du\,ds	\right\}1_{ \{  t\leq \tau_\mu(x) \} }.
\end{align}

{We emphasize that the above process - albeit being well defined for arbitrary $\tau_\mu(x) \in [0,T]$ - may not be an admissible portfolio process in general; see also Lemma \ref{lem:a_priori_candiate} below.} However, our educated guess is that the {optimal} absorption time $\tau_\mu(x)$ is implicitly given by\footnote{We will see that this is true if $\mu$ does not change sign. }  
\begin{equation}\label{cand-time2}
	   X_{\tau_\mu(x)}^{\delta, x, \mu } = 0.
\end{equation}
We proceed by
 identifying those times $\tau_\mu(x)$ that satisfy \eqref{cand-time2}. We start with the following standard result. {It states that in models without drop-out constraint $X^{\delta,x,\mu}$ is an admissible portfolio process.} The proof follows from \eqref{eq:explicit-X} together with Lemma \ref{lem:estimatesA} and Lemma \ref{lem:A-kappa}. 

\begin{proposition}\label{prop:liquidation}
	If $\tau_\mu(x) = T$, then $X^{\delta,x,\mu}_{\tau_\mu(x)} =0$.
\end{proposition}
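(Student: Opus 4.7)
The plan is to evaluate the explicit representation \eqref{eq:explicit-X} in the limit $t \nearrow T = \tau_\mu(x)$ and show that both summands inside the braces vanish. Since $X$ is continuous on $[0,T]$ as a classical solution of the linear ODE (on $[0,\tau_\mu(x)]$) and already zero on $[\tau_\mu(x),T]$, it suffices to establish $\lim_{t \nearrow T} X_t = 0$. I would split $X_t = I_1(t) - I_2(t)$ with
\begin{align*}
I_1(t) &= x \, e^{-\int_0^t (A^\delta_r - \delta\kappa_r)/\eta_r\,dr}, \\
I_2(t) &= \int_0^t \frac{e^{-\int_s^t (A^\delta_r - \delta\kappa_r)/\eta_r\,dr}}{\eta_s} \int_s^T \kappa_u \mu_u \, e^{-\int_s^u A^\delta_r/\eta_r\,dr}\,du\,ds.
\end{align*}

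For $I_1$, Assumption \ref{ass:single-player} gives $\int_0^T \delta \kappa_r/\eta_r\,dr < \infty$, so the $\delta\kappa$ piece is inessential; the standard singular behaviour provided by Lemma \ref{lem:A-kappa} (namely that $A^\delta_r$ blows up like $\eta_r/(T-r)$ near $r = T$, which is the classical consequence of the singular terminal value $A^\delta_T = \infty$ in this Riccati equation) forces $\int_0^T A^\delta_r/\eta_r\,dr = \infty$, so $I_1(t) \to 0$. For $I_2$, I would combine the decay estimate expected from Lemma \ref{lem:estimatesA},
\begin{equation*}
e^{-\int_s^t A^\delta_r/\eta_r\,dr} \leq C\,\frac{T-t}{T-s}, \qquad 0 \leq s \leq t < T,
\end{equation*}
with a Cauchy--Schwarz bound on the inner integral (using $e^{-\int_s^u A^\delta_r/\eta_r\,dr} \leq 1$) yielding $\bigl|\int_s^T \kappa_u \mu_u e^{-\int_s^u A^\delta_r/\eta_r\,dr}\,du\bigr| \leq \|\kappa\|_\infty \|\mu\|_{L^2}\sqrt{T-s}$. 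After absorbing the bounded factor $e^{\int_s^t \delta\kappa_r/\eta_r\,dr}$ into a constant, these combine into
\begin{equation*}
|I_2(t)| \leq C'(T-t) \int_0^t \frac{ds}{\eta_s\sqrt{T-s}} \leq C''(T-t) \longrightarrow 0 \qquad \text{as } t \nearrow T.
\end{equation*}

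The main obstacle is strictly bookkeeping: confirming that Lemmas \ref{lem:A-kappa} and \ref{lem:estimatesA} indeed furnish the asymptotic $A^\delta_r/\eta_r \sim 1/(T-r)$ and the quantitative exponential decay used above. Given those estimates, the argument reduces to the elementary computations sketched, which is why the authors can fairly call the proof standard.
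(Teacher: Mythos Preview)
Your approach is essentially the same as the paper's: split the explicit representation \eqref{eq:explicit-X} into the homogeneous and inhomogeneous parts and kill each one using the singular behaviour of $A^\delta$ (Lemma~\ref{lem:A-kappa}, which via the comparison $A^\delta\ge A^0$ feeds into the bounds of Lemma~\ref{lem:estimatesA}). The overall scheme is correct.

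One small technical point: your Cauchy--Schwarz bound on the inner integral uses $\|\mu\|_{L^2}$, but at this stage of the paper only $\mu\in L^1([0,T])$ is assumed (cf.\ Lemma~\ref{lem:a_priori_candiate}). The fix is trivial: simply bound the inner integral by the constant $\|\kappa\|_\infty\|\mu\|_{L^1}$ (using $e^{-\int_s^u A^\delta_r/\eta_r\,dr}\le 1$) and then apply the decay estimate of Lemma~\ref{lem:estimatesA} to the outer exponential to get
\[
|I_2(t)| \;\le\; C(T-t)\int_0^t\frac{ds}{\eta_s(T-s)} \;\le\; C'\,(T-t)\log\frac{T}{T-t}\;\longrightarrow\;0.
\]
With this adjustment your argument goes through exactly as the paper intends.
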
 

Let us now turn to the case where {we expect early liquidation to take place}. In this case, the liquidation time satisfies $\tau_\mu(x)< T$.
From \eqref{cand-time2} it must hold that
\begin{equation}\label{eq:=x}
	\int_0^{\tau_\mu(x)}\frac{1}{\eta_s} e^{\int_0^s\frac{A^{\delta}_r-\delta\kappa_r}{\eta_r}\,dr} \int_s^{\tau_\mu(x)}\kappa_u\mu_u e^{-\int_s^u\frac{A^{\delta}_r}{\eta_r}\,dr}\,du\,ds=x.  %
\end{equation}
To identify those initial positions for which early liquidation {may take place} we introduce the function
\[
h^{\delta}_t:=e^{	-\int_0^t\frac{A^{\delta}_r}{\eta_r}\,dr	}\int_0^t \frac{1}{\eta_s} e^{ \int_0^s\frac{2A^{\delta}_r - \delta\kappa_r}{\eta_r}\,dr}\,ds.
\]
As we will see, the function $h^\delta$ features in the characterization of the absorption time. 
The proof of the following lemma is given in the appendix. 

\begin{lemma}\label{lem:about-h} 
The function $h^{\delta}$ is strictly increasing, differentiable, and both $h^{\delta},$ $\dot{h}^{\delta}$ and $1/\dot{h}^{\delta}$ are bounded uniformly in $\delta\in[0,1]$. Moreover, 
	\begin{equation}\label{def:alpha}
		\alpha^{\delta}_T := \left(\lim_{t\to T}h^\delta_t\right)^{-1} = \lim_{t \to T} A^{\delta}_t e^{-\int_0^t\frac{A^{\delta}_r - \delta\kappa_r}{\eta_r}\,dr} \in (0,\infty),
	\end{equation}
and for almost every $t\in[0,T]$ it holds that $h_t^\delta\rightarrow h_t^0$ and $\dot h_t^\delta\rightarrow\dot h_t^0$, as $\delta\rightarrow 0$. 	 
\end{lemma}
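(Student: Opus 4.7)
The plan is to reduce everything to two ODE identities and read off each property in turn. Differentiating the product $h^\delta_t = e^{-\int_0^t A^\delta_r/\eta_r\,dr}\int_0^t \eta_s^{-1} e^{\int_0^s (2A^\delta_r - \delta\kappa_r)/\eta_r\,dr}\,ds$ gives
$$\dot h^\delta_t = \frac{1}{\eta_t}\Bigl(e^{\int_0^t (A^\delta_r - \delta\kappa_r)/\eta_r\,dr} - A^\delta_t h^\delta_t\Bigr) =: \frac{g^\delta_t}{\eta_t}.$$
Using the Riccati equation $\dot A^\delta_t = (A^\delta_t)^2/\eta_t - \delta\kappa_t A^\delta_t/\eta_t - \lambda_t$, a direct computation reveals that $g^\delta$ satisfies the linear ODE $\dot g^\delta_t = -(\delta\kappa_t/\eta_t)\,g^\delta_t + \lambda_t h^\delta_t$ with $g^\delta_0 = 1$. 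Solving explicitly and using $\lambda, h^\delta, \kappa, \eta \geq 0$ yields
$$g^\delta_t = e^{-\int_0^t \delta\kappa_r/\eta_r\,dr}\Bigl(1 + \int_0^t e^{\int_0^s \delta\kappa_r/\eta_r\,dr}\lambda_s h^\delta_s\,ds\Bigr) \geq e^{-\|\kappa/\eta\|_\infty T}>0,$$
which simultaneously gives $C^1$ regularity on $[0,T)$, strict monotonicity $\dot h^\delta>0$, and a uniform-in-$\delta$ lower bound on $\dot h^\delta$ (hence an upper bound on $1/\dot h^\delta$).

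For the terminal identity, the same Riccati equation yields
$$\frac{d}{dt}\Bigl[A^\delta_t e^{-\int_0^t (A^\delta_r - \delta\kappa_r)/\eta_r\,dr}\Bigr] = -\lambda_t e^{-\int_0^t (A^\delta_r - \delta\kappa_r)/\eta_r\,dr},$$
so the bracketed quantity is nonincreasing and admits a limit as $t\to T$; the singular terminal condition $A^\delta_t\to\infty$ ensures that both factors of $h^\delta_t$ diverge at $T$, and a single application of L'Hopital gives
$$\lim_{t\to T} h^\delta_t = \lim_{t\to T}\frac{e^{\int_0^t (A^\delta_r-\delta\kappa_r)/\eta_r\,dr}}{A^\delta_t} = \frac{1}{\alpha^\delta_T}.$$
Strict positivity of $\alpha^\delta_T$ and its uniform lower bound in $\delta$ is taken from Lemma \ref{lem:estimatesA} (applied to the explicit monotone expression $A^\delta_0 - \int_0^t \lambda_s e^{-\int_0^s (A^\delta_r-\delta\kappa_r)/\eta_r\,dr}\,ds$). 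Monotonicity of $h^\delta$ then provides the uniform upper bound $h^\delta \leq 1/\alpha^\delta_T$, and feeding this back into the closed form of $g^\delta$ yields a uniform upper bound on $\dot h^\delta$.

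Finally, for the convergence $h^\delta_t\to h^0_t$ and $\dot h^\delta_t\to \dot h^0_t$ as $\delta\to 0$, the plan is to invoke the stability of Riccati solutions $A^\delta\to A^0$ provided by Lemma \ref{lem:A-kappa} and then apply dominated convergence to the defining integrals for $h^\delta$ and to the explicit formula for $g^\delta = \eta\,\dot h^\delta$; the uniform bounds just established supply the required dominating functions on every compact subinterval of $[0,T)$. The main obstacle I anticipate is establishing the uniform positivity of $\alpha^\delta_T$ in $\delta\in[0,1]$: this is essentially a uniform statement about the blow-up profile of $A^\delta$ at the singular endpoint and forces one to use the precise Riccati estimates of Lemma \ref{lem:estimatesA} and Lemma \ref{lem:A-kappa}; everything else is a direct consequence of the two identities derived above.
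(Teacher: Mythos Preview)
Your proof is correct and takes a genuinely different route from the paper. The paper works directly with the defining integral for $h^\delta$: it rewrites the integrand in terms of $\alpha^\delta$ to obtain the upper bound $h^\delta_t \le 1/\alpha^\delta_T$, then performs an integration by parts to extract a matching lower bound $h^\delta_t \ge 1/\alpha^\delta_t - Ce^{-\int_0^t A^\delta_r/\eta_r\,dr}$; squeezing these together yields the terminal identity, and the bounds on $\dot h^\delta$ are then derived by inserting the $h^\delta$-bounds into the raw formula $\dot h^\delta_t = -\frac{A^\delta_t}{\eta_t}h^\delta_t + \frac{1}{\eta_t}e^{\int_0^t (A^\delta_r-\delta\kappa_r)/\eta_r\,dr}$. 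Your approach instead packages the derivative into $g^\delta = \eta\dot h^\delta$, observes that $g^\delta$ solves the \emph{linear} ODE $\dot g^\delta = -(\delta\kappa/\eta)g^\delta + \lambda h^\delta$ with $g^\delta_0=1$, and reads the uniform two-sided bounds on $\dot h^\delta$ directly from the variation-of-constants formula; the terminal identity falls out from L'H\^opital applied to the quotient $h^\delta_t = N(t)/D(t)$, which works because $N'(t)/D'(t) = 1/\alpha^\delta_t$ already converges by monotonicity. Your route is arguably cleaner---it avoids the explicit integration by parts and delivers the lower bound $\dot h^\delta_t \ge \eta_t^{-1}e^{-\int_0^t\delta\kappa_r/\eta_r\,dr}$ (identical to the paper's) in one line. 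A small correction: the strict positivity and uniform-in-$\delta$ lower bound of $\alpha^\delta_T$ is not contained in Lemma~\ref{lem:estimatesA} but in Lemmas~\ref{lem:alpha-exisitence} and~\ref{lem:convergence-alpha}; you correctly flag this as the nontrivial input, but the citation should point there.
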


We can apply Fubini's theorem to rewrite the left hand side of the equation \eqref{eq:=x} as a function of time
\begin{equation}
	\label{cha-tau}
	\begin{split}
		& \int_0^t\frac{1}{\eta_s}e^{\int_0^s\frac{A^{\delta}_r-\delta\kappa_r}{\eta_r}\,dr} \int_s^t \kappa_u\mu_u e^{	-\int_s^u\frac{A^{\delta}_r}{\eta_r}\,dr	}\,du\,ds \\
		=&~\int_0^t\kappa_u\mu_u e^{	-\int_0^u\frac{A^{\delta}_r}{\eta_r}\,dr	}\int_0^u \frac{1}{\eta_s} e^{\int_0^s\frac{2A^{\delta}_r - \delta\kappa_r}{\eta_r}\,dr}\,ds\,du \\
		=& ~  \int_0^t\kappa_u\mu_uh^{\delta}_u\,du  \\
		=: & ~ f_\mu(t).
	\end{split}
\end{equation}
In view of the preceding lemma the function $f_\mu$ is well defined. 
{Using the convention $\inf \emptyset = T$ we see that $\tau_\mu(x)$ is a candidate first liquidation time only if   
\begin{equation}\label{eqn:tau}
	\tau_\mu(x) = \inf\left\{ t \in [0,T] : f_\mu(t) = x\right\}.
\end{equation}

To summarize, our heuristic analysis suggests to proceed as follows:
	\begin{itemize}
		\item Define the function $f_\mu$ given in \eqref{cha-tau} and the candidate absorption time given by \eqref{eqn:tau}.
		\item For the candidate absorption time define the functions $(A^\delta,B^{\delta,x,\mu})$ by \eqref{eq:AB} and \eqref{eq:explicit-B}.
		\item In terms of the functions $(A^\delta,B^{\delta,x,\mu})$ define the portfolio process $X^{\delta,x,\mu}$ by \eqref{eq:explicit-X} and the ``adjoint'' process $Y^{\delta,x,\mu}=A^\delta X^{\delta,x,\mu}+B^{\delta,x,\mu}$. 
		\item Define the candidate best response by \eqref{xi*}.
	\end{itemize}       
In the next section, we verify rigorously that $(X^{\delta,x,\mu},Y^{\delta,x,\mu})$ satisfies \eqref{eq:FBODE}, that $\xi^{\delta,x,\mu}$ is the best response if $\mu$ does not change sign, and that $\tau_\mu(x)$ defined by \eqref{eqn:tau} is the indeed desired absorption time. }

{
\begin{remark}\label{rmk:sign}
We emphasize that \eqref{eqn:tau} is a necessary, yet a priori no sufficient condition for $\tau_\mu(x)$ to be the optimal liquidation time.\footnote{We thank an anonymous referee for bringing this subtle point to our attention.} We will see that the condition is sufficient only if the aggregate strategy $\mu$ does not change sign. In this case, important quantitative result can be inferred from the above charcaterization of the optimal liquidation time. Let us assume that the sign of $\mu$ is always positive while the sign of $x$ is negative, or vice versa. In this case, $\tau_\mu(x) = T$.  That is, if the aggregate trading rate is always positive (negative), then we expect early liquidation for a buyer (seller) not to be beneficial in equilibrium as the price process is driven in a favorable direction. Moreover in this case, a seller, respectively, buyer will liquidate early only if 
\[
	x \in \left(0,f_\mu(T) \right), \quad \mbox{respectively,} \quad 
	x \in \left(f_\mu(T), 0 \right).
\]
That is, we obtain explicit bounds beyond which early liquidation is not beneficial.
\end{remark}}

\subsection{Admissibility and verification in the mean-field model}\label{sec:single_player_verification}

In this section we substantiate the heuristic considerations from above. In a first step we prove that if $\mu$ does not change sign, and if $\tau_\mu(x)$ is defined by \eqref{eqn:tau}, then the candidate strategy $\xi^{\delta, x, \mu}$ defined by \eqref{xi*} is admissible and that $\tau_\mu(x)$ is the associated {\sl first} liquidation time. 

\begin{lemma}\label{lem:a_priori_candiate}
	\begin{itemize}
		\item[i)] Let $\mu \in L^{1}([0,T])$,  $x \in \bR$. Let {$\tau_\mu(x)$ be defined by \eqref{eqn:tau}, let $ X^{\delta,x, \mu}$ be defined by $\eqref{eq:explicit-X}$ and let
		$$Y^{\delta, x, \mu} := A^{\delta}   X^{\delta, x, \mu} + B^{\delta, x, \mu},$$
		where $(A^{\delta}, B^{\delta, x,\mu})$ is the unique solution to \eqref{eq:AB}.
		Then $(X^{\delta, x, \mu},  Y^{\delta, x, \mu} )$ is absolutely continuous and a solution to the forward-backward ODE \eqref{eq:FBODE}. 	}
		\item[ {ii)}]  {The candidate strategy $ {\xi^{\delta, x, \mu}}$ defined in \eqref{xi*} is absolutely continuous on $[0,T]$} and there exists a constant $C>0$ that depends only on $\mu, \eta, \lambda, \kappa$ and $T$ such that
		\begin{align}\label{eq:xi_apriori_estiate}
			\Vert \xi^{\delta, x, \mu}\Vert_{\infty} + \Vert \dot\xi^{\delta, x, \mu}\Vert_\infty \le C(1+\abs{x}), \qquad x\in \bR, \; \delta\in[0,1].
		\end{align}
		\item[{iii)}]{ If $\mu\in L^{1}_-([0,T])\cup L^{1}_+([0,T])$, then the candidate strategies are admissible, i.e., $$\xi^{\delta, x, \mu} \in \mathcal{A}_x$$ for all $x \in \bR$ and $\delta\in[0,1]$. In particular, $\tau_\mu(x)$ defined in \eqref{eqn:tau} is the corresponding first liquidation time.}
	\end{itemize}
\end{lemma}
\begin{proof}
	\begin{itemize}
		\item[i)] {It is straightforward to verify that the process $(X^{\delta, x, \mu}, Y^{\delta, x, \mu})$ is 
		almost everywhere differentiable and that it satisfies the desired ODE on $[0,\tau_\mu(x))$. If $\tau_\mu(x)=T$, then the assertion follows from Proposition~\ref{prop:liquidation} along with the fact that  
		\begin{equation}\label{limY}
		\lim_{t\to T} Y^{\delta, x, \mu}_t = \lim_{t\nearrow T} (A^{\delta}_tX^{\delta, x, \mu}_t+B_t^{\delta,x,\mu}) = \alpha^{\delta}_T(x-f_\mu(T))< \infty
		\end{equation}
		where $\alpha_T^\delta$ was defined in \eqref{def:alpha}. 
		If $\tau_\mu(x) < T$, then   $X^{\delta, x, \mu}_{\tau_\mu(x)} = 0$ by the definition of $\tau_\mu(x)$. Thus, $X^{\delta, x, \mu}\vert_{[\tau_\mu(x),T]} = 0$. Since $B^\delta_{t} = 0$ for all $t \in [\tau_\mu(x),T]$ it follows that $Y^{\delta, x, \mu}\vert_{[\tau_\mu(x),T]} = 0$. In particular, $X^{\delta, x, \mu}$ satisfies the desired boundary conditions and $(X^{\delta,x, \mu}, Y^{\delta,x, \mu})$ solves the forward-backward system \eqref{eq:FBODE} on the whole interval $[0,T]$.  
		}
	
		\item[ii)]  {Absolute continuity of $\xi^{\delta, x, \mu}$ follows i).} Its boundedness can be seen as follows.  
		 		For any $t\in[0,T]$, using Lemma \ref{lem:convergence-alpha} in the second and Lemma \ref{lem:about-h} in the third step, there exists a constant $c>0$ such that
		\begin{align*}
			 \abs{\xi^{\delta, x, \mu}_t} \le&~ \frac{1}{\eta_t}\left(A^{\delta}_t - \delta\kappa_t \right) e^{-\int_0^t\frac{A^{\delta}_r - \delta \kappa_r}{\eta_r}\,dr}\left\{	\abs{x}+\int_0^T \frac{1}{\eta_s}e^{\int_0^s\frac{A^{\delta}_r - \delta \kappa_r}{\eta_r}\,dr} \int_s^T \kappa_u\abs{\mu_u} e^{-\int_s^u\frac{A^{\delta}_r}{\eta_r}\,dr}\,du\,ds 	\right\}
			\\
			&+ \frac{1}{\eta_t}\int_0^T e^{ -\int_t^s\frac{A^{\delta}_r}{\eta_r}\,dr}\kappa_s\abs{\mu_s}\,ds \\
			\le&~ \left(A_0^\delta+\|\kappa\|_\infty\right)\left\|1/\eta\right\|_\infty \left\{ \abs{x} + \int_0^{T}h^{\delta}_t \kappa_t \abs{\mu_t}\,dt\right\}+ \left\|1/\eta\right\|_\infty\int_0^T\kappa_s\abs{\mu_s}\,ds \\
			\le&~ \left(A_0^\delta+\|\kappa\|_\infty\right)\left\|1/\eta\right\|_\infty  \left\{ \abs{x} +\Vert h^{\delta}\Vert_\infty\Vert\kappa\Vert_\infty \Vert \mu\Vert_{L^{1}([0,T])}\right\}+ \left\|1/\eta\right\|_\infty\Vert\kappa\Vert_\infty \Vert \mu\Vert_{L^{1}([0,T])}\\
			\le&~ c(1+\abs{x}).
		\end{align*}
		Furthermore, for any $t \in [0, \tau_\mu(x))$ where $\xi^{\delta, x, \mu}$ is differentiable we have from \eqref{eq:FBODE} that
		\begin{align*}
			\abs{\dot\xi^{\delta, x, \mu}_t} 
			&\le~ \left|{\frac{\dot\eta_t}{\eta_t}}\right| | \xi^{\delta, x, \mu}_t| + \frac{1}{\eta_t} \abs{\dot Y^{\delta, x, \mu}_t  -\delta\dot\kappa_t X_t^{*,x,\mu}-\delta\kappa_t\dot X_t^{*,x,\mu} }  \\
			&\le~  {\Vert\dot\eta\Vert_\infty}\|1/\eta\|_{\infty}   \Vert \xi^{\delta, x, \mu}\Vert_\infty 
			\\&~\quad+ \|1/\eta\|_\infty \Big(\Vert\kappa\Vert_\infty\Vert\mu\Vert_\infty  + (\Vert\lambda\Vert_\infty +\|\dot\kappa\|_\infty)\Vert X^{\delta, x, \mu}  \Vert_\infty   + \|\kappa\|_\infty \|\xi^{\delta,x,\mu}\|_\infty \Big).
		\end{align*}
		Since $\Vert X^{\delta, x, \mu} \Vert_\infty \le \abs{x} + T \Vert \xi^{\delta, x, \mu}\Vert_\infty \le (cT+1)\abs{x} + cT$, we conclude that  
		\begin{align*}
			\abs{\dot\xi^{\delta, x, \mu}_t} \le   C (1+\abs{x}),
		\end{align*}
		for some constant $C>c$. Finally, since $\xi^{\delta, x, \mu}$ is almost everywhere differentiable and constant on the interval $(\tau_{\mu}(x), T]$ the stated estimate follows.
	{	
		\item[iii)] 
		To prove that $\xi^{\delta, x, \mu}$ is admissible it remains to show that $\tau_\mu(x)$ is the first time $X^{\delta, x, \mu}$ hits zero. Since $\mu$ dos not change sign by assumption  we can distinguish the following three cases. 
		\begin{itemize}
			\item The case $x = 0$ is trivial, since $\tau_\mu(0) = 0$.	
			\item 
		If $x$ has a different sign from $\mu$, then $\tau_\mu(x)=T$ and it follows from \eqref{eq:explicit-X} that 
		$$
		|X^{\delta, x, \mu}_t| \ge |x|  e^{-\int_0^t \frac{A^{\delta}_r - \delta\kappa_r}{\eta_r}\,dr} > 0, \qquad t \in[0,T).
		$$
		This implies that $\tau_\mu(x)=T$ is the indeed first liquidation time.
		\item If $x$ have the same sign as $\mu$ and if $\tau_\mu(x)<T$, then it follows from \eqref{eq:explicit-X} and the definition of $\tau_\mu(x)$ that for $t\leq \tau_\mu(x)$
		\begin{equation*}
			\begin{split}
	X^{\delta, x, \mu}_t=&~ e^{-\int_0^t \frac{A^{\delta}_r - \delta\kappa_r}{\eta_r}\,dr}\left\{	x	-\int_0^t \frac{1}{\eta_s} e^{\int_0^s \frac{A^{\delta}_r- \delta\kappa_r}{\eta_r}\,dr }\int_s^{\tau_\mu(x)} \kappa_u \mu_u e^{-\int_s^u\frac{A^{\delta}_r}{\eta_r}\,dr}\,du\,ds	\right\} \\
	=&~ e^{-\int_0^t \frac{A^{\delta}_r - \delta\kappa_r}{\eta_r}\,dr}\left\{	x- f_\mu(\tau_\mu(x))+ \int_{t}^{\tau_\mu(x)}\frac{1}{\eta_s} e^{\int_0^s\frac{A^{\delta}_r-\delta\kappa_r}{\eta_r}\,dr} \int_s^{\tau_\mu(x)}\kappa_u \mu_u e^{-\int_s^u\frac{A^{\delta}_r}{\eta_r}\,dr}\,du\,ds	\right\}.
		\end{split}	
		\end{equation*}
Let us now assume to the contrary that $X^{\delta, x,\mu}_t = 0$ for some $t < \tau_\mu(x)$. 
Using that $$f_\mu(\tau_\mu(x)) = x$$ we see from the above representation and the sign assumption on $\mu$ that $\kappa_u \mu_u = 0$ on $[t, \tau_\mu(x)]$.
This implies that $$f_\mu(t) = f_\mu((\tau_\mu(x))) =x$$ which contradicts the definition of $\tau_\mu$. 

If $\tau_\mu(x) = T$, then by the definition of $\tau_\mu(x)$ it holds that  %
	\begin{equation*}
		\begin{split}
			|X^{\delta, x, \mu}_t|=&~ e^{-\int_0^t \frac{A^{\delta}_r - \delta\kappa_r}{\eta_r}\,dr}\left|	x	-\int_0^t \frac{1}{\eta_s} e^{\int_0^s \frac{A^{\delta}_r- \delta\kappa_r}{\eta_r}\,dr }\int_s^{\tau_\mu(x)} \kappa_u \mu_u e^{-\int_s^u\frac{A^{\delta}_r}{\eta_r}\,dr}\,du\,ds	\right| >0,\quad t<T.\\
		\end{split}	
\end{equation*}
Then $T$ is the first time at which $X^{\delta,x,\mu}$ hits zero.
\end{itemize}
} 
	\end{itemize}    
\end{proof}

For the MFG ($\delta = 0$) the system \eqref{eq:FBODE} reduces to the forward-backward system that results from an application of Pontryagin's maximum principle to the candidate Hamiltonian. The following proposition verifies that {\sl if the aggregate trading rate does not change sign} the admissible strategy $\xi^{0, x, \mu}$ solves the representative player's problem.

The verification result is key to our equilibrium analysis for at least two reasons. First, it motivates the fixed-point equation \eqref{mu*}. Second, we prove in Section 3 that the fixed-point equation reduces to a complex integral equation that seems difficult to solve. From the verification result we expect the solutions to the equation to not change sign, which motivates an important a priori estimate that substantially simplifies the analysis that follows.   
 
\begin{proposition}\label{prop:verification}
If $\mu\in L^{1}_-([0,T]) \cup L^{1}_+([0,T])$ then the strategy defined in \eqref{xi*} with $\delta = 0$ is the a.s.~unique optimal strategy that solves the representative player's optimization problem  \eqref{opt-rp}.
\end{proposition}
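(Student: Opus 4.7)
Let $\tilde\xi \in \mathcal A_x$ be an arbitrary admissible competitor with associated portfolio $\tilde X$ and absorption time $\tilde\tau$, and write $\xi^* := \xi^{0,x,\mu}$, $X^* := X^{0,x,\mu}$, $Y^* := Y^{0,x,\mu}$, and $\tau^* := \tau_\mu(x)$. The approach is the standard convex-duality verification that exploits the quadratic structure of $J$ together with the adjoint equation in \eqref{eq:FBODE}. For each of the two squared integrands in $J$ I use the identity $\tfrac{1}{2}a\tilde z^2 - \tfrac{1}{2}az^{*2} = az^*(\tilde z - z^*) + \tfrac{1}{2}a(\tilde z - z^*)^2$, applied to $(\eta,\tilde\xi,\xi^*)$ and to $(\lambda,\tilde X,X^*)$. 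Because $Y^* = \eta\xi^*$ by construction, the only place in which $\tilde\xi$ survives in the linearization is the term $\int_0^T Y^*_t(\tilde\xi_t - \xi^*_t)\,dt$.

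\textbf{Integration by parts and cancellation.} Writing $\tilde\xi - \xi^* = -\tfrac{d}{dt}(\tilde X - X^*)$, I integrate by parts on $[0,\tau^*]$ and on $[\tau^*,T]$ separately in order to accommodate the jump of $\dot Y^*$ across $\tau^*$. All boundary terms vanish: at $t = 0$ because $\tilde X_0 - X^*_0 = 0$, at $t = T$ because both portfolios satisfy the liquidation constraint, and at $t = \tau^*$ because $Y^*_{\tau^*} = 0$ by \eqref{eq:explicit-B} (while $Y^* \equiv 0$ on $[\tau^*,T]$ annihilates the contribution of the second subinterval). The co-state equation $-\dot Y^*_t = (\lambda_t X^*_t + \kappa_t\mu_t)\,1_{\{t \leq \tau^*\}}$ then cancels exactly the linear remainders produced by the convex expansion of the running cost; using $X^* \equiv 0$ on $[\tau^*,T]$ to discard the $\lambda X^*(\tilde X - X^*)$ piece on that interval, I arrive at
\[
J(\tilde\xi;\mu) - J(\xi^*;\mu) \;\geq\; \tfrac{1}{2}\int_0^T \eta_t(\tilde\xi_t - \xi^*_t)^2\,dt + \tfrac{1}{2}\int_0^T \lambda_t(\tilde X_t - X^*_t)^2\,dt + \int_{\tau^*}^T \kappa_t\mu_t\tilde X_t\,dt.
\]

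\textbf{Sign analysis and uniqueness.} The only nontrivial issue is nonnegativity of the last integral, and this is where the hypothesis $\mu \in L^{1}_-([0,T])\cup L^{1}_+([0,T])$ is genuinely needed and where I expect the main obstacle to lie. If $\tau^* = T$ the interval is empty. Otherwise, when $\mu \geq 0$, positivity of $\kappa$ (Assumption~\ref{ass:single-player}) and of $h^0$ (Lemma~\ref{lem:about-h}) make $f_\mu(t) = \int_0^t \kappa_u\mu_u h^0_u\,du$ nondecreasing, so $\underline f_\mu(T) = 0$ and the characterization \eqref{eqn:tau} forces $x > 0$ whenever $\tau^* < T$; admissibility of $\tilde\xi$ together with continuity of $\tilde X$ and $\tilde X_0 = x > 0$ rules out a sign change of $\tilde X$ on $[0,\tilde\tau)$ and forces $\tilde X \equiv 0$ on $[\tilde\tau,T]$, yielding $\kappa_t\mu_t\tilde X_t \geq 0$ on $[\tau^*,T]$. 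The case $\mu \leq 0$ is symmetric with $x < 0$. Hence $J(\tilde\xi;\mu) \geq J(\xi^*;\mu)$. Lebesgue-almost-everywhere uniqueness follows from strict positivity of $\eta$, which makes the quadratic residual $\tfrac{1}{2}\int_0^T \eta_t(\tilde\xi - \xi^*)^2\,dt$ strictly positive whenever $\tilde\xi \neq \xi^*$ on a set of positive measure.
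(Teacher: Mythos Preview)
Your proof is correct and follows essentially the same verification strategy as the paper: expand the convex running cost around the candidate, use the co-state equation via integration by parts to kill the linear remainder on $[0,\tau^*]$, and then invoke the sign hypothesis on $\mu$ together with $\sign(x)=\sign(\mu)$ (forced by $\tau^*<T$) to control the leftover on $[\tau^*,T]$. The only organizational difference is that the paper splits into the two cases $\tau\le\tau^*$ and $\tau>\tau^*$ and uses the first-order convexity inequality, whereas you work on $[0,T]$ in one pass, split the integration by parts at $\tau^*$, and retain the exact quadratic remainders $\tfrac12\int\eta(\tilde\xi-\xi^*)^2$ and $\tfrac12\int\lambda(\tilde X-X^*)^2$; this gives uniqueness for free without a separate remark. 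Two minor points: your displayed relation is actually an equality, not just an inequality, since the expansion is exact; and your citation for $Y^*_{\tau^*}=0$ should really be to the terminal condition $B^{0,x,\mu}_{\tau^*}=0$ in \eqref{eq:AB} together with $X^*_{\tau^*}=0$, rather than to \eqref{eq:explicit-B} alone.
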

\begin{proof}%
	Most of the arguments given below are a blend of arguments given in \cite{FGHP-2018} and \cite{PUR}. We present a detailed proof to keep the paper self-contained.  

Let us now assume that $\mu\in L^{1}_-([0,T]) \cup L^{1}_+([0,T])$ {and let $x\in\bR$. Further, let $\xi^{0, x, \mu}$ be the candidate strategy defined in \eqref{xi*}, which by Lemma~\ref{lem:a_priori_candiate}.iii) is admissible} and let $\xi { \in \mathcal{A}_x}$ be any other admissible strategy with corresponding state process $X$ and absorption time $\tau$. We need to distinguish two cases, depending on which strategy liquidates first.
		
		\begin{itemize}
			\item {\textbf{$\tau\leq \tau_{\mu}(x)$}}. Note that $\xi_t=X_t=0$ on $(\tau,  T]$ and thus we have
			\begin{equation*}
				\begin{split}
					& ~ J(\xi;\mu)-J(\xi^{0, x, \mu};\mu) \\
					=&~\int_0^{\tau_{\mu}(x)}\left(	\frac{1}{2}\eta_t\xi^2_t+\kappa_t\mu_t X_t+\frac{1}{2}\lambda_t X^2_t \right)\,dt \\
					& \quad - \int_0^{\tau_{\mu}(x)}\left(\frac{1}{2}\eta_t(\xi^{0, x, \mu}_t)^2+\kappa_t\mu_t X^{0, x, \mu}_t+\frac{1}{2}\lambda_t (X^{0, x, \mu}_t)^2			\right)\,dt\\
					\geq&~\int_0^{\tau_{\mu}(x)} \eta_t\xi^{0, x, \mu}_t(
					\xi_t-\xi^{0, x, \mu}_t) + (\kappa_t\mu_t+\lambda_t X^{0, x, \mu}_t)(X_t-X^{0, x, \mu}_t)\,dt.
				\end{split}
			\end{equation*}
			Since $Y^{0, x, \mu}$ is continuous, integration by parts along with equation \eqref{eq:FBODE} yields that
			\begin{equation}
				\label{YX}
				\begin{split}
					0 = & ~ Y^{0, x, \mu}_{\tau_{\mu}(x)}(X_{\tau_{\mu}(x)} -X^{0, x, \mu}_{
						\tau_{\mu}(x)}	) \\
					= & ~ -\int_0^{\tau_{\mu}(x)}Y^{0, x, \mu}_t(\xi_t-\xi^{0, x, \mu}_t)\,dt-\int_0^{\tau_{\mu}(x)}(X_t-X^{0, x, \mu}_t)(\kappa_t\mu_t+\lambda_t X^{0, x, \mu}_t)\,dt.
				\end{split}
			\end{equation}
			Plugging this into the difference of cost functionals we see that
			\[
			J(\xi;\mu)-J(\xi^{0, x, \mu};\mu) \ge \int_0^{\tau_{\mu}(x)} (\eta_t\xi^{0, x, \mu}_t-Y^{0, x, \mu}_t)(X_t-X^{0, x, \mu}_t)\,dt=0,
			\]
			where the last equality follows from the definition of $\xi^{0, x, \mu}$.
			
			\item $\tau > \tau_\mu(x)$. In this case $\tau_\mu(x) < T$ and since $\mu$ does not change sign we must have that $\sign(\mu_0) = \sign(x)$. Since $\tau$ is the absorption time of $X$ at zero this shows that 
			\[
			\mu_t X_t \geq 0 \quad \mbox{for all} \quad t \in [0,T].
			\]
			As a result, the same arguments as before show that 
			\begin{equation*}
				\begin{split}
					& ~ J(\xi;\mu) -J(\xi^{0, x, \mu};\mu) \\
					=&~\int_0^{\tau_{\mu}(x)}\left(	\frac{1}{2}\eta_t\xi^2_t+\kappa_t\mu_t X_t+\frac{1}{2}\lambda_t X^2_t\right)\,dt \\
					& \quad - \int_0^{\tau_{\mu}(x)}\left(	\frac{1}{2}\eta_t(\xi^{0, x, \mu}_t)^2+\kappa_t\mu_t X^{0, x, \mu}_t+\frac{1}{2}\lambda_t (X^{0, x, \mu}_t)^2			\right)\,dt\\
					& \quad \quad +	\int_{\tau_{\mu}(x)}^\tau\left(	\frac{1}{2}\eta_t\xi^2_t+\kappa_t\mu_t X_t+\frac{1}{2}\lambda_t X^2_t\right)\,dt	\\
					\geq &~0.
				\end{split}
			\end{equation*}
		\end{itemize}
		Since the above inequalities are strict if $\Vert \xi^{0, x, \mu} - \xi\Vert_{L^{2}([0,T])} \neq 0$, it follows that $\xi^{0, x, \mu}$ is the unique optimizer. From this we conclude that \eqref{opt-rp} admits a unique solution.
\end{proof}

\begin{remark}[Monotonicity of trading]
The explicit representation of the optimal trading strategy allows us to deduce an important implication for trading. If $\mu$ does not change sign, then a representative seller always sells if $\mu>0$ and a representative buyer always buys if $\mu<0$. At the same time, if $\mu > 0$, then for any buyer the {non-negative} process $B^{0, \mu, \tau_\mu(x)} = B^{0, \mu, T}$ is independent of $x$. For small enough negative initial position {and non-trivial}\footnote{ { For buyers, Remark \ref{rmk:sign} implies that $\tau_\mu(x)=T$. Then \eqref{eq:explicit-B} implies that $B_0^{\delta,x,\mu} =\int_0^Te^{-\int_t^s \frac{A^\delta_r}{\eta_r}\,dr  }\kappa_s\mu_s\,ds$. Thus, $B^{0,x,\mu}_0=0$ if and only if $\kappa\equiv 0$ a.e., which is a trivial case with only one player. In this trivial case, it is easy to see that trading is monotone even for buyers.} } {$\kappa$} this means that $$\xi^{0, x, \mu}_0=\frac{A^{0}_0x+B^{0,\mu,T}_0}{\eta} > 0.$$ Hence, if sellers dominate, then for a buyer with small enough initial position, it is beneficial to initially sell the asset to benefit from the positive price drift when buying it back. 
\end{remark}

\section{The equilibrium equation}\label{sec:MFG}

In what follows we denote by $\xi^{\delta, \mu}$ the vector of strategies introduced in \eqref{xi*} and introduce the mapping   
\[
	F: L^1([0,T]) \to \mathbb{R}^{[0,T]}, \qquad  \mu \mapsto \int_{\bR}\xi^{\delta, x, \mu}\nu_0(dx).  
\]

The verification result given in Section 2 suggests that any fixed-point of the mapping $F$ for $\delta =0$ that does not change sign yields an MFG equilibrium. A verification result for a generic aggregate trading rate $\mu$ cannot be expected in the $N$-player game. However, we {expect} a verification result to hold for equilibrium mean trading rates, that is, for fixed-points of the map $F$. This suggests that both games can be solved within a common mathematical framework by following four non-standard steps:\footnote{For MFGs the fourth step reduces to verifying that $\mu^*$ does not change sign.}
\begin{align}\label{equlibrium_framwork}
\renewcommand{\arraystretch}{1.5}
\left\{
	\begin{array}{l}
		1. \text{ Fix } \mu\in L^{1}([0,T]).\\
		2. \text{ Consider the candidate strategy profile $\xi^{\delta,\mu}$ obtained in \eqref{xi*}  for $\delta=0$, resp. $\delta=\frac 1 N$}. \\
		3. \text{ Find the fixed-points $\mu^*$ of the mapping  $\mu \mapsto F(\mu)$ in $L^{1}([0,T])$}.\\
		4. \text{ Verify that $\xi^{\delta,\mu^*}$ is a Nash equilibrium.} %
	\end{array}
\right.
\end{align}

In what follows,  we prove that for any fixed $\delta\in[0,1]$ there exists a fixed point 
\begin{align}\label{eq:consistency}
	\mu^*_t &=~  F(\mu^*)_t , \qquad t \in [0, T]
\end{align}
that does not change sign and thereby we address the third step of \eqref{equlibrium_framwork} for both the MFG and the $N$-player game.
To guarantee that the fixed-point mapping is always well defined we impose the following assumption on the initial distribution $\nu_0$ of the players' portfolios and the market impact parameters.\footnote{The assumption on the parameters is purely technical. It is satisfied if, for instance, the permanent impact parameter is constant.} 
\begin{ass}\label{ass:mean-field} The distribution of initial position $\nu_0$ has a finite first absolute moment and
\[
	~\lambda + \delta \dot\kappa \geq  0.
\]
\end{ass}

\subsection{A non-linear integral equation}

We begin our fixed-point analysis by deriving a more explicit form of the fixed-point map that will later allow us to rewrite the fixed point property in terms of a non-linear integral equation. {In what follows we denote by 
\begin{equation}\label{barf}
	\overline f_\mu(t):=\max_{0\leq s\leq t} f_\mu(s)\quad\textrm{and}\quad \underline f_\mu(t):=\min_{0\leq s\leq t}f_\mu(s)
\end{equation}
the running maximum and the running minimum of the function $f_\mu$, respectively. }

\begin{lemma}\label{lem:mu_differentiability}
For any $\mu\in L^{1}([0,T])$ it holds for all $t \in [0,T]$ that
\begin{align} \label{eq:consitency2}
	F(\mu)_t  ~= F(\mu)_T + \int_t^{T}\frac{1}{\eta_s}\int_{I_\mu(s)}\Big(\kappa_s\mu_s+(\lambda_s+\delta\dot\kappa_s) X^{\delta, x, \mu}_s + (\dot\eta_s-\delta\kappa_s)\xi^{\delta,x,\mu}_s\Big) \nu_0(dx)\,ds,
\end{align}
where $I_\mu(t) := (-\infty, \underline{f}_\mu(t)]\cup[\overline{f}_\mu(t), \infty)$. In particular,   $F$ maps $L^{1}([0,T])$ into the space of absolutely continuous functions on $[0,T]$.
\end{lemma}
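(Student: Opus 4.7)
The plan is to differentiate $\xi^{\delta,x,\mu}$ pointwise in $t$, integrate via the fundamental theorem of calculus, interchange the order of integration by Fubini, and finally identify the set of initial positions for which the strategy is still active at a given time $s$.

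First I would differentiate the relation $\xi^{\delta,x,\mu} = (Y^{\delta,x,\mu} - \delta\kappa X^{\delta,x,\mu})/\eta$ on the open interval $(0,\tau_\mu(x))$. Substituting $\dot Y = -(\lambda X + \kappa\mu)$ and $\dot X = -\xi$ from \eqref{eq:FBODE} into the product rule yields, after a short algebraic manipulation,
$$-\dot\xi^{\delta,x,\mu}_s = \frac{1}{\eta_s}\bigl(\kappa_s\mu_s + (\lambda_s+\delta\dot\kappa_s)\,X^{\delta,x,\mu}_s + (\dot\eta_s - \delta\kappa_s)\,\xi^{\delta,x,\mu}_s\bigr),$$
while on $(\tau_\mu(x),T]$ both $X^{\delta,x,\mu}$ and $Y^{\delta,x,\mu}$ vanish, hence $\dot\xi^{\delta,x,\mu}\equiv 0$ there. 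Since Lemma \ref{lem:a_priori_candiate} guarantees that $\xi^{\delta,x,\mu}$ is absolutely continuous on all of $[0,T]$, the fundamental theorem of calculus produces
$$\xi^{\delta,x,\mu}_t = \xi^{\delta,x,\mu}_T + \int_t^{T}\frac{1_{\{s\leq\tau_\mu(x)\}}}{\eta_s}\bigl(\kappa_s\mu_s + (\lambda_s+\delta\dot\kappa_s)\,X^{\delta,x,\mu}_s + (\dot\eta_s - \delta\kappa_s)\,\xi^{\delta,x,\mu}_s\bigr)\,ds.$$
Integrating against $\nu_0$ and exchanging the two integrals by Fubini -- justified by the a priori bound \eqref{eq:xi_apriori_estiate}, the linear growth $\|X^{\delta,x,\mu}\|_\infty \le C(1+|x|)$, and the finite first absolute moment of $\nu_0$ from Assumption \ref{ass:mean-field} -- reduces the claim to the identification $\{x\in\bR : s\le\tau_\mu(x)\} = I_\mu(s)$ for almost every $s\in[0,T]$.

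This last identification is the main technical point and requires unpacking the definition \eqref{eqn:tau}. Setting $y(x) := (x\wedge\overline f_\mu(T))\vee\underline f_\mu(T)$, the time $\tau_\mu(x)$ is the first time $f_\mu$ hits $y(x)$. Because $f_\mu$ is continuous with $f_\mu(0)=0$, its image on $[0,s]$ equals $[\underline f_\mu(s),\overline f_\mu(s)]$, so that $\tau_\mu(x) \ge s$ if and only if $y(x)\le\underline f_\mu(s)$ or $y(x)\ge\overline f_\mu(s)$. A short case analysis -- separating $x\in[\underline f_\mu(T),\overline f_\mu(T)]$ (where $y(x)=x$), $x>\overline f_\mu(T)$ (where $y(x)=\overline f_\mu(T)\ge\overline f_\mu(s)$), and $x<\underline f_\mu(T)$ -- together with the monotonicity $\overline f_\mu(s)\le\overline f_\mu(T)$ and $\underline f_\mu(s)\ge\underline f_\mu(T)$ shows that this is equivalent to $x\in I_\mu(s)$. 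Substituting back gives the displayed integral representation. Finally, the a priori bounds together with $\mu\in L^1$, the $L^\infty$-bounds on the coefficients, and Assumption \ref{ass:mean-field} imply that the inner $\nu_0$-integral is an $L^1([0,T])$ function of $s$, so $F(\mu)$ is absolutely continuous on $[0,T]$.
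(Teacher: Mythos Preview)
Your proof is correct and follows essentially the same approach as the paper: differentiate $\xi^{\delta,x,\mu}=(Y-\delta\kappa X)/\eta$ using the forward-backward system, apply the fundamental theorem of calculus via the absolute continuity from Lemma~\ref{lem:a_priori_candiate}, and swap integrals by Fubini using the linear-growth bound and the moment assumption on $\nu_0$. The one place where you go beyond the paper is the explicit identification $\{x:\,s\le\tau_\mu(x)\}=I_\mu(s)$ for a.e.\ $s$; the paper simply replaces the domain $\bR$ by $I_\mu(s)$ without comment, whereas you unpack \eqref{eqn:tau} and do the case analysis. Your argument there is sound (the boundary discrepancies involve at most the level sets $\{x=\overline f_\mu(s)\}$, $\{x=\underline f_\mu(s)\}$, which are Lebesgue-null in $s$ once integrated), and it is a genuine clarification of a step the paper leaves implicit.
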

\begin{proof}
From the definition of $\xi^{\delta, x, \mu}$ and $\tau_\mu(x)$ in Section 2 it holds that
\[
	\xi^{\delta,x,\mu}\big\vert_{(\tau_\mu(x), T]}\equiv0,\quad\textrm{ for all } x\textrm{ such that }\tau_\mu(x)<T.
\]
Furthermore, by Lemma \ref{lem:a_priori_candiate} it follows that $\xi^{\delta, x, \mu}$ is almost everywhere differentiable and the derivative has at most linear growth in $x$ uniformly in $t$.
Therefore, by Assumption~\ref{ass:mean-field} on $\nu_0$ we can apply Fubini's theorem to the right-hand side of equation \eqref{eq:consistency} to obtain by the definition of $\xi^{\delta,x,\mu}$ and $\tau_\mu(x)$
\begin{align*}
\int_{\bR}\xi^{\delta, x, \mu}_t\nu_0(dx) 
&=~ \int_{\bR}\xi^{\delta, x, \mu}_T\nu_0(dx)  - \int_t^{T}\int_{I_\mu(s)}\dot \xi^{\delta, x, \mu}_s\nu_0(dx)  \; ds \\
&=~F(\mu)_T - \int_t^{T}\int_{I_\mu(s)}\frac{d}{ds}\left(\frac{Y^{\delta, x, \mu}_s - \delta \kappa_s X_s^{\delta, x,\mu}}{\eta_s}\right)\nu_0(dx)  \; ds, \qquad t\in [0,T].
\end{align*}
The assertion now follows by using that $(X^{\delta, x, \mu}, Y^{\delta, x, \mu})$ solves the forward-backward equation~\eqref{eq:FBODE}.
\end{proof}

In what follows we denote the left and right tail distribution function of $\nu_0$ by
\begin{align*}
p_0:\bR\to[0,1],& \quad x\mapsto\nu_0((-\infty, x]), \\
q_0:\bR\to[0,1],& \quad x \mapsto\nu_0([x,\infty)).
\end{align*}
In other words, $q_0$ represents the tail distribution function for sellers and $p_0$ the tail distribution function for buyers. 
Furthermore, we introduce the mean position among all players
$$\E[\nu_0] := \int_{-\infty}^{\infty}x\nu_0(dx).$$

We now derive a representation of a solution $\mu$ to our fixed-point equation in terms of an integral equation.
Similar to the proof of Lemma~\ref{lem:mu_differentiability} and using that $X^{\delta, x, \mu}_T = 0$ we have that
\begin{align*}
\int_{I_\mu(t)}X^{\delta, x, \mu}_t \nu_0(dx) = \int_{I_\mu(t)} \int_{t}^{T} \xi^{\delta, x, \mu}_s \, ds\, \nu_0(dx) =  \int_{t}^{T} \int_{\bR} \xi^{\delta, x, \mu}_s \nu_0(dx)\, ds =\int_t^{T}F(\mu)_s ds.
\end{align*}

Using the definition of the tail probabilities $q_0$ and $p_0$ equation \eqref{eq:consitency2} can be rewritten as 
\begin{equation} \label{eq:consitency3}
\begin{split}
	F(\mu)_t  =&~ F(\mu)_T + \int_{t}^{T}\frac{\kappa_s}{\eta_s}\left(q_0\big( \overline{f}_\mu(s)\big) + p_0\big(\underline{f}_\mu(s) \big) \right)\mu_s\, ds
		+ \int_t^{T}\frac{\lambda_s + \delta\dot\kappa_s}{\eta_s}\left(\int_s^{T}F(\mu)_u du\right)ds \\
		&~+ \int_t^{T}\frac{\dot\eta_s - \delta \kappa_s}{\eta_s}F(\mu)_s\;ds, \qquad t\in[0,T].
\end{split}
\end{equation}

Recalling the definition of $\overline{f}_\mu$ and $\underline{f}_\mu$ in \eqref{barf}, we see that any fixed point $\mu = F(\mu)$ solves the non-linear higher order integral equation \eqref{eq:consitency3}. The following proposition shows that converse is also true. Any solution to this equation yields a fixed point $\mu = F(\mu)$. 

\begin{proposition}\label{prop:fix-point-integral} A process $\mu\in L^{1}([0,T])$ solves the fixed-point equation \eqref{eq:consistency} if and only if $\mu_T = F(\mu)_T$ and $\mu$ solves the equation
\begin{equation} \label{eq:mu-first-integral}
\begin{split}
	\mu_t  =&~ \mu_T +\int_{t}^{T}\frac{\kappa_s}{\eta_s}\left(q_0\big( \overline{f}_\mu(s)\big) + p_0\big(\underline{f}_\mu(s) \big) \right)\mu_s\, ds
		+\int_t^{T}\frac{\lambda_s + \delta\dot\kappa_s}{\eta_s}\left(\int_s^{T}\mu_u du\right)ds \\
		&~+ \int_t^{T}\frac{\dot\eta_s - \delta \kappa_s}{\eta_s}\mu_s\;ds, \qquad t\in[0,T].
\end{split}
\end{equation}
\end{proposition}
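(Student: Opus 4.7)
For the ``only if'' direction the argument is immediate: if $\mu = F(\mu)$, then Lemma~\ref{lem:mu_differentiability} applied to $\mu$ yields \eqref{eq:consitency3} with $F(\mu)$ replaced everywhere by $\mu$, which is precisely \eqref{eq:mu-first-integral}, and $\mu_T = F(\mu)_T$ is trivial. The substantive content of the proposition is the converse.

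For the converse, the plan is to form the difference $g := F(\mu) - \mu$, exploit the fact that the nonlinear $\mu$-dependence through $\overline f_\mu$ and $\underline f_\mu$ cancels, and apply a backward Gronwall argument to a resulting homogeneous linear Volterra equation. The key (and conceptual) observation is that $f_\mu(t) = \int_0^t \kappa_u \mu_u h^\delta_u\,du$, so the running maximum/minimum $\overline f_\mu, \underline f_\mu$ and hence the coefficient $q_0(\overline f_\mu(\cdot)) + p_0(\underline f_\mu(\cdot))$ in both \eqref{eq:consitency3} and \eqref{eq:mu-first-integral} are functionals of $\mu$ alone, independent of $F(\mu)$.

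Assume $\mu \in L^1([0,T])$ satisfies \eqref{eq:mu-first-integral} and $\mu_T = F(\mu)_T$. By Lemma~\ref{lem:mu_differentiability} the function $F(\mu)$ satisfies \eqref{eq:consitency3}, and subtracting \eqref{eq:mu-first-integral} from that equation the terminal term vanishes by assumption while, by the above observation, the first integral $\int_t^T \frac{\kappa_s}{\eta_s}(q_0(\overline f_\mu(s)) + p_0(\underline f_\mu(s)))\mu_s\,ds$ cancels identically on both sides. Thus $g$ solves the homogeneous linear Volterra equation
\begin{equation*}
g_t = \int_t^T \frac{\lambda_s + \delta \dot\kappa_s}{\eta_s}\left(\int_s^T g_u\,du\right) ds + \int_t^T \frac{\dot\eta_s - \delta\kappa_s}{\eta_s}\, g_s\,ds, \qquad t\in[0,T].
\end{equation*}
Lemma~\ref{lem:mu_differentiability} ensures that $F(\mu)$ is absolutely continuous, while \eqref{eq:mu-first-integral} endows $\mu$ with an absolutely continuous representative, so $g$ is bounded and continuous on $[0,T]$. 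Under Assumption~\ref{ass:single-player} the coefficients are bounded, and setting $G(t) := \sup_{s\in[t,T]}|g_s|$ one obtains $G(t) \le C\int_t^T G(s)\,ds$ for a constant $C$ depending only on $\eta, \lambda, \kappa, \delta$ and $T$. A backward Gronwall inequality then forces $G \equiv 0$, i.e.~$\mu = F(\mu)$.

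The principal (and essentially only) obstacle is the algebraic cancellation in the subtraction step; once one sees that the potentially nonlinear contributions through $\overline f_\mu$ and $\underline f_\mu$ enter identically on both sides, what remains is a textbook application of Gronwall to a linear Volterra equation with vanishing terminal data.
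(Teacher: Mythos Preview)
Your proof is correct and follows essentially the same approach as the paper: both subtract \eqref{eq:mu-first-integral} from \eqref{eq:consitency3}, observe that the nonlinear term $\int_t^T \frac{\kappa_s}{\eta_s}\big(q_0(\overline f_\mu(s))+p_0(\underline f_\mu(s))\big)\mu_s\,ds$ appears identically on both sides (since both the coefficient and the factor $\mu_s$ depend only on $\mu$, not on $F(\mu)$), and then apply a backward Gr\"onwall argument to the resulting homogeneous linear Volterra equation for the difference. The paper packages the linear part as an operator $G$ and writes $\mu - F(\mu) = G(\mu - F(\mu))$, whereas you work directly with $g = F(\mu)-\mu$ and take a supremum before invoking Gr\"onwall, but this is purely cosmetic.
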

\begin{proof} We have already proven that any solution to the fixed point equation \eqref{eq:consistency} satisfies the integral equation \eqref{eq:mu-first-integral}. 
For the converse direction, define the linear operator $G$ on $L^{1}([0,T])$ by
\begin{align*}
G(\mu)_t = \int_t^{T}\frac{\lambda_s + \delta\dot\kappa_s}{\eta_s}\left(\int_s^{T}\mu_u du\right)ds + \int_t^{T}\frac{\dot\eta_s - \delta \kappa_s}{\eta_s}\mu_s\;ds, \qquad t\in [0,T].
\end{align*}
Now let $\mu\in L^{1}([0,T])$ be a solution to \eqref{eq:mu-first-integral}.
Comparing the equations \eqref{eq:consitency3} and \eqref{eq:mu-first-integral} we then obtain
\begin{align*}
\mu - F(\mu)  &=~  G(\mu) - G(F(\mu))\\
&=~ G(\mu - F(\mu)).
\end{align*}
An application of Gr\"onwall's inequality \cite[Theorem 2.7]{Teschl-2016} shows that any solution to the linear integral equation $\varphi = G(\varphi)$ is necessarily trivial and hence $\mu = F(\mu)$.
\end{proof}

The integral equation \eqref{eq:mu-first-integral} is rather intricate as it depends on the running maximum and minimum of the function $f_\mu$ and hence depends in a non-linear way on the entire process $\mu$. Furthermore, the terminal condition $\mu_T = F(\mu)_T$ is endogenous and needs to be characterized more explicitly.

The following lemma provides an a priori estimate that allows us to substantially simplify the equation.

\begin{lemma}\label{lem:mu-sign}
Let $\mu \in L^{1}([0,T])$ be a solution to the equation \eqref{eq:mu-first-integral}. Then, 
\begin{equation}\label{eq:mu-sign}
e^{-K_1(T-t)}\frac{\eta_T}{\eta_t} \vert \mu_T\vert\le \abs{\mu_t}\leq \abs{\mu_T} e^{K_2(T-t)}, \qquad t \in [0,T],
\end{equation}
where $K_1:= \delta\Vert\kappa\Vert_\infty  \Vert 1 /\eta\Vert_\infty$ and $K_2  := \Vert 1 / \eta \Vert_\infty((1+\delta)\Vert \kappa \Vert_{\infty} + \delta T \Vert \dot\kappa \Vert_{\infty} + T\Vert \lambda \Vert_{\infty} + \Vert\dot{\eta}\Vert_\infty)$.
In particular,
$$\sign(\mu_t) = \sign(\mu_T), \qquad t\in[0,T].$$
\end{lemma}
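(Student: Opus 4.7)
The central idea is to recast \eqref{eq:mu-first-integral} as a linear first-order ODE for the pair $(\mu, \Psi)$ with $\Psi_t := \int_t^T \mu_u\,du$, solve it by variation of constants, and exploit nonnegativity of the forcing term. Differentiating \eqref{eq:mu-first-integral} yields
\[
\dot\mu_t = -(A_t + B_t)\mu_t - \tfrac{\lambda_t + \delta\dot\kappa_t}{\eta_t}\Psi_t,\qquad \dot\Psi_t = -\mu_t,
\]
with $A_t := \tfrac{\kappa_t}{\eta_t}\bigl(q_0(\overline f_\mu(t)) + p_0(\underline f_\mu(t))\bigr)\geq 0$ and $B_t := \tfrac{\dot\eta_t - \delta\kappa_t}{\eta_t}$. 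Since the tails $[\overline f_\mu(t),\infty)$ and $(-\infty,\underline f_\mu(t)]$ are disjoint, $q_0(\overline f_\mu(t)) + p_0(\underline f_\mu(t)) \leq 1$, so $\|A\|_\infty,\|B\|_\infty < \infty$. If $\mu_T = 0$, a short Gr\"onwall argument applied directly to \eqref{eq:mu-first-integral} forces $\mu\equiv 0$ and \eqref{eq:mu-sign} is trivial; hence WLOG $\mu_T > 0$ (the opposite sign is symmetric).

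The upper bound is the easier half. Taking absolute values in \eqref{eq:mu-first-integral} and swapping order of integration by Fubini,
\[
\int_t^T \tfrac{\lambda_s + \delta\dot\kappa_s}{\eta_s}\int_s^T |\mu_u|\,du\,ds = \int_t^T |\mu_u|\int_t^u \tfrac{\lambda_s + \delta\dot\kappa_s}{\eta_s}\,ds\,du \leq T\|1/\eta\|_\infty(\|\lambda\|_\infty + \delta\|\dot\kappa\|_\infty)\int_t^T|\mu_u|\,du,
\]
combined with $|A_t|+|B_t|\leq \|1/\eta\|_\infty((1+\delta)\|\kappa\|_\infty + \|\dot\eta\|_\infty)$, yields $|\mu_t| \leq |\mu_T| + K_2\int_t^T|\mu_s|\,ds$; backward Gr\"onwall then gives $|\mu_t| \leq |\mu_T|e^{K_2(T-t)}$.

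The sign statement and lower bound both rest on the variation-of-constants representation
\[
\mu_t = \mu_T\, e^{\int_t^T(A_s+B_s)\,ds} + \int_t^T e^{\int_t^s(A_r+B_r)\,dr}\,\tfrac{\lambda_s + \delta\dot\kappa_s}{\eta_s}\,\Psi_s\,ds.
\]
I bootstrap from $T$: suppose for contradiction $t^* := \sup\{t\in[0,T]:\mu_t \leq 0\}$ is well defined. Continuity and $\mu_T > 0$ give $t^* < T$ and $\mu_{t^*} \leq 0$; on $(t^*, T]$ we have $\mu > 0$, hence $\Psi > 0$ on $[t^*, T)$, and the integrand in the displayed representation is nonnegative by Assumption~\ref{ass:mean-field}. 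Evaluating at $t^*$ then gives $\mu_{t^*} \geq \mu_T\, e^{\int_{t^*}^T(A_s+B_s)\,ds} > 0$, a contradiction, so $\mu > 0$ on $[0,T]$. Once positivity is in hand the second term is nonnegative; combining with $A_s \geq 0$ and the explicit identity
\[
\int_t^T B_s\,ds = \log(\eta_T/\eta_t) - \delta\int_t^T \kappa_s/\eta_s\,ds \geq \log(\eta_T/\eta_t) - K_1(T-t),
\]
one obtains $\mu_t \geq \mu_T(\eta_T/\eta_t)e^{-K_1(T-t)}$, which is the lower bound in \eqref{eq:mu-sign}. The main subtlety is the sign bootstrap, which depends crucially on the compatibility condition $\lambda + \delta\dot\kappa \geq 0$ from Assumption~\ref{ass:mean-field}; without it the inhomogeneous term in the variation-of-constants formula could carry the opposite sign to $\mu$ and permit a zero crossing, so the argument would break down.
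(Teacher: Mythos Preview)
Your argument is correct and follows essentially the same strategy as the paper: the upper bound via Gr\"onwall is identical, and for the lower bound and sign you bootstrap backwards from $T$, exploiting that $A_t\ge0$ and $\lambda+\delta\dot\kappa\ge0$ (Assumption~\ref{ass:mean-field}) so that the dropped terms have the right sign. The one technical difference is that the paper works with the differential inequality $\dot\mu_t\le -\tfrac{\dot\eta_t-\delta\kappa_t}{\eta_t}\mu_t$ on $[t_0,T]$ and invokes its custom Lower Gr\"onwall Lemma~\ref{lem:lower_gronwall}, whereas you keep the $A_t\mu_t$ term in the linear part, write out the variation-of-constants formula explicitly, and only then drop the nonnegative inhomogeneous integral. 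Your route is self-contained (no auxiliary Gr\"onwall lemma needed) and in fact yields the slightly sharper intermediate bound $\mu_t\ge\mu_T\exp\bigl(\int_t^T(A_s+B_s)\,ds\bigr)$ before discarding $A\ge0$; the paper's route is marginally quicker but relies on the appendix lemma. Either way the conclusion and constants are the same.
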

\begin{proof}
It follows from the equation \eqref{eq:mu-first-integral} that for all $t\in [0,T]$,
\begin{align*}
		\abs{\mu_t} \leq&~ \abs{\mu_{T}} + \Vert 1 /\eta\Vert_\infty \Vert \kappa \Vert_{\infty}\int_t^T\abs{\mu_s}\,ds +  \Vert 1 /\eta\Vert_\infty(\Vert \lambda \Vert_{\infty} + \delta\Vert \dot\kappa \Vert_{\infty})\int_t^T\int_s^T \abs{\mu_r} \,dr\,ds \\
		&+  \Vert 1 /\eta\Vert_\infty (\Vert\dot{\eta}\Vert_\infty + \delta \Vert \kappa \Vert_\infty)\int_t^T\abs{\mu_s}\,ds\\
		\leq &~ \abs{\mu_{T}} +K_2\int_t^T \abs{\mu_s} \,ds.
\end{align*}
Hence, the upper estimate in \eqref{eq:mu-sign} follows from Gr\"onwall's inequality.
In particular $\mu \equiv 0$ for $\mu_T=0$. Let us thus suppose that $\mu_T > 0$ and put $$t_0 := \inf\{t\in[0,T)\;|\;\mu_{s} > 0, \; t \le s \le T \}.$$
By Assumption~\ref{ass:mean-field} it holds that $(\lambda + \delta\dot\kappa)$ is non-negative.\footnote{For the MFG this assumption is always satisfied.}
Since $\mu$ is almost everywhere differentiable we conclude from \eqref{eq:mu-first-integral} that
\begin{align*}
		\dot\mu_t ~\le~ - \frac{\dot\eta_t - \delta \kappa_t}{\eta_t}\mu_t, \qquad \text{ for a.e. } t\in [t_0, T].
\end{align*}
Hence,  it follows from the modified Gr\"onwall's inequality in Lemma~\ref{lem:lower_gronwall}
\begin{align*}
		\mu_t ~\ge~\mu_{T}\exp\left(\int_{t}^{T}\frac{\dot\eta_s - \delta \kappa_s}{\eta_s} \, ds\right)  
		~\ge~  \mu_T\frac{\eta_T }{\eta_t}e^{-K_1(T-t)} \quad t \in [t_0, T].
\end{align*} 
In particular, $\mu_{t_0} > 0$. From the definition of $t_0$ and the continuity of $\mu$ it thus follows that $t_0 = 0$. 
An analogous argument applies to the case $\mu_T < 0$.
\end{proof}

The fact that a fixed-point $\mu$ does not change sign implies that either $\overline{f}_\mu \equiv f_\mu$ and $\underline{f}_\mu \equiv 0$ or $\overline{f}_\mu \equiv 0$ and $\underline{f}_\mu \equiv f_\mu$, which considerably simplifies the equation \eqref{eq:mu-first-integral}. The following proposition shows that the sign of $\mu$ depends only on the sign of the average initial position $\E[\nu_0]$.

\begin{proposition}\label{lem:nu0-sign}
Let $\mu\in L^{1}([0,T]; \bR)$ be a solution to the fixed-point equation \eqref{eq:consistency}. Then, $$\sign(\mu_t) = \sign(\E[\nu_0]), \quad 0 \le t \le T.$$
In case $\E[\nu_0] > 0$, then $\mu$ is a solution to the fixed-point equation \eqref{eq:consistency} if and only if it satisfies the equation\footnote{Due to symmetry, if $\E[\nu_0] < 0$ we equivalently can solve for the fixed-point problem with reflected initial measure $\widetilde{\nu_0}((-\infty, x]) = \nu_0((-x, \infty])$ and $\E[\widetilde{\nu_0}] > 0$.}
\begin{equation} \label{eq:mu-second-integral}
\begin{split}
	\mu_t  =&~ \mu_T +\int_{t}^{T}\frac{\kappa_s}{\eta_s}\left(q_0\big( f_\mu(s)\big) + p_0(0) \right)\mu_s\, ds
		+\int_t^{T}\frac{\lambda_s + \delta\dot\kappa_s}{\eta_s}\left(\int_s^{T}\mu_u du\right)ds \\
		&~+ \int_t^{T}\frac{\dot\eta_s - \delta \kappa_s}{\eta_s}\mu_s\;ds, \qquad t\in[0,T],
\end{split}
\end{equation}
subject to the the terminal condition
\begin{align}\label{eq:mu_terminal_value}
\mu_T = \frac{\alpha_T^\delta}{\eta_T} \left(\E[\nu_0] -(1-q_0(0)) x_\mu  - \int_0^{x_\mu} q_0(x)dx\right),
\end{align}
where $x_\mu := f_\mu(T)$.
\end{proposition}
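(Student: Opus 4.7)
The plan combines the sign rigidity from Lemma~\ref{lem:mu-sign} with an explicit evaluation of $F(\mu)_T$ that produces the terminal condition~\eqref{eq:mu_terminal_value}. Let $\mu \in L^{1}([0,T])$ be a fixed point of~\eqref{eq:consistency}. By Lemma~\ref{lem:mu-sign} its sign is constant on $[0,T]$ and equals $\sign(\mu_T)$, so it suffices to identify $\sign(\mu_T)$ with $\sign(\E[\nu_0])$ and to simplify the integral equation accordingly. Invoking the symmetry noted in the footnote of the statement, I treat the case $\mu \geq 0$. Then $f_\mu(t) = \int_0^t \kappa_u\mu_u h_u^\delta\,du$ is non-decreasing and non-negative, so $\overline{f}_\mu \equiv f_\mu$, $\underline{f}_\mu \equiv 0$, and substituting these identities into~\eqref{eq:mu-first-integral} yields~\eqref{eq:mu-second-integral} directly.

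The key computation is $\mu_T = F(\mu)_T = \int_{\bR}\xi^{\delta,x,\mu}_T\,\nu_0(dx)$. Since $f_\mu$ is strictly increasing when $\mu > 0$, the definition~\eqref{eqn:tau} together with the convention $\inf\emptyset = T$ shows that $\tau_\mu(x) < T$ precisely for $x \in (0, x_\mu)$ where $x_\mu := f_\mu(T)$; on this interval $\xi^{\delta,x,\mu}_T = 0$. For buyers and large sellers one has $\tau_\mu(x) = T$ and the limit~\eqref{limY}, combined with $X^{\delta,x,\mu}_T = 0$, gives $\xi^{\delta,x,\mu}_T = (\alpha_T^\delta/\eta_T)(x - x_\mu)$. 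Hence
\[
\mu_T = \frac{\alpha_T^\delta}{\eta_T}\left(\int_{-\infty}^{0}(x - x_\mu)\,\nu_0(dx) + \int_{x_\mu}^{\infty}(x - x_\mu)\,\nu_0(dx)\right).
\]
An application of Fubini's theorem yields the identity $\int_{0}^{x_\mu} q_0(x)\,dx = \int_{[0, x_\mu]}y\,\nu_0(dy) + x_\mu q_0(x_\mu)$, which together with $\int_{\bR}x\,\nu_0(dx) = \E[\nu_0]$ and $\nu_0((-\infty, 0)) = 1 - q_0(0)$ allows me to rearrange the right-hand side into exactly~\eqref{eq:mu_terminal_value}. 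The main technical obstacle is this bookkeeping: the early-liquidation interval $(0, x_\mu)$ must be excised correctly, and the resulting moment integrals matched against the mixed tail formula in the statement.

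The sign statement now follows from~\eqref{eq:mu_terminal_value}: when $x_\mu \geq 0$, the subtracted terms $(1 - q_0(0))x_\mu$ and $\int_0^{x_\mu} q_0(x)\,dx$ are non-negative and vanish exactly when $x_\mu = 0$, so $\mu_T > 0$ forces $\E[\nu_0] > 0$ and $\mu_T = 0$ forces $\E[\nu_0] = 0$; the symmetric argument applied to $-\mu$ covers the case $\mu \leq 0$. Combined with Lemma~\ref{lem:mu-sign} this produces $\sign(\mu_t) = \sign(\E[\nu_0])$ for all $t$.

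For the converse, assume $\E[\nu_0] > 0$ and that $\mu \in L^{1}([0,T])$ satisfies~\eqref{eq:mu-second-integral} with terminal value given by~\eqref{eq:mu_terminal_value}. I would first show $\mu_T > 0$ by eliminating the alternatives: $\mu_T = 0$ forces $\mu \equiv 0$ via the Gr\"onwall estimate of Lemma~\ref{lem:mu-sign}, hence $x_\mu = 0$, and then~\eqref{eq:mu_terminal_value} contradicts $\E[\nu_0] > 0$; $\mu_T < 0$ forces $\mu < 0$ on $[0,T]$ by the symmetric Gr\"onwall argument, hence $x_\mu < 0$, and the terms $-(1 - q_0(0))x_\mu$ and $-\int_0^{x_\mu} q_0(x)\,dx$ become non-negative, forcing $\mu_T \geq (\alpha_T^\delta/\eta_T)\E[\nu_0] > 0$, again a contradiction. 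Therefore $\mu_T > 0$, and the Gr\"onwall argument upgrades this to $\mu > 0$ throughout $[0,T]$, so that $\overline{f}_\mu \equiv f_\mu$, $\underline{f}_\mu \equiv 0$, and~\eqref{eq:mu-second-integral} coincides with~\eqref{eq:mu-first-integral} while~\eqref{eq:mu_terminal_value} is precisely the identity $\mu_T = F(\mu)_T$ established above. Proposition~\ref{prop:fix-point-integral} then concludes that $\mu$ is a fixed point of~\eqref{eq:consistency}.
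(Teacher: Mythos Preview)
Your argument is correct and follows essentially the same route as the paper: both invoke Lemma~\ref{lem:mu-sign} for sign constancy, compute $\mu_T$ by evaluating $\xi^{\delta,x,\mu}_T$ via~\eqref{limY} on the set $\{\tau_\mu(x)=T\}$, and read off $\sign(\E[\nu_0])$ from the resulting terminal formula. Your treatment of the converse implication (that~\eqref{eq:mu-second-integral} together with~\eqref{eq:mu_terminal_value} forces $\mu_T>0$ and hence recovers~\eqref{eq:mu-first-integral}) is in fact more explicit than the paper's proof, which states the ``if and only if'' but only spells out the forward direction; your case analysis ruling out $\mu_T\le 0$ when $\E[\nu_0]>0$ fills that gap cleanly.
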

\begin{proof} By Proposition~\ref{prop:fix-point-integral} $\mu$ is a fixed-point of \eqref{eq:consistency} if and only if it solves the integral equation \eqref{eq:mu-first-integral} and satisfies $F(\mu)_T = \mu_T$.
According to Lemma~\ref{lem:mu-sign} the sign of $\mu$ is then determined by the sign of the terminal value $\mu_T$.
In order to analyze the terminal value $\mu_T$ we will take the limit for $t\nearrow T$ in the equation \eqref{eq:consistency}.
To this end, recall the limiting behavior of $Y^{\delta,x,\mu}_t$ from \eqref{limY} and note that
\begin{align*}
\lim_{t \nearrow T}\xi^{\delta,x, \mu}_t = \lim_{t \nearrow T}\frac{Y^{\delta, x, \mu}_t-\delta\kappa_t X^{\delta, x, \mu}_t}{\eta_t} = \frac{\alpha^{\delta}_T}{\eta_T}( x- x_\mu ) \mathbf{1}_{\{x\in I_\mu(T)\}},\qquad x\in \bR.
\end{align*}
Further, recalling the a priori estimate on $\xi^{\delta, x, \mu}$ from Lemma~\ref{lem:a_priori_candiate} we obtain from dominated convergence
\begin{equation*}
	\begin{split}
		\mu_T= F(\mu)_T =&~\lim_{t\nearrow T}\int_{\bR}\xi^{\delta, x, \mu}_t\nu_0(dx) \\
		=&~\int_{I_\mu(T)}\frac{\alpha^{\delta}_T( x- x_\mu )}{\eta_T}\nu_0(dx)\\
		=&~\frac{\alpha^{\delta}_T}{\eta_T} \left(\E[\nu_0] - x_\mu - \int_{\underline{f}_\mu(T)}^{\overline{f}_\mu(T)} (x-x_\mu)\nu_0(dx) \right).
	\end{split}
\end{equation*}
Now assume that $\mu_T > 0$.
Then it follows from Lemma \ref{lem:mu-sign} that $\mu_t > 0$ for all $t \in [0,T]$.
Hence $\overline{f}_\mu \equiv f_\mu$ and $\underline{f}_\mu \equiv 0$; in particular $x_\mu > 0$.
Therefore, the above expression for the terminal value simplifies to
\begin{align*}
	\mu_T &=~ \frac{\alpha^{\delta}_T}{\eta_T} \left(\E[\nu_0] - x_\mu - \int_{0}^{x_\mu} (x-x_\mu)\nu_0(dx) \right) \\
	&=~ \frac{\alpha^{\delta}_T}{\eta_T} \left(\E[\nu_0] -(1-q_0(0)) x_\mu  - \int_0^{x_\mu} q_0(x)dx\right).
\end{align*}
From this we conclude that $\E[\nu_0] > 0$ since otherwise all the terms in the bracket would be non-positive. Hence, $\mu$ satisfies the equation \eqref{eq:mu-second-integral}.
Analogously, we see that $\mu_T < 0$ implies $\bE[\nu_0] < 0$. Finally, if $\mu_T = 0$, then by Lemma~\ref{lem:mu-sign} it holds $\mu \equiv 0$ and therefore $x_\mu = \underline{f}_\mu(T) = \overline{f}_\mu(T) = 0$, hence $\E[\nu_0] = 0$.
\end{proof}

\subsection{Existence and uniqueness of solutions}
In this section we  prove that the integral equation \eqref{eq:mu-second-integral} with terminal condition \eqref{eq:mu_terminal_value} has a unique solution; by Proposition~\ref{lem:nu0-sign} this implies that the fixed-point problem \eqref{eq:consistency} has a unique solution. 

Apart from the non-linearity induced by the function $q_0$ the main difficulty in solving the equation is that it involves forward and backward integrals of $\mu$.
More precisely,  
\begin{align*}
f_\mu(t) = \int_0^{t}\kappa_s h^{\delta}_s \mu_s \, ds, \qquad t \in[0,T]
\end{align*}
is defined as a forward integral, while all other integrals in \eqref{eq:mu-second-integral} are backward integrals. Furthermore, the terminal value of $\mu$ is endogenous as it depends on $x_\mu = f_\mu(T)$. To overcome these problems we introduce the following backward integral equations 
\begin{equation} \label{eq:mu-third-integral}
\begin{split}
	\mu_t  =&~ \mu^{c}_T +\int_{t}^{T}\frac{\kappa_s}{\eta_s}\left(q_0\!\left(c - \int_s^{T}\kappa_r h^{\delta}_r \mu_r \, dr \right) + p_0(0) \right)\mu_s\, ds
		+\int_t^{T}\frac{\lambda_s + \delta\dot\kappa_s}{\eta_s}\left(\int_s^{T}\mu_u du\right)ds \\
		&~+ \int_t^{T}\frac{\dot\eta_s - \delta \kappa_s}{\eta_s}\mu_s\;ds, \qquad t\in[0,T],
\end{split}
\end{equation}
for any $c \in \mathbb{R}$ with exogenous terminal condition 
\begin{align*}
\mu^{c}_T := \frac{\alpha^\delta_T}{\eta_T} \left(\E[\nu_0] -(1-q_0(0)) c  - \int_0^{c} q_0(x)dx\right).
\end{align*}
A solution $\mu^{c}$ to the equation \eqref{eq:mu-third-integral} yields a solution to the original equation \eqref{eq:mu-second-integral} if the additional condition $f_{\mu^c}(T) = c$ is satisfied.
The following result shows that such a $c$ always exists and is unique.

\begin{theorem}\label{thm:mfg-ode-solution}
\begin{itemize}
	\item[i)] For each $c \in \mathbb{R}$ there exists a unique solution $\mu^{c}$ in $C^{0}([0,T])$ to the integral equation \eqref{eq:mu-third-integral}.
	\item[ii)] There exists a unique $\widehat{x} \in \mathbb{R}$ such that $f_{\mu^{\hat{x}}}(T) = \hat{x}$. Moreover, $\mu^{\hat x}$ is the unique solution to \eqref{eq:mu-second-integral} and it holds that $0 < \widehat{x} < \sup(\mathrm{supp}(\nu_0))$.\footnote{The quantity $\hat x$ can be understood as the largest initial position for which early liquidation is beneficial.}
\end{itemize}
\end{theorem}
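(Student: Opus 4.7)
The plan for part (i) is a Banach contraction argument on $C^0([0,T])$ equipped with the weighted sup-norm $\|\mu\|_\beta := \sup_{t\in[0,T]} e^{-\beta(T-t)}|\mu_t|$. Denote by $\Phi_c$ the mapping that sends $\mu$ to the right-hand side of \eqref{eq:mu-third-integral}. Since $q_0, p_0 \in [0,1]$ and, by Assumption \ref{ass:single-player}, the coefficients $\kappa/\eta$, $(\lambda+\delta\dot\kappa)/\eta$, $(\dot\eta-\delta\kappa)/\eta$ are bounded, $\Phi_c$ is well-defined and takes values in $C^0([0,T])$. The same computation as in Lemma \ref{lem:mu-sign} applied to \eqref{eq:mu-third-integral} yields an a priori bound $\|\mu\|_\infty \le C_T|\mu_T^c|$ on any fixed point, so the analysis can be restricted to a bounded ball. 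The nonlinear term is controlled via the splitting
\[
|q_0(A)\mu_s - q_0(\tilde A)\tilde\mu_s| \le |\mu_s-\tilde\mu_s| + |q_0(A)-q_0(\tilde A)|\cdot|\tilde\mu_s|,
\]
with $A - \tilde A = -\int_s^T \kappa_r h_r^\delta(\mu_r-\tilde\mu_r)\,dr$; combined with the Lipschitz (or at least locally bounded-variation) character of $q_0$, this produces a Lipschitz constant for $\Phi_c$ in $\|\cdot\|_\beta$ of order $1/\beta$. Choosing $\beta$ large enough makes $\Phi_c$ a strict contraction, yielding the unique solution $\mu^c$.

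For part (ii), introduce $\Psi:\bR\to\bR$ by $\Psi(c):=f_{\mu^c}(T)=\int_0^T\kappa_s h_s^\delta\mu^c_s\,ds$ and look for a zero of $c\mapsto c-\Psi(c)$ via the intermediate value theorem. Continuity of $c\mapsto\mu^c$ in $C^0([0,T])$ follows from stability of the fixed-point equation of part (i) under perturbations of $c$ (since $\mu_T^c$ depends continuously on $c$), which yields continuity of $\Psi$. At $c=0$ one has $\mu_T^0 = \frac{\alpha^\delta_T}{\eta_T}\bE[\nu_0] > 0$, so the sign lemma analog forces $\mu^0>0$ throughout $[0,T]$, giving $\Psi(0)>0$. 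Setting $c_1 := \sup(\mathrm{supp}(\nu_0))$ (the unbounded-support case is handled analogously by choosing $c_1$ sufficiently large), one has $\int_0^{c_1}q_0(x)\,dx = \int_0^\infty x\,\nu_0(dx)$, so the explicit formula collapses to $\mu_T^{c_1} = \frac{\alpha^\delta_T}{\eta_T}\left(\int_{-\infty}^0 x\,\nu_0(dx) - (1-q_0(0))c_1\right) \le 0$; hence $\mu^{c_1}\le 0$ and $\Psi(c_1)\le 0 < c_1$. The intermediate value theorem then produces $\hat x\in(0,c_1)$ with $\Psi(\hat x)=\hat x$, which is a solution of \eqref{eq:mu-second-integral}--\eqref{eq:mu_terminal_value} satisfying $0<\hat x<\sup(\mathrm{supp}(\nu_0))$.

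For uniqueness, the plan is to show that $c\mapsto c-\Psi(c)$ is strictly increasing on the positive regime. The explicit formula combined with monotonicity of $q_0$ shows that $c\mapsto\mu_T^c$ is strictly decreasing on $\{\mu_T^c>0\}$, and a comparison principle for \eqref{eq:mu-third-integral} --- exploiting that $q_0$ is non-increasing, so that a smaller $\mu$ only enlarges the coefficient multiplying $\mu$ --- propagates this ordering to $c_1<c_2 \Rightarrow \mu^{c_1}\ge\mu^{c_2}$ pointwise on $[0,T]$, and hence $\Psi(c_1)\ge\Psi(c_2)$. Combined with strict monotonicity of the identity this makes $c-\Psi(c)$ strictly increasing and forces $\hat x$ to be unique. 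The main obstacle I anticipate is precisely this comparison principle: the coupling between $c$ and $\mu$ inside $q_0(c-\int_s^T\kappa_r h_r^\delta\mu_r\,dr)$ has to be tracked carefully through the backward integral equation, most naturally via a monotone iteration built on top of the contraction of part (i). A secondary obstacle arises if $\nu_0$ has atoms and $q_0$ is only right-continuous; in that regime part (i) may be better handled by Schauder's fixed-point theorem for existence combined with a direct Gronwall-type uniqueness argument on the difference of two fixed points, restricted to the sign-constant regime from the analog of Lemma \ref{lem:mu-sign}.
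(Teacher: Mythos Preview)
Your part (i) has a real gap. The contraction estimate hinges on controlling $|q_0(A)-q_0(\tilde A)|$ in terms of $|A-\tilde A|$, but $q_0$ is only a right-continuous, non-increasing tail distribution function and need not be Lipschitz --- indeed, in the $N$-player game $\nu_0=\frac{1}{N}\sum_j\delta_{x_j}$ is purely atomic and $q_0$ is a step function. The bounded-variation observation does not help: $|q_0(A)-q_0(\tilde A)|$ can be of order one even when $|A-\tilde A|$ is arbitrarily small, so the second half of your splitting contributes a term of size $\|\tilde\mu\|_\infty$ that never picks up the $1/\beta$ factor, and $\Phi_c$ is not a contraction. Your Schauder fallback would give existence but faces exactly the same obstruction for uniqueness; there is no Gronwall inequality available on the difference of two fixed points without a modulus of continuity for $q_0$.

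The paper's device is an integration by parts: writing $Q_0(x)=\int_0^x q_0(y)\,dy$ and noting that $s\mapsto c-\int_s^T\kappa_r h_r^\delta\mu_r\,dr$ has derivative $\kappa_s h_s^\delta\mu_s$, the nonlinear piece $\int_t^T\frac{\kappa_s}{\eta_s}q_0(\cdots)\mu_s\,ds$ rewrites in terms of $Q_0$ and the factor $1/(h_s^\delta\eta_s)$. Since $Q_0$ is globally $1$-Lipschitz regardless of the regularity of $\nu_0$, the contraction (and also the continuity of $c\mapsto\mu^c$ needed in part (ii)) goes through. The price is the singularity of $1/h_s^\delta$ at $s=0$, which is handled by first solving on $[1/k,T]$ and passing to the limit.

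For part (ii) your plan is essentially the paper's: both show that $c\mapsto c-\Psi(c)$ is continuous, changes sign on $(0,\sup\mathrm{supp}(\nu_0))$, and is strictly increasing. Your comparison principle $c_1<c_2\Rightarrow \mu^{c_1}\ge\mu^{c_2}$ is the right target, but the argument you sketch is circular: you use the ordering $\mu^{c_1}\ge\mu^{c_2}$ to conclude that the argument of $q_0$ is smaller for $c_1$, which is precisely what is needed to propagate the ordering. The paper breaks the circularity by a bootstrapping: it introduces the last time $t_0$ at which either of the orderings $\mu^{c_2}>\mu^{c_1}$ or $f^{c_1}>f^{c_2}$ could fail, derives a differential inequality on $[t_0,T]$ that forces strict positivity there, and concludes $t_0=0$.
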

\begin{proof}
\begin{itemize}
	\item[i)] By Assumption~\ref{ass:single-player} the coefficients $\eta, \kappa, \lambda$ and $1/\eta$ are non-negative and bounded, and the derivatives $\dot\kappa$ and $\dot\eta$ are bounded.
	By Lemma~\ref{lem:about-h} the functions $h^{\delta}$ and $\dot h^{\delta}$ are bounded uniformly in $\delta\in[0,1]$ and $1/h^{\delta}$ is bounded on compact subsets of $(0,T]$ uniformly for $\delta\in[0,1]$.
	
	In what follows $K$ denotes a positive constant that may change from line to line, but only depends on the aforementioned bounds.
	 
	We will first consider the linear and non-linear parts of equation \eqref{eq:mu-third-integral} separately.
	Similarly to the proof of Proposition~\ref{prop:fix-point-integral}, we define the linear operator $G$ mapping $C^{0}([0,T])$ into the space of absolutely continuous functions on $[0,T]$ by
	\begin{align*}
	G(\mu)_t  =&~ \int_t^{T}\frac{\lambda_s + \delta\dot\kappa_s}{\eta_s}\left(\int_s^{T}\mu_u du\right)ds + \int_t^{T}\frac{\dot\eta_s + (p_0(0) - \delta )\kappa_s }{\eta_s}\mu_s\;ds, \qquad t\in[0,T].
\end{align*}
We readily see that the following estimate holds for all $\mu\in C^{0}([0,T])$:
\begin{align}\label{eq:G_estimate}
\vert G(\mu)_t \vert \le K\int_t^{T}\vert \mu_s \vert \, ds, \qquad t \in [0,T].
\end{align}
The remaining parts of \eqref{eq:mu-third-integral} are described by the non-linear operator $H_c$ mapping $C^{0}([0,T])$ into the space of absolutely continuous functions on $[0,T]$ defined by
\begin{align*}
	H_c(\mu)_t  =&~ \mu^{c}_T +\int_{t}^{T}\frac{\kappa_s}{\eta_s}q_0\!\left(c - \int_s^{T}\kappa_u h^{\delta}_u \mu_u \, du \right) \mu_s\, ds, \qquad t\in[0,T], \quad c\ge0.
\end{align*} 
Since $0 \le q_0 \le 1$ we immediately obtain the following estimate:
\begin{align} \label{eq:H_estimate}
	\vert H_c(\mu)_t \vert  \leq &~ \vert \mu^{c}_T| + K \int_{t}^{T} \vert\mu_s\vert\, ds, \qquad t\in[0,T], \quad c\ge0.
\end{align}

Since $H_c$ is non-linear, this estimate is not sufficient to prove the existence result.
Introducing the integrated tail probability $Q_0(x) := \int_0^{x}q_0(y)\,dy$, integration by parts yields the following representation for all $\mu\in C^{0}([0,T])$
\begin{align*}
H_c(\mu)_t =~&\mu^c_T+ Q_0(c)\frac{1}{h^\delta_T\eta_T}-Q_0\left(c-\int_t^{T}\kappa_s h^\delta_s \mu_s ds\right)\frac{1}{h^\delta_t \eta_t} \\
&+ \int_t^{T} Q_0\left(c-\int_s^{T}\kappa_u h^{\delta}_u \mu_u du \right) \frac{\dot h^{\delta}_s \eta_s + h^{\delta}_s \dot\eta_s}{(h^{\delta}_s)^{2} \eta_s^{2}} ds, \qquad t\in (0, T].
\end{align*}
Note that $Q_0$ is globally Lipschitz continuous with coefficient $\Vert Q_0 \Vert_{\mathrm{Lip}} = \Vert q_0 \Vert_\infty = 1$ and similarly also the mapping $c \mapsto \mu^{c}_T$ is globally Lipschitz continuous.

Since $1/h^{\delta}$ is only bounded on compact sets of $(0,T]$ we introduce the time $t_k := 1/k$ for some $k\in\mathbb{N}$.
For any $\mu, \widetilde{\mu} \in C^{0}([t_k,T])$ and $c, \widetilde{c} \ge 0$ we obtain the following estimate:
\begin{equation}\label{eq:H_estimate_fine}
\begin{split}
\vert H_c(\mu)_t - H_{\widetilde c}(\widetilde\mu)_t\vert \le&~ \vert \mu^{c}_T - \mu^{\widetilde c}_T\vert + \frac{1}{h^\delta_T\eta_T}\vert c - \widetilde c\vert + \frac{1}{h^{\delta}_t \eta_t}\left(\int_t^{T}\kappa_s h^{\delta}_s \vert \mu_s - \widetilde\mu_s\vert \,ds  + |c-\widetilde c| \right)\\
&~+ \int_t^{T} \left\vert\frac{\dot h^{\delta}_s \eta_s + h^{\delta}_s \dot\eta_s}{(h^{\delta}_s)^{2} \eta_s^{2}} \right\vert\left( \int_s^{T}\kappa_u h^{\delta}_u \vert \mu_u -  \widetilde\mu_u \vert \,du+  |c-\widetilde c| \right)ds \\
\le&~  K \vert c - \widetilde{c} \vert + K\int_t^{T} \vert \mu_s - \widetilde{\mu}_s\vert ds, \qquad t\in[t_k,T].
\end{split}
\end{equation}
Now let $c\ge 0$ be arbitrary but fixed. 
Iterating the estimates \eqref{eq:G_estimate} and \eqref{eq:H_estimate_fine}, we see that for $n\in\bN$  it holds that
\begin{align*}
\left\vert [H_c + G]^{n}(\mu)_t - [H_c + G]^{n}(\widetilde\mu)_t\right\vert 
&\le~ K\int_t^{T} \left\vert [H_c + G]^{n-1}(\mu)_t - [H_c + G]^{n-1}(\widetilde\mu)_t\right\vert \, ds \\
&\le~ \frac{K^{n}T^{n}}{n!}\Vert\mu - \widetilde{\mu}\Vert_\infty, \qquad t\in[t_k,T].
\end{align*}
From \cite[Theorem 2.4]{Teschl-2016} it then follows that $[H_c + G]$ has a unique fixed-point $\mu^{c,k}$ in $C^{0}({[t_k,T]})$. 

From the uniqueness it follows that the pointwise limit $\mu^{c}_t := \lim_{k \to \infty}\mu_{t}^{c,k}$ is well defined and satisfies $[H_c + G](\mu^{c})_t = \mu^{c}_t$ for all $t\in(0,T]$.
From the estimates \eqref{eq:G_estimate}, \eqref{eq:H_estimate} and  Gr\"onwall's inequality it then follows that
\begin{align}\label{eq:existence-mu-upper}
\vert\mu^{c}_t\vert  ~\le~ |\mu^{c}_T| + K\int_t^{T}\vert \mu^{c}_s \vert \,ds ~ \le ~|\mu^{c}_T|e^{K(T-t)}\qquad t\in(0,T].
\end{align}
Hence $\mu^{c}$ is bounded which allows us to extend $\mu^{c}$ to a continuous function on $[0,T]$ using dominated convergence in the representation $\mu^{c}_t =[H_c + G](\mu^{c})_t$.
By construction it follows that $\mu^{c}$ is the unique fixed-point of $[H_c + G]$ in $C^{0}([0,T])$, hence the unique solution to the equation \eqref{eq:mu-third-integral}.

\item[ii)] We are first going to prove the existence of $\hat{x}$ and afterwards we will prove that $\mu^{\hat{x}}$ is the unique solution to equation \eqref{eq:mu-second-integral}.
  For all $c\ge0$ we denote by $\mu^{c}$ the unique solution to \eqref{eq:mu-third-integral}. Our aim is to show that the map
\begin{align*}
\psi: [0, \infty) \to \bR,  \quad c \mapsto c - f_{\mu^{c}}(T) = c - \int_0^{T}h^{\delta}_s \kappa_s \mu^{c}_s \, ds,
\end{align*}
has a unique root.
Note that from the above estimates \eqref{eq:G_estimate} and \eqref{eq:H_estimate_fine} we obtain that for any $t_0 \in (0, T]$ it holds
\begin{align*}
\vert \mu^{c_1}_t - \mu^{c_2}_t\vert &=~ \Vert [H_{c_1} + G](\mu^{c_1}) - [H_{c_2} + G](\mu^{c_2})\Vert_\infty \\
&\le~ K\vert c_1 - c_2 \vert + K\int_t^{T}\vert  \mu^{c_1}_s - \mu^{c_2}_s\vert\, ds \\
&\le~ K\vert c_1 - c_2 \vert e^{K(T-t)}, \qquad t \in [t_0, T],
\end{align*}
where the last estimate follows from Gr\"onwall's inequality.
Since additionally the estimate \eqref{eq:existence-mu-upper} holds we can use dominated convergence to conclude that the map $\psi$ is continuous.

To prove the existence of a root it suffices to prove that $\psi$ changes sign.
To this end we note that the bounds of \eqref{eq:mu-sign} also hold for solutions to the equation \eqref{eq:mu-third-integral}; in particular, $\sign(\mu^{c}) = \sign(\mu^{c}_T)$ for all $c\ge0$. 
Indeed, an inspection of the proof of Lemma~\ref{eq:mu-sign} shows that only the estimate $0 \le q_0 \le 1$ is used to estimate the factor involving the $q_0$ term.
We now need to distinguish two cases. 

\begin{itemize}
	\item $q_0(0) < 1$. By the continuity and the monotonicity of the mapping $c \mapsto \mu^c_T$ there exists a unique $c_0 > 0$ such that $\mu^{c_0}_T = 0$ and  we see that the unique solution to the equation \eqref{eq:mu-third-integral} is trivial in this case, i.e., $\mu^{c_0} \equiv 0$.
	Therefore it holds $\psi(c_0) = c_0 > 0$.
	
	On the other hand, since $\mu^0_T > 0$ it follows that $\mu^{0} > 0$ and as a result,
	\[
		\psi(0) = -\int_0^T h^{\delta}_s \kappa_s \mu^0_s ds < 0. 
	\]  
	\item $q_0(0) = 1$.  In this case, there are only sellers in the market and
	\[
		\mu^c_T = \frac{\alpha^\delta_T}{\eta_T}\left( \mathbb E[\nu_0] - \int_0^c q_0(x) \,dx\right)  \to 0 \quad \mbox{ for } \quad c \to \infty;\footnote{If $\tilde c := \sup(\mathrm{supp}(\nu_0)) < \infty$, then this equation is to be understood as $\mu^{\tilde c}_T=0$. } 
	\]
	In particular, it follows again from \eqref{eq:mu-sign} that
	\[
		\Vert \mu^c\Vert_\infty  \to 0 \quad \mbox{ for } \quad c \to \infty. 
	\]
	Thus, by dominated convergence there exists $c_0 > 0$ such that
	\[
		\psi(c_0) = c_0 -\int_0^T h^{\delta}_s \kappa_s \mu^{c_0}_s ds > 0.
	\]
	At the same time we obtain analogously to the first case that $\psi(0)<0$.
\end{itemize}
Hence, in both cases $\psi$ changes its sign and the existence of at least one root $\hat{x}$ follows.
To prove the uniqueness of $\hat{x}$ we show that the mapping $\psi$ is strictly increasing. To this end, we fix $c_1 > c_2$ and define
\begin{align*}
f^{c_i}_t := c_i - \int_t^{T} h^{\delta}_s \kappa_s \mu^{c_i}_s \, ds, \qquad t\in[0,T], \quad i = 1,2.
\end{align*}
Note that by definition it holds $\mu^{c_1}_T < \mu^{c_2}_T$ and $f^{c_1}_T > f^{c_2}_T$.   Now let us denote by $t^\mu_0$ and $t_0^f$ the last times when $\mu^{c_2}-\mu^{c_1}$, respectively $f^{c_1}-f^{c_2}$ changes its sign, i.e.,
\begin{align*}
	t_0^\mu &= \inf\big\{t\in[0,T] \;\big\vert\; \mu^{c_2}_s-\mu^{c_1}_s>0, \; s\in[t,T]\big\},\\
	 t_0^f &= \inf\big\{t\in[0,T] \;\big\vert\; f^{c_1}_s-f^{c_2}_s>0, \; s\in[t,T]\big\}. 
\end{align*}
For any point $t\in[t_0^\mu\vee t^f_0,T]$ where $\mu^{c_1}$ and $\mu^{c_2}$ are differentiable it holds that
\begin{equation*}
	\begin{split}
			\dot\mu^{c_2}_t-\dot\mu^{c_1}_t\leq &~ \frac{\kappa_t}{\eta_t}\left\{q_0(f^{c_1}_t)\mu^{c_1}_t-q_0(f^{c_2}_t)\mu^{c_1}_t\right\}+\frac{\kappa_t}{\eta_t}\left\{q_0(f^{c_2}_t)\mu^{c_1}_t-q_0(f^{c_2}_t)\mu^{c_2}_t\right\}    -\frac{\dot\eta_t  - \delta \kappa_t}{\eta_t} (\mu^{c_2}_t-\mu^{c_1}_t)\\
			\leq&~-\frac{\dot\eta_t - \delta \kappa_t}{\eta_t} (\mu^{c_2}_t-\mu^{c_1}_t).
	\end{split}
\end{equation*} 
Since $\mu^{c_1}$ and $\mu^{c_2}$ are almost everywhere differentiable it follows by the version of Gr\"onwall's inequality in Lemma~\ref{lem:lower_gronwall} that
\begin{equation}\label{estimate:diff-mu}
	\mu^{c_2}_t-\mu^{c_1}_t\geq  (\mu^{c_2}_T-\mu^{c_1}_T)\exp\left( \int_t^{T}\frac{\dot\eta_s - \delta \kappa_s}{\eta_s}\, ds\right)>0, \qquad t\in[t_0^\mu\vee t^f_0,T].
\end{equation}
From this it also follows
\begin{equation}\label{estimate:diff-f}
			\dot f^{c_1}_t-\dot f^{c_2}_t=h_t\kappa_t(\mu^{c_1}_t-\mu^{c_2}_t)<0, \qquad t\in[t_0^\mu\vee t^f_0,T].
\end{equation}

If $t_0^\mu\leq t^f_0$,  then the inequality \eqref{estimate:diff-f} implies that $t\mapsto f^{c_1}_t-f^{c_2}_t$ is strictly decreasing on the interval $[t^f_0,T]$. 
By the definition of $t_0^f$ it then holds that $t_0^f=0$, which implies that $t_0^\mu=0$ and thus  $t\mapsto f^{c_1}_t-f^{c_2}_t$ is strictly decreasing on $[0,T]$.
In particular it holds that  $$\psi(c_1) - \psi(c_2) = f_0^{c_1}-f^{c_2}_0>f_T^{c_1}-f^{c_2}_T >0$$ and so the map $\psi$ is strictly increasing. 

If $t_0^\mu \geq t^f_0$, then the inequality \eqref{estimate:diff-mu} implies that $\mu^{c_2}_t-\mu^{c_1}_t>0$ on the interval $[t_0^\mu,T]$. 
Thus, by the definition of $t_0^\mu$ it must hold  $t_0^\mu=0$, which also implies that $t^f_0=0$. 
Using the inequality \eqref{estimate:diff-f} again, we conclude that $c\mapsto f^c_0$ is also strictly increasing in this case.

We are now proving the second part of the assertion in ii).
From the definition of $\hat{x}$ it follows that 
\begin{align*}
\hat{x} - \int_t^{T} h^{\delta}_s \kappa_s \mu^{\hat x}_s \, ds =  \int_0^{t} h^{\delta}_s \kappa_s \mu^{\hat x}_s \, ds = f_\mu^{\hat x}(t), \qquad t\in[0,T].
\end{align*}
Inserting this relation into \eqref{eq:mu-third-integral} for $\mu^{\hat x}$ we immediately see that $\mu^{\hat x}$ also solves  \eqref{eq:mu-second-integral}.
Conversely, if $\mu$ is a solution to \eqref{eq:mu-second-integral}, then defining $c:=f_\mu(T)$ we immediately see that $\mu$ also solves \eqref{eq:mu-third-integral} and by uniqueness it must holds $\mu = \mu^{c}$. 
Since then $f_{\mu^{c}}(T) = f_\mu(T) = c$ it follows from uniqueness of $\hat{x}$ that $\mu = \mu^{\hat x}$.
\end{itemize}
\end{proof}

\section{Equilibrium analysis}

For MFGs the verification argument given in Section~\ref{sec:single_player_verification} along with the solvability of the fixed-point problem readily yields the existence of a unique equilibrium for our liquidation game. 

\begin{theorem}\label{thm:mfg-final}
Under Assumption~\ref{ass:single-player} and Assumption \ref{ass:mean-field}, the mean-field liquidation game admits an equilibrium. 
{In case $\E[\nu_0]\neq 0$, the equilibrium is unique in the class of equilibria with continuous aggregate trading rates $\mu\in C^{0}([0,T])$ that satisfy $\mu_T \neq 0$. \footnote{{In case $\E[\nu_0] = 0$ it follows from the existence part of the proof that $\mu\equiv0$ yields an equilibrium, which is the most natural equilibrium in this case.
In general, we cannot rule out the existence of other equilibria that might change their sign infinitely often when approaching the terminal time, but a simple extension of the proof would give uniqueness among those equilibria that change sign only finitely many times.}}}
\end{theorem}
\begin{proof}
According to Theorem~\ref{thm:mfg-ode-solution} there exists a unique solution $\mu\in C^{0}([0,T])$ to the equation \eqref{eq:mu-second-integral}. By Proposition~\ref{prop:fix-point-integral} it follows that $\mu$ is the unique solution to the fixed-point problem \eqref{eq:consistency} and that $\sign(\mu) \equiv \sign(\E[\nu_0])$. Therefore, it follows from Proposition~\ref{prop:verification} that $\xi^{\delta, x, \mu}$ is the unique optimal liquidation strategy for the representative player with initial position $x \in \bR$. Since $\mu$ solves the fixed-point problem \eqref{eq:consistency}, we conclude that $(\xi^{\delta, x, \mu})_{x\in\bR}$ is an equilibrium of the MFG.

Now {assume that $\E[\nu_0] \neq 0$ and} let $(\widetilde{\xi}^{x})_{x\in \bR}$ be another equilibrium of the MFG, such that the aggregated trading rate $\widetilde{\mu}$ is continuous {and satisfies $\widetilde{\mu}_T \neq 0$}. 
Then there exists a minimal $t_0\in[0,T]$ such that $\widetilde{\mu}\vert_{[t_0, T]} \in L^1_-([0,T])\cup L^1_+([0,T])$.
{Note that} $\widetilde\mu|_{[t_0,T]}$ is a solution to the MFG restricted on $[t_0, T]$ with initial measure $\nu_{t_0} = Law((\widetilde{X}^{x}_{t_0})_{x\in\bR})$. 
By the choice of $t_0$, it thus follows from Proposition~\ref{prop:verification}  that $\widetilde{\xi}^{x}\vert_{[t_0, T]} = \xi^{\delta, \widetilde{\mu}, x}\vert_{[t_0, T]}$.  As a result, $\widetilde{\mu}\vert_{[t_0, T]}$ solves the fixed-point problem \eqref{eq:consistency} with initial measure $\nu_{t_0}$ and we eventually see that $\sign(\widetilde{\mu}_t) = \sign(\E[\nu_{t_0}])$ for all $t\in[t_0, T]$. From the minimality of $t_0$ it follows that $t_0 = 0$, and we conclude from the uniqueness of solutions to \eqref{eq:consistency} that $\widetilde{\mu} = \mu$.
\end{proof}

\subsection{Equilibria in the $N$-player game}\label{sec:N-player}

Establishing the existence of equilibria in the $N$-player liquidation game is more complex, due to the lack of a general verification result. To obtain a convex optimization problem in the $N$-player game we need to lightly strengthen our assumptions on the cost parameters.
\begin{ass}\label{ass:npg}
In addition to Assumption~\ref{ass:single-player}, we assume that the cost parameters satisfy $\eta - \delta\kappa >0$ and $\lambda - \delta\kappa \geq 0$, with $\delta=\frac{1}{N}$.
\end{ass}

In the $N$-player game, we denote the initial distribution of the players'  positions by $\nu_0=\frac{1}{N}\sum_{j=1}^N\delta_{x_j}$, which trivially satisfies Assumption \ref{ass:mean-field}. The following theorem establishes the existence of an equilibrium in the $N$-player game. While some estimates in the verification argument seem similar to the MFG, the situation is different. In the MFG we verified that the candidate strategy is a best response to any given mean trading rate $\mu$ that does not change sign. In our current setting this argument only applies in equilibrium, which complicates the uniqueness argument. 

In what follows we put
\[
	\xi^{\mu,i}:= \xi^{\frac 1 N, x_i, \mu}, \quad \xi^{\mu,-i}:= \big( \xi^{\frac 1 N, x_j, \mu} \big)_{j \neq i}, \quad 
	X^{\mu,i} = X^{\frac 1 N, x_i,\mu}, \quad Y^{\mu,i} = Y^{\frac 1 N, x_i,\mu}
\]

\begin{theorem}\label{thm:N-player-verification}
Let $\mu\in L^{1}([0,T])$ be the unique solution to \eqref{eq:consistency}.
Then, for every $i = 1, \dots, N$ the admissible control $\xi^{\mu,i}$ is the unique solution to the optimal control problem
\begin{align}\label{eq:problem-player-very}
\inf_{\xi \in \mathcal{A}_{x_i}}J(\xi;\xi^{\mu, -i}),  \qquad
X_t = x_i - \int_0^{t}\xi_s \, ds, \qquad t\in[0,T].
\end{align}
As a result, $(\xi^{\mu, 1}, \dots, \xi^{\mu, N})$ is a Nash equilibrium of the $N$-player game.
\end{theorem}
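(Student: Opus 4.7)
My plan is to fix $i\in\{1,\dots,N\}$, freeze the remaining players' strategies at $\xi^{\mu,-i}$, and treat player $i$'s problem as a classical convex deterministic control problem with exogenous input from $\xi^{\mu,-i}$. Using the fixed-point identity $\mu=\tfrac{1}{N}\sum_j\xi^{\mu,j}$, a deviation $\xi\in\mathcal{A}_{x_i}$ yields aggregate rate $\bar\xi^N=\mu+\tfrac{1}{N}(\xi-\xi^{\mu,i})$, so the frozen cost functional has integrand $\tfrac12\eta_t\xi_t^2+\tfrac{\kappa_t}{N}X_t\xi_t+\kappa_t\mu^{-i}_t X_t+\tfrac12\lambda_t X_t^2$, where $\mu^{-i}:=\mu-\xi^{\mu,i}/N$. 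Under Assumption~\ref{ass:npg} the Hessian $\bigl(\begin{smallmatrix}\eta&\kappa/N\\ \kappa/N&\lambda\end{smallmatrix}\bigr)$ is positive semi-definite, strictly so in the $\xi$-coordinate, so this integrand is convex in $(\xi,X)$ and strictly convex in $\xi$.

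Next I would run a Pontryagin-type verification mirroring Proposition~\ref{prop:verification}. The crucial observation is that at the candidate one has the algebraic identity $\tfrac{\kappa_t}{N}\xi^{\mu,i}_t+\kappa_t\mu^{-i}_t=\kappa_t\mu_t$, so the co-state $Y^{\mu,i}$ delivered by the modified system~\eqref{eq:FBODE} with $\delta=1/N$ is precisely the correct Pontryagin adjoint for player $i$'s frozen problem on $[0,\tau_\mu(x_i)]$, and the first-order condition $Y^{\mu,i}=\eta\xi^{\mu,i}+(\kappa/N)X^{\mu,i}$ is satisfied. Convexity then yields, pointwise on $[0,\tau_\mu(x_i)]$, the lower bound (deviation integrand) $-$ (candidate integrand) $\ge Y^{\mu,i}(\xi-\xi^{\mu,i})+\bigl(\kappa_t\mu_t+\lambda_t X^{\mu,i}_t\bigr)(X_t-X^{\mu,i}_t)$. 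Integrating and performing integration by parts on $Y^{\mu,i}(X-X^{\mu,i})$ exactly as in the proof of Proposition~\ref{prop:verification} eliminates the right-hand side, because $X_0-X^{\mu,i}_0=0$ and the boundary term at $\tau_\mu(x_i)$ vanishes: either $Y^{\mu,i}_{\tau_\mu(x_i)}=0$ when $\tau_\mu(x_i)<T$ (since $B^\delta_{\tau_\mu(x_i)}=0$ and $X^{\mu,i}_{\tau_\mu(x_i)}=0$), or $X_T=X^{\mu,i}_T=0$ when $\tau_\mu(x_i)=T$.

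I expect the main obstacle to be controlling the residual $R$ that appears when the deviation liquidates strictly later than the candidate, i.e.~$\tau>\tau_\mu(x_i)$. Such a $\tau$ forces $\tau_\mu(x_i)<T$, and then the characterization~\eqref{eqn:tau} together with Proposition~\ref{lem:nu0-sign} forces $\sign(x_i)=\sign(\mu)$. On the leftover interval $[\tau_\mu(x_i),\tau]$ the candidate is inactive, so $\mu^{-i}=\mu$; the absorption property applied to the deviation gives $\sign(X_t)=\sign(x_i)=\sign(\mu_t)$ on $[0,\tau]$; and the integrand $\tfrac12\eta\xi^2+\tfrac{\kappa}{N}X\xi+\kappa X\mu+\tfrac12\lambda X^2$ is the sum of the same non-negative quadratic form and the non-negative term $\kappa X\mu\ge 0$, whence $R\ge 0$. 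This \emph{equilibrium-only} sign book-keeping, directly analogous to the case split in the MFG verification, is exactly where the argument departs from a generic verification statement and relies on having $\mu$ given by the fixed-point equation rather than being arbitrary.

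Finally, strict convexity of the integrand in $\xi$ upgrades $J(\xi;\xi^{\mu,-i})\ge J(\xi^{\mu,i};\xi^{\mu,-i})$ to a strict inequality unless $\xi=\xi^{\mu,i}$ almost everywhere, giving uniqueness of the best response. Since this argument applies to every index $i$ with the same $\mu$, the profile $(\xi^{\mu,1},\dots,\xi^{\mu,N})$ is the Nash equilibrium asserted by the theorem.
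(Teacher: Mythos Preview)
Your proposal is correct and follows essentially the same route as the paper's own proof. Both arguments rest on the same three ingredients: (i) the fixed-point identity $\mu=\tfrac{1}{N}\sum_j\xi^{\mu,j}$, which makes the co-state $Y^{\mu,i}$ from \eqref{eq:FBODE} the correct Pontryagin adjoint for player $i$'s frozen problem; (ii) convexity of the integrand in $(\xi,X)$ under Assumption~\ref{ass:npg} followed by integration by parts against $Y^{\mu,i}$; and (iii) in the case $\tau>\tau_\mu(x_i)$, the equilibrium-specific observations that $\xi^{\mu,i}\equiv 0$ on $[\tau_\mu(x_i),T]$ forces $\mu^{-i}=\mu$ there, and that $\tau_\mu(x_i)<T$ forces $\sign(x_i)=\sign(\mu)$, yielding $\kappa\mu X\ge 0$. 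The only cosmetic difference is that the paper verifies convexity by the explicit decomposition $\tfrac12(\eta-\tfrac{\kappa}{N})\xi^2+\tfrac{\kappa}{2N}(\xi+X)^2+\tfrac12(\lambda-\tfrac{\kappa}{N})X^2$, whereas you invoke positive semi-definiteness of the Hessian; these are equivalent under Assumption~\ref{ass:npg}.
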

\begin{proof}
{By Lemma~\ref{lem:nu0-sign} it follows that $\mu \in L^{1}_-([0,T])\cap L^{1}_+([0,T])$ and therefore by Lemma~\ref{lem:a_priori_candiate} it follows that $\xi^{\mu, i} \in \mathcal{A}_{x_i}$ for all $i=1, \dots, N$.}
Now for each $i=1, \dots, N$ let $\xi \in \mathcal{A}_{x_i}$ be an admissible control with corresponding state process $X$ and absorption time $\tau$. Let $\tau^i:=\tau_\mu(x^i)$. We distinguish again two cases. 

\begin{enumerate}[label=(\roman*)]
\item $\tau \le \tau^{i}$. %
Using the fact that $\mu$ satisfies the equation \eqref{eq:consistency} shows that
\begin{align*}
		J(\xi; \xi^{\mu,-i})&-J(\xi^{\mu, i}; \xi^{\mu,-i}) \\
		=&~\int_0^{\tau}\left(	\frac{1}{2}\eta_t\xi^2_t+\kappa_t\left( \frac{1}{N}\xi_t + \frac{1}{N}\sum_{j\neq i}\xi^{\mu, j}_t \right)X_t+\frac{1}{2}\lambda_t X^2_t			\right)\,dt \\
		&\quad- \int_0^{\tau^{i}}\left(	\frac{1}{2}\eta_t(\xi^{\mu, i}_t)^2+\kappa_t\left( \frac{1}{N}\xi^{\mu, i}_t + \frac{1}{N}\sum_{j\neq i}\xi^{\mu, j}_t \right)X^{\mu, i}_t+\frac{1}{2}\lambda_t (X^{\mu, i}_t)^2			\right)\,dt\\
		= &~\int_0^{\tau^{i}}\left(	\frac{1}{2}\left(\eta_t-\frac{\kappa_t}{N}\right)\xi^2_t+\frac{1}{2}\frac{\kappa_t}{N} \left(\xi_t + X_t\right)^{2}+\frac{1}{2}\left(\lambda_t-\frac{\kappa_t}{N}\right) X^2_t \right)\,dt \\
		&\quad- \int_0^{\tau^{i}}\left(	\frac{1}{2}\left(\eta_t-\frac{\kappa_t}{N}\right)(\xi^{\mu, i}_t)^2+\frac{1}{2}\frac{\kappa_t}{N} \left(\xi^{\mu, i}_t + X^{\mu, i}_t\right)^{2}+\frac{1}{2}\left(\lambda_t-\frac{\kappa_t}{N}\right)(X^{\mu, i}_t)^2	\right)\,dt\\	
		&\quad + \int_0^{\tau^{i}}\kappa_t\left(\frac{1}{N}\sum_{j\neq i}\xi^{\mu, j}_t\right)(X_t - X^{\mu,i}_t) dt\\
			\ge&~ \int_0^{\tau^{i}}\left(	\left(\eta_t-\frac{\kappa_t}{N}\right)\xi^{\mu, i}_t(\xi_t - \xi^{\mu, i}_t)+\frac{\kappa_t}{N} (\xi^{\mu, i}_t + X^{\mu, i}_t)\left(\xi_t - \xi^{\mu, i}_t + X_t -  X^{\mu, i}_t \right)\right)\,dt\\
		&~+ \int_0^{\tau^{i}}\left(\lambda_t-\frac{\kappa_t}{N}\right)X^{\mu, i}_t(X_t - X^{\mu, i}_t)\,dt
		+\int_0^{\tau^{i}}\kappa_t\left(\frac{1}{N}\sum_{j\neq i}\xi^{\mu, j}_t\right)(X_t - X^{\mu,i}_t) dt\\
		\geq &~ \int_0^{\tau^{i}}\left(	\left(\eta_t\xi^{\mu, i}_t + \frac{\kappa_t}{N}X^{\mu, i}_t\right)(\xi_t - \xi^{\mu, i}_t) +\left(\lambda_tX^{\mu, i}_t +\kappa_t\mu_t \right)(X_t - X^{\mu, i}_t) \right)\,dt.
\end{align*}
Since $\Vert Y^{\mu, i}\Vert_\infty < \infty$ and $X^{\mu, i}_{\tau^{i}} = X_{\tau^{i}} = 0$,
the same integration by parts argument given in the proof of Proposition \ref{prop:verification} yields that
\begin{align}\label{eq:N-very-integration-by-parts}
		0=Y^{\mu, i}_{\tau^{i}}(X_{\tau^{i}} -X^{\mu, i}_{\tau^{i}}	)
		=-\int_0^{\tau^{i}}Y^{\mu, i}_t(\xi_t-\xi^{\mu, i}_t)\,dt-\int_0^{\tau^{i}}(\kappa_t\mu_t+\lambda_t X^{\mu, i}_t)(X_t-X^{\mu, i}_t)\,dt.
\end{align}
Using that $(X^{\mu, i}, Y^{\mu,i})$ solves the equation \eqref{eq:FBODE} with $\delta=\frac{1}{N}$ this shows that %
\begin{equation*}
		J(\xi; \xi^{\mu,-i})-J(\xi^{\mu, i}; \xi^{\mu,-i}) \ge \int_0^{\tau^{i}}	\left(\eta_t\xi^{\mu, i}_t + \frac{\kappa_t}{N}X^{\mu, i}_t - Y^{\mu, i}_t\right)(\xi_t - \xi^{\mu, i}_t)\,dt = 0.
\end{equation*}

\item $\tau > \tau^{i}$. In this case, $\tau^{i} < T$ and $\xi^{\mu, i}_t = Y^{\mu,i}_t = 0$ for all $t\in [\tau, T]$. Since $\tau^i = T$ if $x^i$ and $\mu_T$ have different signs, $\tau > \tau^{i}$ holds only if $x^i$ and $ \mu_T $ have the same sign in which case 
\[
	\mu_t X_t \geq 0 \quad \mbox{ for all } t \in [0,T].
\]  
Moreover, the integration by parts argument still applies since $Y^{\mu, i}_{\tau^{i}} = 0$. Hence, 
\begin{align*}
J(\xi; \xi^{\mu,-i})-J(\xi^{\mu, i}; \xi^{\mu,-i})\ge&~\int_{\tau^{i}}^{\tau}\left(	\frac{1}{2}\left(\eta_t-\frac{\kappa_t}{N}\right)\xi^2_t+\frac{1}{2}\frac{\kappa_t}{N} \left(\xi_t + X_t\right)^{2}+\frac{1}{2}\left(\lambda_t-\frac{\kappa_t}{N}\right) X^2_t \right)\,dt \\		
		&\quad + \int_{\tau^{i}}^{\tau}\kappa_t\left(\frac{1}{N}\sum_{j\neq i}\xi^{\mu, j}_t\right)X_t  dt\\
		\ge&~ \int_{\tau^{i}}^{\tau}\kappa_t\left(\frac{1}{N}\sum_{j\neq i}\xi^{\mu, j}_t\right)X_t  dt. 
\end{align*}

The key observation is now that $\xi^{\mu, i}_t = 0$ for $t \in [\tau^{i}, T]$ yields 
\begin{align*}
\frac{1}{N}\sum_{j\neq i}\xi^{\mu, j}_t = \frac{1}{N}\sum_{j=1}^{N}\xi^{\mu, j}_t = \mu_t, \qquad t \in [\tau^{i}, T].
\end{align*}
Hence, 
\begin{align*}
J(\xi; \xi^{\mu,-i})-J(\xi^{\mu, i}; \xi^{\mu,-i})\ge&~ \int_{\tau^i}^{\tau}\mu_tX_t  dt \ge 0. 
\end{align*}
\end{enumerate}
\end{proof} 

We are now going to show that there exists a unique Nash equilibrium with continuous aggregate trading rate. To do so, we first establish a uniqueness of equilibrium result when the aggregate trading rate does not change sign.
 
\begin{theorem}\label{thm:N-sign-uniqueness}
	\begin{enumerate}
		\item
Let $(\xi^{\ast, 1}, \dots, \xi^{\ast, N}) \in \mathcal{A}_{x_1} \times \dots, \mathcal{A}_{x_N}$ be a Nash equilibrium of the $N$-player game, such that the average trading rate  does not change sign:
$$\mu := \frac{1}{N}\sum_{i=1}^{N} \xi^{\ast, i} \in L^1_-([0,T])\cup L^1_+([0,T]).$$
Then $\xi^{\ast, i} = \xi^{\mu, i}$ for all $i=1, \dots, N$ and  $\mu$ is the unique solution to the fixed-point equation \eqref{eq:consistency}.
\item{In case $\E[\nu_0]\neq 0$, the equilibrium is unique in the class of equilibria with continuous aggregate trading rates $\mu\in C^{0}([0,T])$ that satisfy $\mu_T \neq 0$}.
\end{enumerate}
\end{theorem}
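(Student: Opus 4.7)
I would address Part 1 first and then bootstrap from it to Part 2.

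\emph{Part 1.} The starting observation is that Assumption~\ref{ass:npg} makes each player's cost strictly convex in $\xi^{i}$ along the state equation: the identity
\[
\tfrac{1}{2}\eta_{t}\xi^{2}+\delta\kappa_{t}\xi X+\tfrac{1}{2}\lambda_{t}X^{2}=\tfrac{1}{2}(\eta_{t}-\delta\kappa_{t})\xi^{2}+\tfrac{1}{2}\delta\kappa_{t}(\xi+X)^{2}+\tfrac{1}{2}(\lambda_{t}-\delta\kappa_{t})X^{2}
\]
displays a positive-definite quadratic form with $\delta=1/N$. Given the Nash equilibrium $(\xi^{\ast, i})_{i=1}^{N}$ with aggregate $\mu$ of constant sign, let $\tau^{\ast,i}$ denote the absorption time of $X^{\ast,i}$ and set $\nu^{-i}:=\delta\sum_{j\ne i}\xi^{\ast,j}=\mu-\delta\xi^{\ast,i}$. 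I would derive a Pontryagin-type necessary condition for each player by (i) considering perturbations $v$ supported on $[0,\tau^{\ast,i}]$ with $\int_{0}^{\tau^{\ast,i}}v_{s}\,ds=0$ (so that both the liquidation constraint and the absorption time are preserved), and (ii) perturbing $\tau^{\ast,i}$ itself. Strict convexity reduces optimality to these linearized conditions and, setting $Y^{\ast,i}_{t}:=\eta_{t}\xi^{\ast,i}_{t}+\delta\kappa_{t}X^{\ast,i}_{t}$, yields
\[
-\dot Y^{\ast,i}_{t}=\lambda_{t}X^{\ast,i}_{t}+\kappa_{t}\bigl(\delta\xi^{\ast,i}_{t}+\nu^{-i}_{t}\bigr)=\lambda_{t}X^{\ast,i}_{t}+\kappa_{t}\mu_{t},
\]
together with $Y^{\ast,i}_{\tau^{\ast,i}}=0$ when $\tau^{\ast,i}<T$ (transversality of the free endpoint) and the matching value \eqref{limY} when $\tau^{\ast,i}=T$. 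In other words $(X^{\ast,i},Y^{\ast,i},\xi^{\ast,i})$ solves the FBODE~\eqref{eq:FBODE} with $\delta=1/N$ and exogenous drift $\mu$.

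Because $\mu$ does not change sign, Lemma~\ref{lem:about-h} implies that $f_{\mu}(t)=\int_{0}^{t}\kappa_{s}h^{\delta}_{s}\mu_{s}\,ds$ is strictly monotone. The liquidation identity~\eqref{eq:=x} therefore pins down $\tau^{\ast,i}=\tau_{\mu}(x_{i})$ uniquely, and Lemma~\ref{lem:a_priori_candiate} then gives $\xi^{\ast,i}=\xi^{\mu,i}$ for every $i$. Averaging,
\[
\mu=\frac{1}{N}\sum_{i=1}^{N}\xi^{\ast,i}=\frac{1}{N}\sum_{i=1}^{N}\xi^{\mu,i}=F(\mu),
\]
so $\mu$ solves the fixed-point equation \eqref{eq:consistency}. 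Uniqueness of $\mu$ then follows from Theorem~\ref{thm:mfg-ode-solution}.

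\emph{Part 2.} Existence is immediate: Theorem~\ref{thm:mfg-ode-solution} produces a unique continuous fixed-point $\mu^{\ast}$ for $\delta=1/N$, and Theorem~\ref{thm:N-player-verification} lifts it to a Nash equilibrium $(\xi^{\mu^{\ast},i})$ whose aggregate rate is continuous. For uniqueness I would mimic the minimality argument used in the proof of Theorem~\ref{thm:mfg-final}. For any Nash equilibrium $(\widetilde\xi^{i})$ with continuous aggregate $\widetilde\mu$, set
\[
t_{0}:=\inf\bigl\{t\in[0,T]\;:\;\widetilde\mu\vert_{[t,T]}\in L^{1}_{+}([t,T])\cup L^{1}_{-}([t,T])\bigr\}.
\]
On $[t_{0},T]$ the restricted profile is a Nash equilibrium of the sub-game with initial law $\operatorname{Law}((\widetilde X^{i}_{t_{0}})_{i=1}^{N})$, and its aggregate is sign-preserving. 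Part~1 then shows that $\widetilde\mu\vert_{[t_{0},T]}$ is the unique fixed-point on $[t_{0},T]$, and Proposition~\ref{lem:nu0-sign} identifies $\operatorname{sign}(\widetilde\mu_{t})=\operatorname{sign}(\E[\nu_{t_{0}}])$ on $[t_{0},T]$. Continuity of $\widetilde\mu$ together with the minimality of $t_{0}$ forces $t_{0}=0$; hence $\widetilde\mu$ is a fixed-point of $F$ on $[0,T]$, and Part~1 concludes $\widetilde\mu=\mu^{\ast}$ and $\widetilde\xi^{i}=\xi^{\mu^{\ast},i}$.

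\emph{Main obstacle.} The crux is the necessary Pontryagin characterization in Part~1: unlike in Theorem~\ref{thm:N-player-verification}, where a candidate strategy is produced and then checked against arbitrary admissible deviations, here the adjoint $Y^{\ast,i}$ must be extracted from optimality, which must be done through perturbations that simultaneously respect the liquidation constraint and allow the endogenous absorption time to move. The strengthened convexity $\eta-\delta\kappa>0$ and $\lambda-\delta\kappa\ge0$ in Assumption~\ref{ass:npg} is precisely what makes these variational computations rigorous, while the sign property of $\mu$ is what converts the FBODE characterization into a genuine uniqueness statement via the monotonicity of $f_{\mu}$.
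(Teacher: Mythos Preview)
Your Part~2 argument coincides with the paper's. Your Part~1 strategy, however, is genuinely different. The paper never derives necessary Pontryagin conditions for $\xi^{\ast,i}$; instead it compares the equilibrium strategy directly against the already-constructed candidate $\xi^{\mu,i}$ (with $\mu$ the equilibrium aggregate) and shows that if $\Vert\xi^{\ast,i}-\xi^{\mu,i}\Vert_{L^2}\neq 0$ then $J(\xi^{\ast,i};\xi^{\ast,-i})>J(\xi^{\mu,i};\xi^{\ast,-i})$. The point is that the integration-by-parts trick of Theorem~\ref{thm:N-player-verification} still works because the adjoint $Y^{\mu,i}$ of the \emph{candidate} is available by construction, and the cross-term $\kappa_t\bigl(\tfrac{1}{N}\sum_{j\ne i}\xi^{\ast,j}_t\bigr)X^{\ast,i}_t$ becomes $\kappa_t\mu_t X^{\ast,i}_t$ once one adds and subtracts $\tfrac{\kappa_t}{N}\xi^{\ast,i}_tX^{\ast,i}_t$. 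This sidesteps any need to characterise $\xi^{\ast,i}$ variationally.

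Your plan is viable in spirit but the step you flag as the ``main obstacle'' is more serious than you indicate. The admissible class $\mathcal{A}_{x_i}$ is \emph{not convex}: a convex combination of two absorbed trajectories need not be absorbed at its own first zero. Hence strict convexity of the integrand does not by itself reduce the problem to first-order conditions. Your perturbations (i) with $\int v=0$ supported on $[0,\tau^{\ast,i}]$ may leave $\mathcal{A}_{x_i}$ because the perturbed state can touch zero before $\tau^{\ast,i}$; one has to restrict to $v$ supported on $[0,\tau^{\ast,i}-\varepsilon]$ and pass to the limit, which yields the adjoint ODE only up to an undetermined constant $Y^{\ast,i}_{\tau^{\ast,i}-}$. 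Pinning that constant down via (ii) requires constructing admissible competitors with slightly shifted absorption times, which near $\tau^{\ast,i}$ forces a non-trivial modification of the trajectory (you cannot simply ``extend past zero''). This can be done, but it is substantially more work than the paper's two-line comparison argument, and you would still need to rule out $\tau^{\ast,i}<T$ when $\mathrm{sign}(x_i)\neq\mathrm{sign}(\mu)$ separately, since then $f_\mu(\tau)=x_i$ has no solution and your monotonicity argument does not apply.
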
 
\begin{proof}
\begin{enumerate}
\item For every $i = 1, \dots, N$ the strategy $\xi^{\ast, i}  \in \mathcal{A}_{x_i} $ is a minimizer of $J(\cdot; \xi^{\ast, -i})$. Let $X^{\ast, i}$ be the corresponding state process and $\tau^{\ast, i}$ the corresponding absorption time, and $\tau^i:=\tau_\mu(x^i)$. Assume to the contrary that 
$$\Vert \xi^{\ast, i} - \xi^{\mu, i}\Vert_{L^{2}} \neq 0.$$
\begin{enumerate}[label=(\roman*)]
\item $\tau^{i} \ge \tau^{\ast, i}$:
The argument is different from the one given in the proof of Theorem~\ref{thm:N-player-verification} as we are now working with the equilibrium strategy $\xi^{\ast,i}$ and not with the admisible strategy $\xi^{\mu,i}$. Nevertheless, we use again the integration by parts argument based on the dynamics of $\xi^{\mu, i}$. 
Therefore, we need to separate cross terms from squared terms in the convexity estimate:
\begin{align*}
		J(\xi^{\ast, i}; \xi^{\ast,-i})&-J(\xi^{\mu, i}; \xi^{\ast,-i}) \\
		=&~\int_0^{\tau^{i}}\left(	\frac{1}{2}\eta_t(\xi^{\ast,i}_t)^2+\frac{1}{2}\lambda_t (X^{\ast, i}_t)^{2} - \frac{1}{2}\eta_t(\xi^{\mu,i}_t)^2 - \frac{1}{2}\lambda_t (X^{\mu, i}_t)^{2} \right)\,dt \\	
		&+~ \int_0^{\tau^{i}} \frac{\kappa_t}{N} \left( \xi^{\ast, i}_t X^{\ast, i}_t - \xi^{\mu, i}_t X^{\mu, i}_t \right) dt + \int_0^{\tau^{i}}\kappa_t\left( \frac{1}{N}\sum_{j\neq i}\xi^{\ast, j}_t \right)(X^{\ast, i}_t - X^{\mu,i}_t) dt\\
				=&~\int_0^{\tau^{i}}\left(	\frac{1}{2}\eta_t(\xi^{\ast,i}_t)^2+\frac{1}{2}\lambda_t (X^{\ast, i}_t)^{2} - \frac{1}{2}\eta_t(\xi^{\mu,i}_t)^2 - \frac{1}{2}\lambda_t (X^{\mu, i}_t)^{2} \right)\,dt \\	
				&+~ \int_0^{\tau^{i}} \frac{\kappa_t}{N} \left( \xi^{\ast, i}_t  - \xi^{\mu, i}_t \right) X^{\mu, i}_t dt + \int_0^{\tau^{i}}\kappa_t\mu_t(X^{\ast, i}_t - X^{\mu,i}_t) dt\\
			>&~ \int_0^{\tau^{i}}\left(	\eta_t\xi^{\mu, i}_t(\xi^{\ast, i}_t - \xi^{\mu, i}_t) +\lambda_t X^{\mu, i}_t(X^{*,i}_t - X^{\mu, i}_t) \right)\,dt\\
				&+~ \int_0^{\tau^{i}} \frac{\kappa_t}{N} \left( \xi^{\ast, i}_t  - \xi^{\mu, i}_t \right) X^{\mu, i}_t dt + \int_0^{\tau^{i}}\kappa_t\mu_t(X^{\ast, i}_t - X^{\mu,i}_t) dt\\
		=&~ \int_0^{\tau^{i}}\left(	\left(\eta_t\xi^{\mu, i}_t + \frac{\kappa_t}{N}X^{\mu, i}_t\right)(\xi^{\ast,i}_t - \xi^{\mu, i}_t) +\left(\lambda_tX^{\mu, i}_t +\kappa_t\mu_t \right)(X^{\ast, i}_t - X^{\mu, i}_t) \right)\,dt,\\
		=&~0.
\end{align*}
The last argument follows by the same integration by parts argument as in the proof of Theorem~\ref{thm:N-player-verification}.
The above estimate is a contradiction to the optimality of $\xi^{\ast, i}$, hence it must hold $\xi^{\ast, i} = \xi^{\mu,i}$.

\item $\tau^{i} < \tau^{\ast, i}$: Since $\mu$ is either strictly non-negative or strictly non-positive we again only have to consider the case $x_i \mu \ge 0$. 
We then have immediately
\begin{align*}
J(\xi^{\ast, i}; \xi^{\ast,-i})-&J(\xi^{\mu, i}; \xi^{\ast,-i})\\
\geq &~ \int_{\tau^{i}}^{\tau^{\ast, i}} \left(\frac{1}{2}\eta_t(\xi^{\ast, i}_t)^2 + \frac{1}{N}\left(\xi^{\ast, i}_t + \sum_{j\neq i} \xi^{\ast, j}_t \right) \kappa_t X^{\ast, i}_t+\frac{1}{2}\lambda_t(X^{\ast, i}_t)^2\right) \,dt\\
=&~ \int_{\tau^{i}}^{\tau^{\ast, i}} \left(\frac{1}{2}\eta_t(\xi^{\ast, i}_t)^2 + \mu_t \kappa_t X^{\ast, i}_t+\frac{1}{2}\lambda_t(X^{\ast, i}_t)^2\right) \,dt  \\
>&~ 0,
\end{align*}
where the last inequality follows from the fact that $\sign(X^{\ast, i}_t) = \sign(x_i) = \sign(\mu_t)$ for all $t\in[0, \tau^{\ast, i})$. 
The above estimate contradicts the optimality of $\xi^{\ast, i}$ and therefore $\xi^{\ast, i} = \xi^{\mu, i}$.
\end{enumerate}

\item The uniqueness of equilibria with continuous mean trading rate now follows from the first part of this theorem and Theorem~\ref{thm:N-player-verification}  by using the same arguments as in the proof of Theorem~\ref{thm:mfg-final}.
\end{enumerate}
\end{proof}

 \subsection{Convergence to MFG equilibrium}\label{sec:convergence}

In this section we prove the convergence of the $N$-player equilibrium to the mean field equilibrium if the number of players tends to infinity. For each integer $N$ we denote by $\mu^{N}$ the mean equilibrium trading rate in the $N$-player game with initial distribution $\nu^{N}_0=\frac{1}{N}\sum_{j=0}^N\delta_{x_j}$ and tail probability function $q_0^{N}$.

\begin{proposition}
Assume that tail distribution $q^{N}_0$ converges toward $q_0$ in the following sense:
\begin{equation}\label{eq:convergence-assumtion}
\lim_{N\rightarrow\infty}\sup_{x \in [0,\infty)}|q^N_0(x)-q_0(x)|=0.
\end{equation}
Then %
\begin{equation}\label{eq:convergence-equation}
 	\lim_{N\to\infty} \Vert \mu^{N} - \mu \Vert_\infty = 0,
\end{equation}
where $\mu\in C^{0}([0,T];\bR)$ satisfies the mean field equilibrium equation \eqref{eq:consistency}.
\end{proposition}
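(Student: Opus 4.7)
The strategy is a compactness-plus-uniqueness argument: extract a uniformly convergent subsequence of $\mu^N$, pass to the limit in the integral equation \eqref{eq:mu-second-integral}, and identify the limit as the unique MFG fixed point via Theorem \ref{thm:mfg-ode-solution}. Without loss of generality we assume $\E[\nu_0] > 0$, so that by Proposition \ref{lem:nu0-sign}, Theorem \ref{thm:mfg-final} and Theorem \ref{thm:N-sign-uniqueness} both $\mu$ and (for $N$ large enough that $\E[\nu_0^N] > 0$) every $\mu^N$ are strictly positive on $[0,T]$, solve \eqref{eq:mu-second-integral} with the corresponding tail $q_0^N$ and $\delta = \tfrac 1 N$, and satisfy the terminal condition \eqref{eq:mu_terminal_value}.

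The first step is to establish uniform estimates. From Theorem \ref{thm:mfg-ode-solution} ii) we have $0 < x_{\mu^N} \le \sup(\mathrm{supp}(\nu_0^N))$, which together with the assumed uniform convergence $q_0^N\to q_0$ and the first-moment assumption on $\nu_0$ shows that $\E[\nu_0^N]$ and $x_{\mu^N}$ are bounded uniformly in $N$. Combined with the uniform bounds on $\alpha_T^{\delta}$ in Lemma \ref{lem:about-h}, the terminal representation \eqref{eq:mu_terminal_value} yields $\sup_N |\mu^N_T|<\infty$, and the a priori estimate from Lemma \ref{lem:mu-sign} then gives $\sup_N \|\mu^N\|_\infty < \infty$. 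Differentiating \eqref{eq:mu-second-integral}, the bound on $\|\mu^N\|_\infty$ together with the uniform bounds on $\eta, 1/\eta, \kappa, \dot\kappa, \dot\eta, \lambda$ and $h^{1/N}, \dot h^{1/N}$ (Assumption \ref{ass:single-player} and Lemma \ref{lem:about-h}) yields a uniform bound on $\|\dot\mu^N\|_\infty$. Hence $\{\mu^N\}$ is uniformly bounded and equicontinuous.

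By Arzelà-Ascoli there exist a subsequence $(\mu^{N_k})$ and a continuous function $\tilde\mu \in C^0([0,T])$ such that $\mu^{N_k}\to \tilde\mu$ uniformly on $[0,T]$. I would then pass to the limit in \eqref{eq:mu-second-integral} term by term. Lemma \ref{lem:about-h} gives $h^{1/N_k}\to h^0$ a.e. and uniform boundedness, so $f_{\mu^{N_k}}(t)\to f_{\tilde\mu}(t)$ uniformly in $t$ by dominated convergence. Combined with the assumed uniform convergence $q_0^{N_k}\to q_0$ and the convergence $\alpha_T^{1/N_k}\to\alpha_T^0$ (from Lemma \ref{lem:about-h}), dominated convergence in each of the integral terms and in the terminal condition \eqref{eq:mu_terminal_value} shows that $\tilde\mu$ solves the MFG integral equation \eqref{eq:mu-second-integral} with $\delta=0$, tail $q_0$, and the endogenous terminal value $\tilde\mu_T = \frac{\alpha^0_T}{\eta_T}(\E[\nu_0]-(1-q_0(0))x_{\tilde\mu} - \int_0^{x_{\tilde\mu}}q_0(x)\,dx)$. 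By Theorem \ref{thm:mfg-ode-solution} the MFG integral equation has the unique solution $\mu$, hence $\tilde\mu = \mu$. Because every subsequence of $(\mu^N)$ contains a further subsequence converging uniformly to the \emph{same} limit $\mu$, the full sequence converges: $\|\mu^N-\mu\|_\infty \to 0$.

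The main obstacle is the nonlinear, nonlocal dependence on $q_0^N\circ f_{\mu^N}$ combined with the endogenous terminal value: one has to simultaneously control the convergence of the tail functions, of the running integrals $f_{\mu^N}$, and of the absorption thresholds $x_{\mu^N}$. The uniform convergence assumption \eqref{eq:convergence-assumtion} is precisely what is needed to reconcile uniform convergence of $\mu^{N_k}$ with convergence of the nonlinear factor $q_0^{N_k}(f_{\mu^{N_k}}(\cdot))$, since $f_{\mu^{N_k}}$ only converges pointwise a priori and $q_0^{N_k}$ is typically discontinuous (being the tail of an empirical measure). A secondary technical point is verifying that $N\mapsto\E[\nu_0^N]$ and $x_{\mu^N}$ are indeed uniformly bounded from the hypothesis on $q_0^N$; this can be shown by writing $\E[\nu_0^N]=\int_0^\infty q_0^N(x)\,dx - \int_{-\infty}^0(1-q_0^N(x))\,dx$ and using the uniform tail decay inherited from $q_0$ to dominate the integrals.
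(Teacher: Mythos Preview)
Your approach is essentially the paper's: uniform bounds from the integral equation, Arzel\`a--Ascoli, pass to the limit in \eqref{eq:mu-second-integral}, then invoke uniqueness. Two remarks are worth recording.

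First, the paper handles the nonlinear term $\int_t^T\frac{\kappa_s}{\eta_s}q_0^{N_k}(f_{\mu^{N_k}}(s))\mu^{N_k}_s\,ds$ differently: rather than appealing directly to dominated convergence, it integrates by parts so that only the Lipschitz antiderivatives $Q_0^{N_k}$ and $Q_0$ appear, which avoids any issue with possible discontinuities of the tail function $q_0$. Your direct argument still goes through, but you should note that $q_0$ is monotone (hence has at most countably many discontinuities) and that $f_{\tilde\mu}$ is strictly increasing on $(0,T]$, so the composition $q_0\circ f_{\tilde\mu}$ is a.e.\ the pointwise limit; otherwise the step ``$q_0^{N_k}(f_{\mu^{N_k}}(s))\to q_0(f_{\tilde\mu}(s))$'' is not justified.

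Second, your claimed route to uniform boundedness of $\E[\nu_0^N]$ is not correct as stated: uniform convergence $q_0^N\to q_0$ on $[0,\infty)$ does \emph{not} by itself control $\int_0^\infty q_0^N$, since a difference of size $1/N$ can persist on an interval of length $N^2$. The paper sidesteps the separate bound on $x_{\mu^N}$ by using directly that the bracket in \eqref{eq:mu_terminal_value} is bounded above by $\E[\nu_0^N]$ (the subtracted terms being nonnegative), and then tacitly uses that $\E[\nu_0^N]$ is bounded; you should adopt the same simplification and treat the moment bound as part of the standing hypothesis rather than deriving it from \eqref{eq:convergence-assumtion} alone.
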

\begin{proof}
We assume w.l.o.g.~that $\mathbb{E}[\nu_0] >0$ and consider $N$ large enough such that $\mathbb{E}[\nu^{N}_0] > 0$ and %
$\lambda-\frac{\kappa}{N}\geq 0$ and $\eta-\frac{\kappa}{N}>0$.
The equilibrium equation \eqref{eq:mu-second-integral} applied to the empirical measure $\nu_0^{N}$ yields
\begin{align*}
0 <  \mu^{N}_t &\le \frac{\alpha^N_T}{\eta_T}\left\{  \mathbb E[\nu_0^N] - (1-q_0^N(0))x_{\mu^N}  - \int_0^{x_{\mu^N}} q_0^N(y)\,dy			\right\}   + C\int_t^{T}\left(\mu^{N}_s + \int_{s}^T \mu^{N}_u \,du \right)\,ds \\
&\le C \left\{  \mathbb E[\nu_0^N] - (1-q_0^N(0))x_{\mu^N}  - \int_0^{x_{\mu^N}} q_0^N(y)\,dy	\right\} + C\int_t^{T}\mu^{N}_s\, ds,
\end{align*}
where $C>0$ is independent of $N$. %
In view of \eqref{eq:convergence-assumtion} the first integral term in last line above is bounded uniformly in $N$.
Hence it follows from Gr\"onwall's inequality that $(\mu^N)_{N\in \mathbb N}$ is uniformly bounded in $C^{0}([0,T];\bR)$.
Similarly, from \eqref{eq:mu-second-integral} applied to $\nu^N_0$ we have for any $0 \le s \le t \le T$ that
\begin{align*}
0 <  \vert \mu^{N}_t - \mu^{N}_s\vert &\le C(1+T)\left( \sup_{N\ge1}\Vert \mu^{N}\Vert_\infty\right) |t-s|.
\end{align*}
Thus, the sequence $(\mu^N)_{N\in \mathbb N}$ is uniformly Lipschitz continuous.
By the Arzelà–Ascoli theorem there exists $\mu\in C^{0}([0,T];\bR)$ and a subsequence $(N_k)_{k\ge1}$ that converges to $\mu$. 

In what follows we prove that $\mu$ is the unique solution to our integral equation \eqref{eq:mu-second-integral} corresponding to $\nu_0$.   In particular, the sequence $(\mu^N)_{N\in \mathbb N}$ has a unique accumulation point and hence converges to $\mu$. To this end, we  consider the $N$-player equilibrium equation \eqref{eq:mu-second-integral}:
 \begin{align}\label{eq:N-equilibrium-integral}
 \begin{split}
 \mu^{N_k}_t =&~ \frac{\alpha^{1/N_k}_T}{\eta_T} \left\{  \mathbb E[\nu_0^N] - (1-q_0^N(0))x_{\mu^{N_k}}  - \int_0^{x_{\mu^{N_k}}} q_0^{N_k}(y)\,dy	\right\} \\
 &~ +\int_t^{T}\frac{\lambda_s + \frac{\dot\kappa_s}{N_k}}{\eta_s} \left(\int_{s}^T \mu^{N_k}_u \,du\right)\,ds  +\int_t^{T}\left( \frac{\dot\eta_s + \kappa_s(1 - q_0^{N_k}(0))}{\eta_s}-\frac{\kappa_s}{N_k\eta_s}	\right)\mu^{N_k}_s \,ds \\
  &~+\int_t^{T}\frac{\kappa_s}{\eta_s}q_0^{N_k}\left(\int_0^s h^{1/N_k}_u\kappa_u\mu^{N_k}_u\,du\right)\mu^{N_k}_s \,ds,  \qquad t\in[0,T].
\end{split}
\end{align}

Our goal is to take the limit for $k\to \infty$ in the above equation.
The convergence $\alpha^{1/N}_T \to \alpha^0_T$ is proven in Lemma \ref{lem:convergence-alpha}.
By Lemma \ref{lem:about-h} the family $(h^{1/N})_{N\in\bN}$ is uniformly bounded.
By dominated convergence theorem we have that
 \[
 	\lim_{k\to \infty}  x_{\mu^{N_k}}
 	=\lim_{k\to \infty}\int_0^Th^{{1/N_k}}_s\kappa_s\mu^{N_k}_s\,ds
 	=\int_0^Th^{0}_s\kappa_s\mu_s\,ds
 	=  x_\mu. 
 \]
 
Similarly for the terms in the second line of \eqref{eq:N-equilibrium-integral}, it holds by the dominated convergence theorem
 \begin{multline*}
\lim_{k\to\infty}\left(\int_t^{T}\frac{\lambda_s + \frac{\dot\kappa_s}{N_k}}{\eta_s} \left(\int_{s}^T \mu^{N_k}_u \,du\right)\,ds +\int_t^{T}\left( \frac{\dot\eta_s + \kappa_s(1- q^{N_k}_0(0))}{\eta_s}-\frac{\kappa_s}{N_k\eta_s}	\right)\mu^{N_k}_s \,ds\right) \\
=
\int_t^{T}\frac{\lambda_s}{\eta_s} \left(\int_{s}^T \mu_u \,du\right)\,ds +\int_t^{T} \frac{\dot\eta_s + \kappa_s(1- q_0(0))}{\eta_s}\mu_s \,ds, \quad t\in [0,T].
\end{multline*}

Regarding the last term in \eqref{eq:N-equilibrium-integral}, we use a similar estimate as in the proof Theorem \ref{thm:mfg-ode-solution}.
More precisely using integration by parts and the Lipschitz continuity of $Q_0(x) := \int_0^{x} q_{0}(y)\,dy$ and $Q^{N_k}_0(x) := \int_0^{x} q^{N_k}_{0}(y)\,dy$, it follows that for any $t\in(0,T]$ there exists a constant $K>0$ that only depends on the bounds of the coefficients and the functions $h^{\delta}$, $1/h^{\delta}$ and $\dot{h}^{\delta}$ on the interval $[t,T]$, such that
\begin{align*}
&~\left\vert \int_t^{T}\frac{\kappa_s}{\eta_s}q_0^{N_k}\left(x_{\mu^{N_k}}-\int_s^T h^{1/N_k}_u\kappa_u\mu^{N_k}_u\,du\right)\mu^{N_k}_s \,ds - \int_t^{T}\frac{\kappa_s}{\eta_s}q_0\left(x_{\mu}-\int_s^T h^{0}_u\kappa_u\mu_u\,du\right)\mu_s \,ds\right\vert \\
		\leq &~		  \left|  \frac{1}{\eta_Th_T^{1/N_k}}Q_0^{N_k}(x_{\mu^{N_k}})- \frac{1}{\eta_Th_T^{0}}Q_0 (x_{\mu})\right|   \\
		&~+	 \left|\frac{1}{\eta_t h_t^{1/N_k}}Q_0^{N_k}\left( x_{\mu^{N_k}} -\int_t^T h^{1/N_k}_u\kappa_u\mu^{1/N_k}_u\,du	\right) - \frac{1}{\eta_t h_t^{0}}Q_0 \left( x_{\mu} -\int_t^T h^{0}_u\kappa_u\mu_u\,du	\right)      	\right|	 	\\
		&~+   	\int_t^T \left|   	Q_0^{N_k}\left( x_{\mu^{N_k}} -\int_s^T h^{1/N_k}_u\kappa_u\mu^{1/N_k}_u\,du  \right)\frac{ \dot\eta_s h^{1/N_k}_s-\eta_s\dot h^{1/N_k}_s }{(\eta_s h_s^{1/N_k})^2}  \right. \\ & \left. - ~ Q_0 \left( x_{\mu} -\int_s^T h^{0}_u\kappa_u\mu_u\,du  \right)\frac{ \dot\eta_s h^{0}_s-\eta_s\dot h^0_s }{(\eta_s h^0_s)^2}   \right|\,ds\\
\le &~ 
K \left(\sup_{x \in [0,\infty)}|q^{N_k}_0(x)-q_0(x)|\right)\Vert \mu^{N_k} - \mu \Vert_\infty
+ K \vert x_{\mu^{N_k}} - x_\mu \vert 
+ K\Vert \mu^{N_k} - \mu \Vert_\infty.
\end{align*}
Taking $k\to\infty$ in \eqref{eq:N-equilibrium-integral}, it follows from \eqref{eq:convergence-assumtion} 
that $\mu$ satisfies the mean field equilibrium equation \eqref{eq:mu-second-integral}.
\end{proof}

\section{Examples}

In this section we illustrate the impact of market drop-out on equilibrium trading rates in the mean field and $N$-player game on a few numerical examples.
In particular, we compare the equilibria to the mean field game of portfolio liquidation \textit{without} market drop-out as presented in \cite{FGHP-2018}.
For simplicity we will consider constant cost parameters; more precisely we choose 
\begin{align*}
\eta \equiv 5, \quad \kappa \equiv 10, \quad  \lambda \equiv 5.
\end{align*}

\begin{figure}
\begin{center}
\includegraphics[scale=.7]{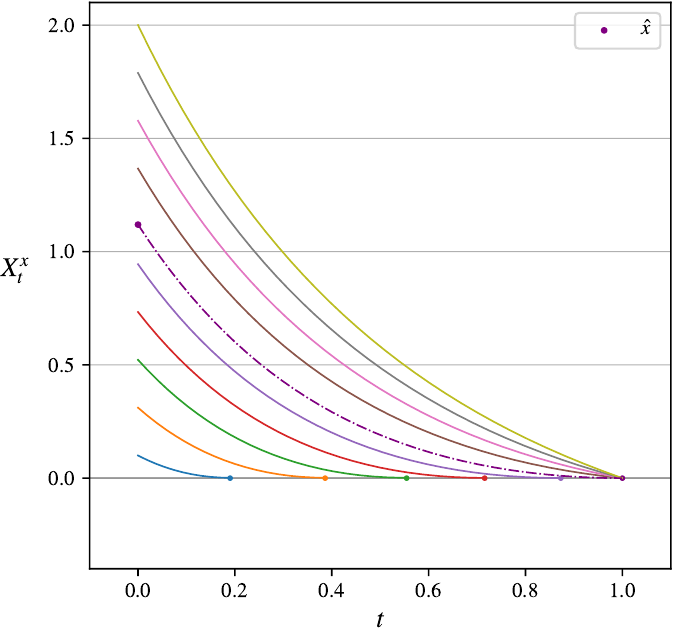}
\hfill{}\includegraphics[scale=.7, trim=.8cm 0 0 0,clip]{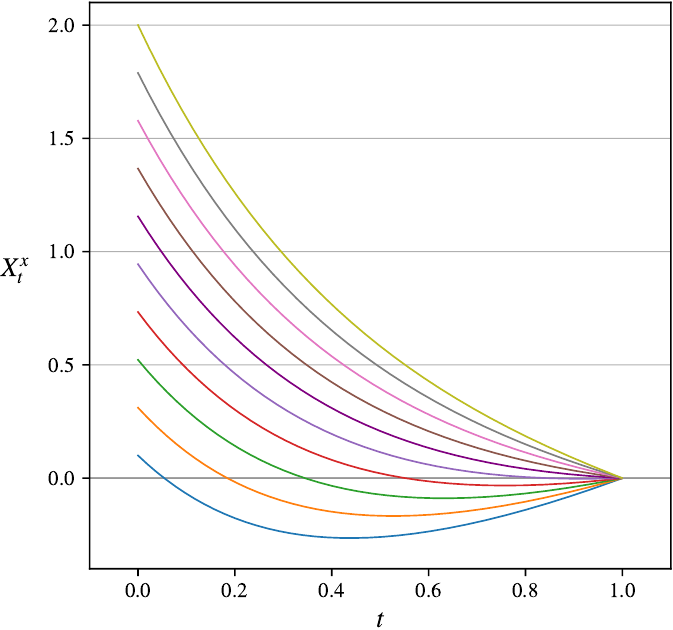}
\end{center}
\caption{The evolution of the state processes in equilibrium of the liquidation game \emph{with market drop-out} (left) and \emph{without market drop-out} (right) for several representative players in the presence of only sellers. We have highlighted the moments of drop-out and $\hat{x}$ (which represents the smallest initial position for which no early exit takes place).}\label{fig:one-sided-rep}
\end{figure}

A numerical approximation of the mean-field equilibrium is obtained by solving the backward integral equation \eqref{eq:mu-third-integral} using a standard numerical solver for several values c\footnote{Technically one rather defines a function that outputs for any given parameter $c$ the evaluated solution $f_{\mu^{c}}(T) - c$ and then applies a standard numerical root finding method to that function.} and then finding the root of\footnote{Note that in the two sided situation we use the root of $c\mapsto \mu_T^{c}$ as an upper bound in the root finding procedure. In the one sided situation we instead  use the upper bound of support of $\nu_0$ or the upper bound from the a priori estimate \eqref{eq:mu-sign}.} $c\mapsto f^{c}_0$.

We first consider a one-sided situation with only sellers.
Specifically we choose $\nu_0$ to be an exponential distribution with mean $1.5$, i.e. $q_0(x) = e^{-\frac{2}{3}x}$ and $p_0\equiv 0$.
Figure~\ref{fig:one-sided-rep} presents the evolution of the equilibrium mean-field state process with and without market drop-out for several representative players. In a model without market drop-out constraint,  players with smaller initial portfolio use short selling and round trip strategies, whereas in the presence of the market drop-out constraint all players stick to strict selling strategies (as it is generally the case for the dominating side of the market, see Section~\ref{sec:single_player_verification}).

\begin{figure}
\begin{center}
\includegraphics[scale=.7]{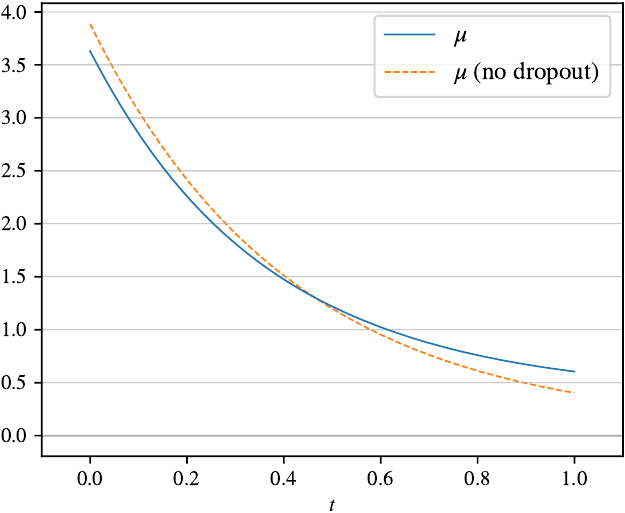}
\hfill{}\includegraphics[scale=.7]{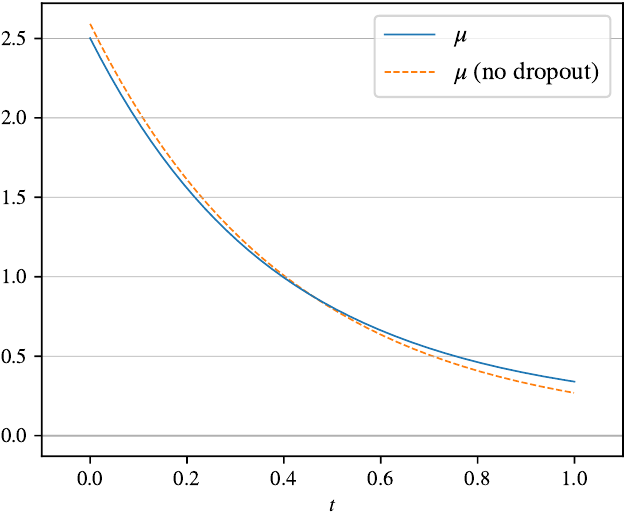}
\end{center}
\caption{Comparison of the mean trading rate in the equilibrium of the liquidation game \emph{with market drop-out} (solid line) and \emph{without market drop-out} (dashed line). \textit{left:} one-sided situation,  \textit{right:} two sided situation.}\label{fig:one-sided-mean}
\end{figure}

Figure~\ref{fig:one-sided-mean} (left) compares the mean equilibrium trading rate for both situations.
For our choice of parameters the rates only deviate slightly. As expected, the drop-out constraint initially leads to slower {aggregate} liquidation and to an increasingly faster {aggregate} liquidation halfway through the trading period  {when compared to model without constraints. The reason is that some players would liquidate fast and then change the direction if they were allowed to; these players now trade slower initially. The aggregate rate is faster towards the end of the trading period as there are now no buyers in the market. As a result, in our current setting the asset price initially decreases faster and then slower compared to the unconstrained case. However, in both settings $\int_0^T\mu_t\, dt = \mathbb{E}[\nu_0]$. Hence if the permanent impact parameter is constant, then the price at the terminal time is the same in both settings.} 

Secondly, we consider a two-sided situation with $q_0(x) = 0.8 \cdot e^{- \frac{2}{3} x}$ and  $p_0(x) = 0.2 \cdot e^{x}$. This leads to a mean initial position of $\E[\nu_0] = 1$, hence, to a situation in which sellers dominate the market.
Figure~\ref{fig:two-sided-rep} (left) shows the evolution of the representative state processes while Figure~\ref{fig:two-sided-rep} (right) displays the equilibrium mean trading rate.
We observe that the equilibrium term indeed does not change its positive sign, which again leads to the fact that sellers do not use short selling strategies in the situation with market drop-out.

Figure~\ref{fig:n-player} shows the solution to the $N$-player games. 
In this game, the two smallest players on the selling side drop out early, while all other players liquidate at the terminal time. Our simulations suggest that the convergence to the MFG equilibrium is rather fast. We emphasize that mean trading rate in the $N$-player game is not smooth; the slope changes discontinuously when a player drops out of the market.

\begin{figure}
\begin{center}
\includegraphics[scale=.65]{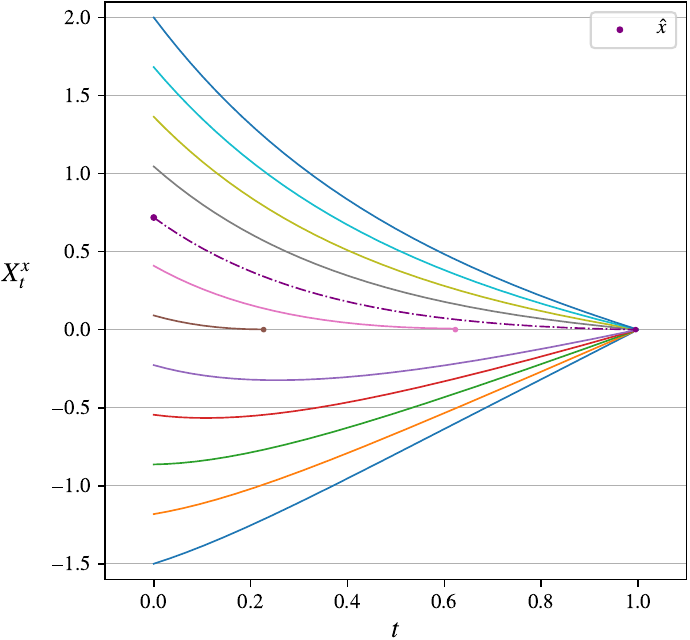}
\hfill{}\includegraphics[scale=.65, trim=0.9cm 0 0 0,clip]{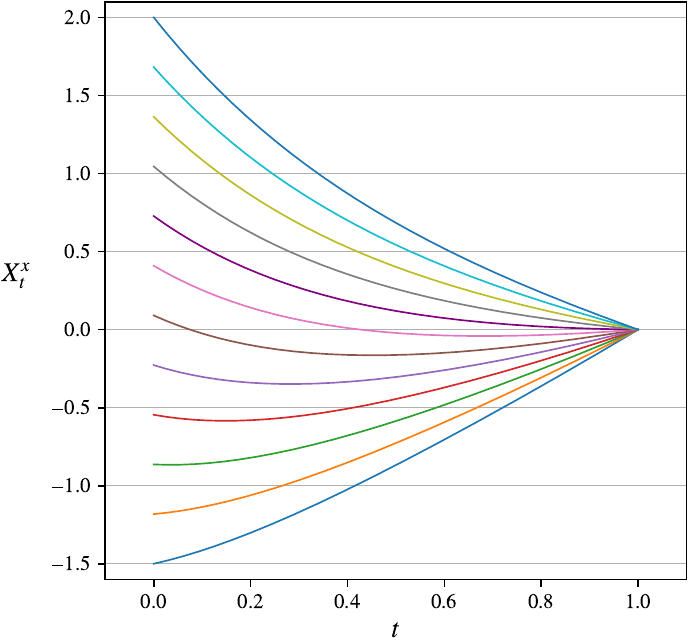}
\end{center}
\caption{The evolution of the state processes in equilibrium of the liquidation game in the two sided case: \emph{with market drop-out} (left) and \emph{without market drop-out} (right).}\label{fig:two-sided-rep}
\end{figure}

\begin{figure}[h!]
\begin{minipage}{\linewidth}
  \centering
  $\vcenter{\hbox{
  \includegraphics[scale=.65]{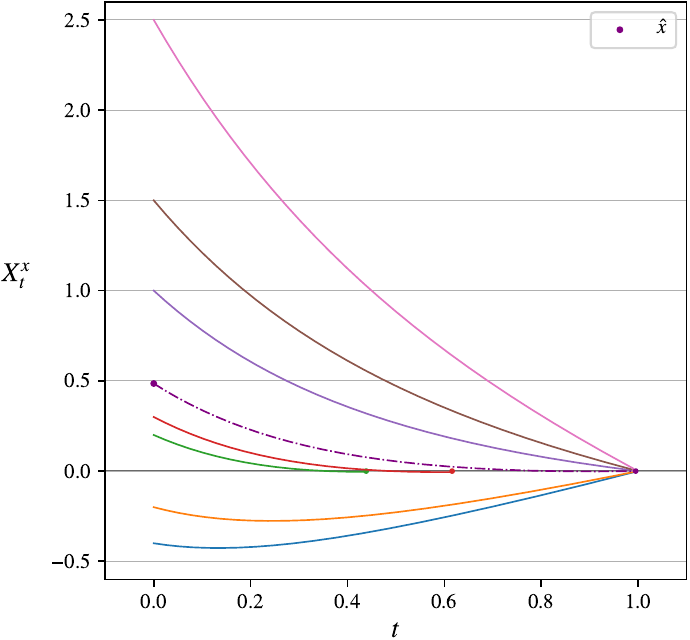}
  }}$
  \hspace*{1em}
  $\vcenter{\hbox{
  \includegraphics[scale=.65]{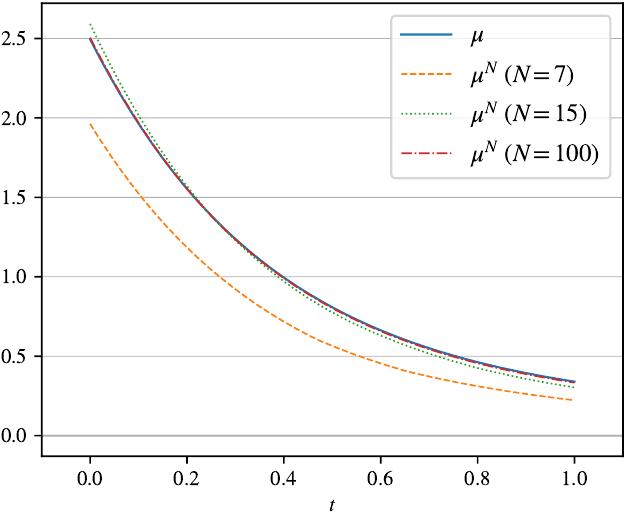}  
  }}$
\end{minipage}
\caption{\textit{left}: Evolution of states of $N=7$ players in the Nash equilibrium.
\textit{right}: Comparison of the mean trading rate in the equilibrium of the \emph{mean field game} (solid line) and the $N$-player game (dashed/dotted line) for $N=7, 15$ and $100$.}\label{fig:n-player}
\end{figure}

\section{Conclusion and outlook}
In this paper we studied novel mean field liquidation games, where each player drops out of the market once her position hits zero. We characterized the unique Nash equilibrium by a nonlinear integral equation. We established existence and uniqueness of equilibrium result for both the $N$-player game and the MFG, and proved the convergence of equilibria to the MFG equilibrium. Our result suggests that in a market dominated by sellers, the sellers' optimization problem with market drop-out is the same as the optimization problem with short selling constraint. However, the buyers with small initial positions would trade in an opposite direction initially, thus, the buyers' optimization problem may be different from the one with short selling constraint, which is left as an independent ongoing work. 

{Several avenues are open for further research. For instance, one of the main limitations of our model is that all players share the same trading horizon. However, we expect that our model provides a mathematical framework for analyzing more general models. To illustrate this, let us consider a model with two risk-neutral players where $\kappa$ is constant and where player $i=1,2$ holds an initial position $x_i$ and needs to liquidate this position by time $T_i$. Let $T_1 < T_2$. To solve this liquidation game we can proceed as follows. First, we can solve a game with common trading horizon $T=T_2$ using our results. If player 1 liquidates before time $T_1$, the game with different trading horizons is solved. If not, then we can consider a family of games with common trading horizon $T=T_1$ indexed by the position $x'_2$ that player $2$ needs to hold at time $T_1$. Since both players are risk neutral, this is equivalent to a 2-player liquidation game on $[0,T_1]$ where player 2 holds an initial position $x_2-x_2'$. This game can again be solved using the results of this paper. This player's total liquidation cost can then be computed by first solving the new 2-player game on $[0,T_1]$ and then solving a single player liquidation problem on $[T_1, T_2]$ with initial position $x_2'$. In a final step one then needs minimize player 2's total cost over $x_2' \in [0,x_2]$. Of course, in an MFG the situation is much more involved.}

\appendix
\titleformat{\section}[hang]{\normalfont\bfseries\Large}{}{0pt}{}
\section{Appendix}

Throughout this section we assume that Assumption \ref{ass:single-player} holds.
\begin{lemma}[cf.  Lemma A.1 in \cite{FGHP-2018}]\label{lem:estimatesA}
There exists a unique solution to the terminal value problem
\begin{align}\label{eq:A-appendix}
-\dot A_t=&~-\frac{A^2_t}{\eta_t}+\lambda_t, \qquad \lim_{t\nearrow T}A_t=\infty.
\end{align}
Moreover,  the following estimates hold for all $0 \le s \le t < T$:
\begin{align*}
\frac{1}{\etamin(T-t)} \;\;\le\;\; \frac{1}{\int_{t}^{T}\frac{1}{\eta_s} ds} \;\;\le\;\;A_t \;\;\le\;\; \frac{1}{(T-t)^{2}}\int_t^{T}\left(\eta_s +(T-s)^{2}\lambda_s \right)\,ds
\end{align*}
and
\begin{align*}
 \exp\left(-\int_{s}^{t}\frac{A_r}{\eta_r}dr\right) \le  {\etamax}{\etamin}\frac{T-t}{T-s}.
\end{align*}

\end{lemma}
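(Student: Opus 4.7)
The ODE is Riccati with singular terminal condition, so my plan is to work with $u := 1/A$, which satisfies the regular equation $\dot u = -1/\eta + \lambda u^{2}$ with $u(T) = 0$. Since the right-hand side is locally Lipschitz in $u$, standard ODE theory gives local existence and uniqueness at $T$. The a priori bound $0 \le u \le \int_{\cdot}^{T} \eta_{s}^{-1}\,ds$ (positivity from $\dot u(T) = -1/\eta_{T} < 0$ together with $u(T)=0$; the upper bound from $\dot u \ge -1/\eta$ since $\lambda \ge 0$) prevents blow-up, so $u$ extends to all of $[0,T]$. Setting $A := 1/u$ then yields the unique solution to \eqref{eq:A-appendix}; positivity of $u$ and $u(T)=0$ give $A \in C([0,T);(0,\infty))$ with $A_{t} \to \infty$ as $t \nearrow T$.

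For the lower bound, the inequality $\dot u \ge -1/\eta$ integrated from $t$ to $T$ directly yields $u_{t} \le \int_{t}^{T}\eta_{s}^{-1}\,ds$, i.e.\ $A_{t} \ge 1/\int_{t}^{T}\eta_{s}^{-1}\,ds$; the second inequality $\etamin(T-t) \ge \int_{t}^{T}\eta_{s}^{-1}\,ds$ is immediate from $\eta^{-1} \le \etamin$. For the upper bound I would verify that $\overline{u}_{t} := (T-t)^{2}/V_{t}$, with $V_{t} := \int_{t}^{T}(\eta_{s} + (T-s)^{2}\lambda_{s})\,ds$, is a super-solution, i.e.\ $\dot{\overline u}_{t} \ge -1/\eta_{t} + \lambda_{t}\overline u_{t}^{2}$. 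The key algebraic observation is that $\dot{\overline u}_{t} - (-1/\eta_{t} + \lambda_{t}\overline u_{t}^{2})$, after multiplication by the positive quantity $\eta_{t}V_{t}^{2}$, reduces to the perfect square $(V_{t} - (T-t)\eta_{t})^{2} \ge 0$. Since both $u$ and $\overline u$ vanish at $T$ and the difference obeys the linear differential inequality $(u - \overline u)^{\cdot} \le \lambda(u + \overline u)(u - \overline u)$, a backward Gr\"onwall argument with integrating factor $\exp(\int_{\cdot}^{T}\lambda(u+\overline u)\,ds)$ (finite since $u,\overline u$ are bounded and $\lambda \in L^{\infty}$) yields $u \ge \overline u$, and hence the upper bound on $A$.

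For the exponential estimate, set $I_{r} := \int_{r}^{T}\eta_{s}^{-1}\,ds$ so that $\dot I_{r} = -\eta_{r}^{-1}$. The lower bound on $A$ gives $A_{r}/\eta_{r} \ge -\dot I_{r}/I_{r}$, hence $\int_{s}^{t} A_{r}/\eta_{r}\,dr \ge \ln(I_{s}/I_{t})$; exponentiating and bounding $I_{t} \le \etamin(T-t)$ and $I_{s} \ge (T-s)/\etamax$ gives the claimed estimate. The only non-routine step in the whole argument is identifying the correct super-solution $\overline u$ for the upper bound; once the ansatz $(T-t)^{2}/V_{t}$ is made, the verification is a two-line algebraic identity and the rest is comparison.
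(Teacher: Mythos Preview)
Your proof is correct. The paper does not actually prove this lemma but cites \cite{FGHP-2018} for it, so there is no in-paper argument to compare against; the substitution $u = 1/A$ turning the singular Riccati problem into the regular terminal-value problem $\dot u = -1/\eta + \lambda u^{2}$, $u(T)=0$, is the standard route and almost certainly what the reference does as well. Your super-solution $\overline u_{t} = (T-t)^{2}/V_{t}$ and its verification via the identity $\eta_{t}V_{t}^{2}\bigl(\dot{\overline u}_{t} + 1/\eta_{t} - \lambda_{t}\overline u_{t}^{2}\bigr) = (V_{t} - (T-t)\eta_{t})^{2}$ are exactly right, the backward Gr\"onwall comparison is valid since $\lambda(u+\overline u)$ is bounded, and the exponential estimate via $A_{r}/\eta_{r} \ge -\dot I_{r}/I_{r}$ with $I_{r} = \int_{r}^{T}\eta_{s}^{-1}\,ds$ is clean.
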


\begin{lemma}\label{lem:alpha-exisitence} Let $A$ be the solution to the equation \eqref{eq:A-appendix}.
Then the map
$$\alpha^{0}: [0,T) \to [0,\infty),\quad t \mapsto A_t e^{-\int_0^t\frac{A_r}{\eta_r}\,dr},$$
is non-increasing and 
 $$\alpha^{0}_T = \lim_{t \to T} A_t e^{-\int_{0}^{t}\frac{A_r}{\eta_r}dr} \in (0,\infty).$$
\end{lemma}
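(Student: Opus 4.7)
Write $\beta_t := \alpha^0_t = A_t \exp\!\left(-\int_0^t A_r/\eta_r\,dr\right)$. Since Lemma~\ref{lem:estimatesA} gives the strictly positive lower bound $A_t \ge 1/\int_t^T (1/\eta_s)\,ds > 0$ for $t \in [0,T)$, the function $\beta$ is strictly positive on $[0,T)$, and $A$ is continuously differentiable on $[0,T)$ with $\dot A_t = A_t^2/\eta_t - \lambda_t$ by \eqref{eq:A-appendix}. Differentiating $\beta$ and using this identity, the $A_t^2/\eta_t$ terms cancel, yielding
\begin{equation*}
	\dot\beta_t \;=\; \left(\dot A_t - \tfrac{A_t^2}{\eta_t}\right)\exp\!\left(-\int_0^t \tfrac{A_r}{\eta_r}\,dr\right) \;=\; -\lambda_t \exp\!\left(-\int_0^t \tfrac{A_r}{\eta_r}\,dr\right) \;\le\; 0.
\end{equation*}
Hence $\beta$ is non-increasing, which proves the first claim and also shows that the limit $\alpha^0_T = \lim_{t \nearrow T} \beta_t$ exists in $[0, A_0]$.

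It remains to show that $\alpha^0_T > 0$. The plan is to convert the ODE for $\beta$ into one for $\ln \beta_t$ and obtain an explicit formula that is manifestly positive. Since $\beta_t > 0$ on $[0,T)$, the computation above gives
\begin{equation*}
	\frac{d}{dt}\ln \beta_t \;=\; \frac{\dot\beta_t}{\beta_t} \;=\; -\frac{\lambda_t}{A_t}, \qquad t \in [0,T),
\end{equation*}
so integration yields $\beta_t = A_0 \exp\!\left(-\int_0^t \lambda_s/A_s \,ds\right)$.

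The main (and only non-routine) point is now to show that $\int_0^T \lambda_s/A_s \, ds$ is finite. This is where the sharp lower bound from Lemma~\ref{lem:estimatesA} enters: $A_s \ge 1/\int_s^T (1/\eta_r)\,dr$, so $1/A_s \le \int_s^T (1/\eta_r)\,dr \le \|1/\eta\|_\infty (T-s)$. Combined with $\lambda \in L^\infty_+([0,T])$, this gives
\begin{equation*}
	\int_0^T \frac{\lambda_s}{A_s}\,ds \;\le\; \|\lambda\|_\infty \|1/\eta\|_\infty \int_0^T (T-s)\,ds \;=\; \tfrac{1}{2}\|\lambda\|_\infty \|1/\eta\|_\infty T^2 \;<\; \infty.
\end{equation*}
Passing to the limit $t \nearrow T$ in the explicit formula and using monotone convergence therefore yields
\begin{equation*}
	\alpha^0_T \;=\; A_0 \exp\!\left(-\int_0^T \frac{\lambda_s}{A_s}\,ds\right) \;\in\; (0, \infty),
\end{equation*}
completing the proof. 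The only step requiring care is the finiteness of the integral above, which is handled by the a priori lower bound on $A$ from Lemma~\ref{lem:estimatesA}.
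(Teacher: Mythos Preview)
Your proof is correct, and the first step (differentiating $\beta_t$ and using \eqref{eq:A-appendix} to obtain $\dot\beta_t = -\lambda_t e^{-\int_0^t A_r/\eta_r\,dr}\le 0$) coincides exactly with the paper's argument for the monotonicity.

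For the strict positivity of $\alpha^0_T$, however, your route is genuinely different and considerably more direct. The paper works with the exponential factor itself: it uses the \emph{upper} bound $A_t \le (T-t)^{-2}\int_t^T(\eta_s+(T-s)^2\lambda_s)\,ds$ from Lemma~\ref{lem:estimatesA}, together with a first-order Taylor expansion of $\eta$ near $T$, to show that $e^{-\int_{T-\varepsilon}^t A_s/\eta_s\,ds}\ge K_\varepsilon\,\varepsilon^{-1}(T-t)$, and then balances this against the \emph{lower} bound $A_t \ge 1/(\etamin(T-t))$ to conclude that the product stays bounded away from zero. By contrast, you pass to the logarithmic derivative $\dot\beta_t/\beta_t = -\lambda_t/A_t$ and integrate to the closed formula $\beta_t = A_0\exp\!\big(-\int_0^t \lambda_s/A_s\,ds\big)$; positivity of the limit then reduces to the one-line estimate $\int_0^T \lambda_s/A_s\,ds \le \tfrac12\|\lambda\|_\infty\|1/\eta\|_\infty T^2$, which uses only the lower bound on $A$ and the boundedness of $\lambda$ and $1/\eta$. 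Your argument thus avoids both the upper estimate on $A$ and the explicit use of the $C^1$-regularity of $\eta$ near $T$, and yields the quantitative lower bound $\alpha^0_T \ge A_0\exp\!\big(-\tfrac12\|\lambda\|_\infty\|1/\eta\|_\infty T^2\big)$ for free.
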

\begin{proof}
Differentiation yields
\begin{align*}
\frac{d}{dt}\left(A_t e^{-\int_0^t\frac{A_r}{\eta_r}\,dr}\right)
=\left(\dot A_t -\frac{A^2_t}{\eta_t}\right)e^{-\int_0^t\frac{A_r}{\eta_r}\,dr}
=- \lambda_t e^{-\int_0^t\frac{A_r}{\eta_r}\,dr} \le 0, \qquad t\in[0,T),
\end{align*}
which proves the first statement.
Since $\alpha^{0}$ is positive it remains to show that $\alpha^{0}_T$ is strictly positive.
The differentiability of $\eta$ at $T$ and along with the fact that $\eta_T>0$ yields $L>0$ and $\varepsilon\in(0,\frac{\eta_T}{L})$ such that
$$ \vert \eta_T - \eta_t \vert \le L(T-t), $$
for all $t \in (T-\varepsilon, T]$. 
For all such $t$ it then follows from Lemma \ref{lem:estimatesA} that
\begin{align*}
\int_{T-\epsilon}^t\frac{A_r}{\eta_r}\,dr 
&\le~ \int_{T - \varepsilon}^t \frac{\frac{1}{(T - s)^2} \int_s^T \eta_u \,du +
T \Vert \lambda \Vert_{\infty}}{\eta_s} \, ds \\
&\le~ \int_{T - \varepsilon}^t \frac{\eta_T \frac{1}{(T - s)} + \frac{1}{2} L + T
\Vert \lambda \Vert_{\infty}}{\eta_T - L (T - s)} \,ds \\
&=~ \int_{T - t}^{\varepsilon} \frac{1}{u - \frac{L}{\eta_T} u^2} \,d u +
\left( \frac{1}{2}  + \frac{T \Vert \lambda \Vert_{\infty}}{L} \right) \int_{T - t}^{\varepsilon}
\frac{1}{\frac{\eta_T}{L} - u} \,du.
\end{align*}
For the two integrals on the right we then have
\begin{align*}
  \int_{T - t}^{\varepsilon} \frac{1}{u - \frac{L}{\eta_T} u^2} \,d u 
  = -\ln \left( \frac{T - t}{\varepsilon} \right)+\ln \left( \frac{\frac{\eta_T}{L} - (T - t)}{\frac{\eta_T}{L} -
  \varepsilon} \right)
  \le -\ln \left( \frac{T - t}{\varepsilon} \right) - \ln \left( 1
  - \frac{L}{\eta_T } \varepsilon \right)
\end{align*}
and similarly
\begin{align*}
\int_{T - t}^{\varepsilon}
\frac{1}{\frac{\eta_T}{L} - u} \,d u = \ln \left( \frac{\frac{\eta_T}{L} - (T - t)}{\frac{\eta_T}{L} -
  \varepsilon} \right) \le - \ln \left( 1
  - \frac{L}{\eta_T } \varepsilon \right).
\end{align*}
Putting together the estimates yields 
\begin{align*}
e^{- \int_{T - \varepsilon}^t \frac{A_s}{\eta_s} \,d s} \geqslant
   K_{\varepsilon}\varepsilon^{- 1} (T - t), \quad \text{with } K_{\varepsilon} =  \left( 1 - \frac{L}{\eta_T }
\varepsilon \right)^{   \frac{3}{2}  + \frac{T \Vert \lambda \Vert_{\infty}}{L} } \in (0,1),
\end{align*}
for all $t \in (T-\varepsilon, T]$.
This allows us to finally estimate using again Lemma \ref{lem:estimatesA}
\begin{align*}
 \alpha^{0}_T =
  e^{- \int_0^{T - \varepsilon} \frac{A_s}{\eta_s} \,d s}\left( \lim_{t
  \rightarrow T} A_t e^{- \int_{T - \varepsilon}^t \frac{A_s}{\eta_s} \,d
  s}\right)
  \ge e^{- \int_0^{T - \varepsilon} \frac{A_s}{\eta_s} \,d s}\varepsilon^{-1}
  K_{\varepsilon} \etamin^{-1} >0.
\end{align*}
\end{proof}

\begin{lemma}\label{lem:A-kappa}
For any $\delta \in [0,1]$ there exists a unique solution $A^{\delta}$ to the singular terminal value problem
\begin{align}\label{eq:A-appendix-kappa}
-\dot A_t=&~-\frac{1}{\eta_t}A^2_t + \delta\frac{\kappa_t}{\eta_t}A_t+\lambda_t, \qquad \lim_{t\nearrow T}A_t=\infty.
\end{align}
Furthermore, the solutions satisfy the  comparison principle
$$A^{\delta_1} \le A^{\delta_2}, \qquad 0 \le \delta_1 \le \delta_2 \le 1$$
and the map
\begin{equation}\label{eq:A-comparison}
[0,1] \mapsto (C^{0}(I; \bR), \Vert\cdot\Vert_\infty), \quad \delta \mapsto A^{\delta},
\end{equation}
is continuous uniformly on compact subintervals $I\subset[0,T)$.
\end{lemma}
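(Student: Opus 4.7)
The plan is to construct $A^\delta$ as the monotone limit of solutions to regular Cauchy problems, mirroring the treatment of the case $\delta = 0$ in Lemma~\ref{lem:estimatesA}, and to obtain the comparison and continuity statements by comparing the approximants uniformly in $\delta$. The central new ingredient compared to the $\delta = 0$ case is controlling the linear term $\delta\kappa A/\eta$ uniformly in $\delta \in [0,1]$.

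First I would consider, for each $n \in \bN$, the truncated terminal value problem with regular data $A_T = n$. Picard--Lindel\"of gives a unique local $C^1$ solution $A^{\delta,n}$, and global existence on $[0,T]$ follows from the Riccati sign structure: integrating backward from $T$, the quadratic term $A^2/\eta$ keeps $A^{\delta,n}_t \le n$ while the linear bound $\dot A \ge -\delta\Vert\kappa/\eta\Vert_\infty |A| - \Vert\lambda\Vert_\infty$ combined with Gr\"onwall rules out blow-down. Both the monotonicity $A^{\delta,n} \le A^{\delta,n+1}$ and the comparison $A^{\delta_1,n} \le A^{\delta_2,n}$ for $\delta_1 \le \delta_2$ at this truncated level follow from the fact that the difference $D := A^{\delta_2,n_2} - A^{\delta_1,n_1}$ satisfies a linear ODE $\dot D_t = p_t D_t + g_t$ with non-positive source $g_t = (\delta_1-\delta_2)\kappa_t A^{\delta_1,n_1}_t/\eta_t \le 0$ and non-negative terminal value, so Gr\"onwall propagates the sign.

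To pass $n \to \infty$, I need a uniform upper bound on $A^{\delta,n}$ on intervals $[0, T-\varepsilon]$. The cleanest route is the shift $\widetilde A := A - \delta\kappa/2$, which eliminates the linear-in-$A$ term and yields a pure Riccati equation $\dot{\widetilde A} = \widetilde A^2/\eta - \widetilde\lambda$ with source $\widetilde\lambda = \lambda + \tfrac{\delta}{2}\dot\kappa + \tfrac{\delta^2\kappa^2}{4\eta}$ that is bounded (though signed) uniformly in $\delta \in [0,1]$. A supersolution of the form $C/(T-t)$ comparison then yields $A^{\delta,n}_t \le C/(T-t)$ with $C$ independent of $n$ and locally uniform in $\delta$. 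The monotone limit $A^\delta_t := \lim_n A^{\delta,n}_t$ is finite on $[0,T)$, solves the ODE by dominated convergence in its integral form, and blows up at $T$ by the lower bound $A^{\delta,n} \ge A^{0,n}$ combined with Lemma~\ref{lem:estimatesA}. Uniqueness of the singular solution is the standard maximal-solution argument: any other singular solution $\widehat A$ is bounded on each $[0, T-\varepsilon]$ and majorizes $A^{\delta,n}$ there for $n$ large, and the reverse inequality follows from $\widehat A$ being trapped below the same supersolution $C/(T-t)$.

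The comparison $A^{\delta_1} \le A^{\delta_2}$ for $\delta_1 \le \delta_2$ is inherited from the truncated level by passing $n \to \infty$. For continuity of $\delta \mapsto A^\delta$ on $[0, T-\varepsilon]$, the uniform upper bound makes the Gr\"onwall kernel for the linear difference equation $\dot\Delta_t = p_t \Delta_t + g_t$ uniformly bounded in $\delta$ and $n$, which, combined with standard continuous dependence for the finite-terminal problem and iteration of the resulting estimate on expanding subintervals, propagates continuity to $C^0([0, T-\varepsilon])$. The main obstacle is the uniform-in-$\delta$ upper bound itself: the natural shift produces a signed (but bounded) $\widetilde\lambda$ that falls outside the immediate scope of Lemma~\ref{lem:estimatesA}, so one has to either adapt those estimates or build a supersolution by hand tailored to the signed-source case.
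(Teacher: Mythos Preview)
Your approach is correct in outline but takes a genuinely different route from the paper. The paper does not build approximating problems with finite terminal data; instead it reduces \eqref{eq:A-appendix-kappa} directly to the $\delta=0$ case of Lemma~\ref{lem:estimatesA} via the multiplicative substitution $A^\delta_t = \widetilde A_t\, e^{-\int_0^t \delta\kappa_r/\eta_r\,dr}$, where $\widetilde A$ solves the standard singular Riccati with modified (and still nonnegative) coefficients $\widetilde\eta_t = \eta_t e^{\int_0^t \delta\kappa/\eta}$ and $\widetilde\lambda_t = \lambda_t e^{\int_0^t \delta\kappa/\eta}$. This integrating-factor trick removes the linear term without introducing a signed source, so existence, uniqueness, positivity and the a priori bounds are inherited from Lemma~\ref{lem:estimatesA} for free. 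For comparison and continuity the paper then works with the reciprocal difference $\frac{1}{A^{\delta_1}} - \frac{1}{A^{\delta_2}}$, which has zero terminal value and satisfies a clean differential inequality; Gr\"onwall backward from $T$ gives both the sign and the quantitative estimate $0 \le A^{\delta_2}_t - A^{\delta_1}_t \le C(A^1_t)^2(\delta_2 - \delta_1)$ on compacts.

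Your additive shift $\widetilde A = A - \delta\kappa/2$ is the natural ``complete the square'' move but, as you note, produces a signed $\widetilde\lambda$ that forces you either to extend Lemma~\ref{lem:estimatesA} or to build supersolutions by hand. One further point you should address: your comparison $A^{\delta_1,n} \le A^{\delta_2,n}$ at the truncated level relies on $g_t \le 0$, which in turn requires $A^{\delta_1,n} \ge 0$; this positivity is not delivered by your no-blow-down bound and needs a separate (easy) argument, e.g.\ first establishing $A^{0,n}>0$ and then bootstrapping via the same comparison with $\delta_1=0$. The paper's substitution sidesteps both issues at once, which is what makes it the cleaner route here.
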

\begin{proof}
For $\delta = 0$ the equation \eqref{eq:A-appendix-kappa} coincides with \eqref{eq:A-appendix} and we denote $A = A^{0}$.
For $\delta\in(0,1]$ define
\begin{align}\label{eq:A_modified_coefficients}
\widetilde{\eta}_t := \eta_t e^{\int_0^{t}\delta\frac{\kappa_s}{\eta_s} ds}, \qquad
\widetilde{\lambda}_t := \lambda_t e^{\int_0^{t}\delta\frac{\kappa_s}{\eta_s} ds}, \qquad t\in[0,T].
\end{align}
Then clearly $\widetilde{\eta} \in C^{1}([0,T], (0,\infty))$ and $\widetilde{\lambda}$ is bounded and therefore by the above Lemma \ref{lem:estimatesA} there exists a unique solution $\tilde{A}$ to the singular terminal value problem
\begin{align*}
-\frac{d}{dt}\widetilde{A}_t=&~-\frac{1}{\widetilde\eta_t}\widetilde{A}^2_t+\tilde\lambda_t, \qquad \lim_{t\nearrow T}\widetilde{A}_t=\infty.
\end{align*}
A simple calculation then shows that $A^{\delta} := \tilde A e^{-\int_{0}^{\cdot}\delta\frac{\kappa_r}{\eta_r}dr}$ solves \eqref{eq:A-appendix-kappa}.
{Conversely, for any solution $A^{\delta}$ to \eqref{eq:A-appendix-kappa} we can define $\widetilde{A}:= A^{\delta}e^{\delta\int_0^{\cdot} \frac{\kappa_r}{\eta_r}dr}$ which then solves \eqref{eq:A-appendix} with coefficients $\widetilde{\eta}$ and $\widetilde{\lambda}$. 
By uniqueness of solutions to \eqref{eq:A-appendix} it thus follows that also \eqref{eq:A-appendix-kappa} has a unique solution.}
For the forthcoming part of the proof, note that $A^{\delta} > c \tilde{A}>0$ for a suitable constant $c>0$.

Now let $0 \le \delta_1 \le \delta_2 \le 1$ and let $A^{\delta_2}$ and $A^{\delta_2}$ be the corresponding solutions to \eqref{eq:A-appendix-kappa}. Then, 
	\begin{equation}\label{eq:difference-A-AN}
		\left(\frac{1}{A^{\delta_1}_t}-\frac{1}{A^{\delta_2}_t}\right)' = \frac{\kappa_t}{\eta_t}\left(\frac{\delta_1}{A^{\delta_1}_t} - \frac{\delta_2}{A^{\delta_2}_t}\right) + \lambda_t\left(\frac{1}{A^{\delta_1}_t} + \frac{1}{A^{\delta_2}_t}\right)\left(\frac{1}{A^{\delta_1}_t} - \frac{1}{A^{\delta_2}_t}\right).
	\end{equation}
Since it follows from the first part of the proof that $A^{\delta}>0$ we can estimate
	\begin{equation*}
		\frac{d}{dt}\left(\frac{1}{A^{\delta_1}_t}-\frac{1}{A^{\delta_2}_t}\right) \le \left(\frac{\delta_2\kappa_t }{\eta_t}+\lambda_t\left(\frac{1}{A^{\delta_1}_t} + \frac{1}{A^{\delta_2}_t}\right)\right)\left(\frac{1}{A^{\delta_1}_t} - \frac{1}{A^{\delta_2}_t}\right).
	\end{equation*}
Recalling that $\lim_{t\to T} \frac{1}{A^{\delta}_t}=0$ and applying the Gr\"onwall's inequality in the backward form we obtain
$$\frac{1}{A^{\delta_1}_t}-\frac{1}{A^{\delta_2}_t} \ge 0, \quad t \in [0,T],$$
which proves the comparison principle.
This being proven, we can estimate in \eqref{eq:difference-A-AN} as follows:
	\begin{align*}
		\frac{d}{dt}\left(\frac{1}{A^{\delta_1}_t}-\frac{1}{A^{\delta_2}_t}\right) ~\ge~ -\frac{\kappa_t}{\eta_t}
		\frac{\delta_2-\delta_1}{A^{\delta_2}_t} 
		~\ge~ -\frac{\kappa_t}{\eta_t}
		\frac{\delta_2-\delta_1}{A_t}.
	\end{align*}
Integrating both sides of the inequality and using Lemma \ref{lem:estimatesA} yields
		\begin{equation*}
		\frac{1}{A^{\delta_1}_t}-\frac{1}{A^{\delta_2}_t} 
		~\le~ (\delta_2-\delta_1)\int_{t}^{T}\frac{\kappa_s}{\eta_s}
		\frac{1}{A_s}\, ds 
		~\le~ C(\delta_2-\delta_1), \qquad t\in[0,T)
	\end{equation*}
for some constant $C>0$. From the comparison principle we then obtain
		\begin{equation*}
		0 \le A^{\delta_2}_t - A^{\delta_1}_t \le (A^1_t)^{2} C (\delta_2 - \delta_1), \qquad t \in [0, T).
	\end{equation*}
The continuity of the map $\delta \mapsto A^{\delta}$ now follows from the uniform boundedness of the map $A^{1}$ on any compact subset of $[0,T)$.  
\end{proof}

\begin{lemma}\label{lem:convergence-alpha}
For all $\delta \in [0,1]$ the map $\alpha^\delta: t\mapsto A^\delta_t e^{-\int_0^t \frac{A_r^\delta-\delta\kappa_r}{\eta_r}\,dr}$ is non-increasing and
\begin{equation}\label{eq:alphaN-appendix}
\alpha^{\delta}_T = \lim_{t \to T} A^{\delta}_t e^{-\int_{0}^{t}\frac{A^{\delta}_r - \delta\kappa_r}{\eta_r}dr} \in (0,\infty).
\end{equation}
Furthermore we have the following convergence
\begin{equation*}
\lim_{\delta \to 0}\alpha^\delta_T=\alpha^{0}_T.
\end{equation*}
\end{lemma}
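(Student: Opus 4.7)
My plan is to prove the three assertions by first establishing a clean differential identity for $\alpha^\delta$, integrating it to obtain a closed-form representation, and then reading off monotonicity, strict positivity of the limit, and the convergence $\alpha^\delta_T \to \alpha^0_T$ directly from that formula.

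For the monotonicity, I would differentiate $\alpha^\delta_t = A^\delta_t \exp\!\big(-\!\int_0^t (A^\delta_r - \delta\kappa_r)/\eta_r \, dr\big)$. Substituting the ODE \eqref{eq:A-appendix-kappa} in the form $\dot A^\delta_t = (A^\delta_t)^2/\eta_t - \delta \kappa_t A^\delta_t/\eta_t - \lambda_t$, the quadratic terms cancel exactly, leaving
\[
\dot\alpha^\delta_t = -\lambda_t \exp\!\left(-\int_0^t \frac{A^\delta_r - \delta\kappa_r}{\eta_r}\, dr\right) \le 0.
\]
This mirrors the computation in the proof of Lemma \ref{lem:alpha-exisitence} and shows $\alpha^\delta$ is non-increasing on $[0,T)$. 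Rewriting the right-hand side as $-\lambda_t\,\alpha^\delta_t/A^\delta_t$ gives the logarithmic identity $\tfrac{d}{dt}\ln\alpha^\delta_t = -\lambda_t/A^\delta_t$, which integrates to
\[
\alpha^\delta_t \;=\; A^\delta_0 \exp\!\left(-\int_0^t \frac{\lambda_s}{A^\delta_s}\, ds\right), \qquad t \in [0,T).
\]

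To extend this to $t=T$ and verify $\alpha^\delta_T \in (0,\infty)$, I would use the comparison principle $A^\delta \ge A^0$ from Lemma \ref{lem:A-kappa} together with the lower bound $A^0_s \ge 1/(\etamin(T-s))$ from Lemma \ref{lem:estimatesA} to control the integrand uniformly in $\delta \in [0,1]$ by $\|\lambda\|_\infty\,\etamin(T-s)$, which is integrable on $[0,T]$. Hence the limit exists and equals $\alpha^\delta_T = A^\delta_0 \exp(-\int_0^T \lambda_s/A^\delta_s\, ds) \in (0,\infty)$.

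Finally, the closed-form expression reduces the convergence $\alpha^\delta_T \to \alpha^0_T$ to two elementary ingredients. The convergence $A^\delta_0 \to A^0_0$ is immediate from the continuity statement in Lemma \ref{lem:A-kappa}. For $\int_0^T \lambda_s/A^\delta_s\,ds \to \int_0^T \lambda_s/A^0_s\,ds$ I would invoke dominated convergence: uniform convergence $A^\delta \to A^0$ on compact subsets of $[0,T)$ (again from Lemma \ref{lem:A-kappa}) yields pointwise convergence of the integrand on $[0,T)$, and the uniform dominating function $\|\lambda\|_\infty \etamin(T-s)$ identified above validates passage to the limit. The only delicate point is the behaviour at the endpoint $s=T$, where the comparison bound $A^\delta\ge A^0$ supplies a $\delta$-uniform integrable majorant; this is the main technical obstacle, but it is dispatched by the same lower bound that makes $\alpha^\delta_T$ strictly positive in the first place.
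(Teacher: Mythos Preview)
Your proof is correct, but it takes a genuinely different and more elementary route than the paper for the second and third assertions. Both you and the paper compute $\dot\alpha^\delta_t = -\lambda_t e^{-\int_0^t(A^\delta_r-\delta\kappa_r)/\eta_r\,dr}$ for the monotonicity. For positivity of $\alpha^\delta_T$, however, the paper argues by reduction to the case $\delta=0$: it observes that $\tilde A := A^\delta e^{\int_0^\cdot \delta\kappa_r/\eta_r\,dr}$ solves the Riccati equation \eqref{eq:A-appendix} with modified coefficients $\tilde\eta,\tilde\lambda$, and then invokes Lemma~\ref{lem:alpha-exisitence} (whose proof required fairly delicate estimates near $T$). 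You instead rewrite the derivative as $\dot\alpha^\delta_t = -(\lambda_t/A^\delta_t)\,\alpha^\delta_t$ and integrate to obtain the multiplicative formula $\alpha^\delta_t = A^\delta_0\exp\!\big(-\!\int_0^t\lambda_s/A^\delta_s\,ds\big)$, from which positivity of $\alpha^\delta_T$ follows directly from the integrable majorant $\lambda_s/A^\delta_s \le \|\lambda\|_\infty\etamin(T-s)$ supplied by the comparison $A^\delta\ge A^0$ and the lower bound on $A^0$. For the convergence $\alpha^\delta_T\to\alpha^0_T$, the paper works with the additive representation $\alpha^\delta_T = A^\delta_0 - \int_0^T\lambda_s e^{-\int_0^s(A^\delta_r-\delta\kappa_r)/\eta_r\,dr}\,ds$ and applies dominated convergence to the more complicated integrand, whereas your exponential formula reduces the question to convergence of $\int_0^T\lambda_s/A^\delta_s\,ds$, where the same $\delta$-uniform majorant handles the endpoint and the dominated convergence is immediate. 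Your approach is shorter and self-contained, in particular it bypasses the careful near-$T$ analysis of Lemma~\ref{lem:alpha-exisitence}; the paper's reduction to modified coefficients, by contrast, makes the link to the $\delta=0$ theory explicit.
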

\begin{proof}
	It can be verified directly that 
\[
	\frac{d}{dt}\left(A^\delta_t e^{-\int_0^t \frac{A_r^\delta-\delta\kappa_r}{\eta_r}\,dr}\right)=-\lambda_t e^{-\int_0^t \frac{A_r^\delta-\delta\kappa_r}{\eta_r}\,dr}\leq 0.
\]		
	 From the proof of Lemma~\ref{lem:A-kappa} we have that $A^{\delta}e^{\int_0^{\cdot}\delta\frac{\kappa_r}{\eta_r}dr}$ satisfies the equation \eqref{eq:A-appendix-kappa} with $\delta = 0$ and modified coefficients $\widetilde{\eta}$ and $\widetilde{\lambda}$ defined in \eqref{eq:A_modified_coefficients} that satisfy Assumption~\ref{ass:single-player}.
Therefore the existence and positivity of the limit in \eqref{eq:alphaN-appendix} follows immediately from Lemma~\ref{lem:alpha-exisitence}.

Regarding the convergence result, note that on the one hand 
	\[
		\alpha^\delta_T=\lim_{t \to T}A^\delta_t e^{-\int_0^t	\frac{	A_s^\delta-\delta\kappa_s	}{\eta_s}	\,ds}=A^\delta_0-\int_0^T   \lambda_s e^{-\int_0^s  \frac{ A^\delta_r-\delta\kappa_r		}{\eta_r}\,dr  }  \,ds.
	\]
From Lemma \ref{lem:A-kappa} it follows that $A^{\delta}$ converges pointwise on $[0,T)$ towards $A$. 
By the dominated convergence theorem we then have
	\begin{equation*}
		\lim_{\delta\to0}\alpha^\delta_T=\lim_{\delta\to0} \left(A^\delta_0-\int_0^T   \lambda_s e^{-\int_0^s  \frac{ A^\delta_r-\delta\kappa_r}{\eta_r}\,dr  }  \,ds\right)=A^{0}_0-\int_0^T\lambda_s e^{-\int_0^s\frac{A^{0}_r}{\eta_r}\,dr}\,ds. 
	\end{equation*}
The desired convergence result now follows from
	\begin{equation*}
		\alpha^{0}_T = \lim_{t\to T}A^{0}_t e^{-\int_0^t\frac{A^{0}_s}{\eta_s}\,ds} 
		= A^{0}_0-\int_0^T\lambda_s e^{	-\int_0^s \frac{A^{0}_r}{\eta_r}\,dr	}\,ds.
	\end{equation*}
\end{proof}

\begin{proof}[Proof of Lemma \ref{lem:about-h}]
From Lemma \ref{lem:convergence-alpha}, we have for any $\delta\in[0,1]$
\begin{equation}\label{eq:A_h_upper}\begin{split}
	h^{\delta}_t = &~   e^{-\int_0^t \frac{A^\delta_r}{\eta_r}\,dr  }	\int_0^t  \frac{1}{\eta_s} e^{\int_0^s \frac{2A^\delta_r-\delta\kappa_r}{\eta_r}\,dr  }\,ds	\\
	=&~ e^{	-\int_0^t\frac{A^{\delta}_r}{\eta_r}\,dr	}\int_0^t  e^{\int_0^s\frac{A^{\delta}_r}{\eta_r}\,dr	}\frac{A^{\delta}_s}{\eta_s} \frac{1}{\alpha^{\delta}_s} \,ds \\	
	\le&~\frac{1}{\alpha^{\delta}_t} e^{	-\int_0^t\frac{A^{\delta}_r}{\eta_r}\,dr	}\left(e^{\int_0^t\frac{A^{\delta}_r}{\eta_r}\,dr	}-1 \right) \\
	 \le&~ \frac{1}{\alpha^{\delta}_T}, \qquad t\in[0,T).\end{split}
\end{equation}
From Lemma~\ref{lem:convergence-alpha}, we have the uniform boundedness of $\left(\frac{1}{\alpha^\delta_T}\right)_{0\leq\delta\leq 1}$, which leads to a uniform upper bound of $h^{\delta}$ for $\delta \in[0,1]$.
On the other hand, note that
\begin{align*}
\frac{d}{dt}\left(\frac{e^{\int_0^t\frac{A^{\delta}_r - \delta \kappa_r}{\eta_r}\,dr}}{A^{\delta}_t} \right)
&=~ e^{\int_0^t\frac{A^{\delta}_r - \delta \kappa_r}{\eta_r}\,dr}\frac{\lambda_t}{(A^{\delta}_t)^{2}}, \qquad t\in[0,T).
\end{align*}
Therefore using integration by parts we can establish the following lower bound on $h^{\delta}$
\begin{equation}\label{eq:A_h_lower}\begin{split}
h^{\delta}_t 
&= e^{-\int_0^t\frac{A^{\delta}_r}{\eta_r}\,dr}\int_0^t \frac{1}{\eta_s} e^{\int_0^s\frac{2A^{\delta}_r - \delta \kappa_r}{\eta_r}\,dr}\,ds \\
&= e^{-\int_0^t\frac{A^{\delta}_r}{\eta_r}\,dr}\int_0^t \left(\frac{A^{\delta}_s}{\eta_s} e^{\int_0^s\frac{A^{\delta}_r}{\eta_r}\,dr}\right) \frac{e^{\int_0^s\frac{A^{\delta}_r - \delta \kappa_r}{\eta_r}\,dr}}{A^{\delta}_s}\,ds \\
&= e^{-\int_0^t\frac{A^{\delta}_r}{\eta_r}\,dr}\left(e^{\int_0^t\frac{A^{\delta}_r}{\eta_r}\,dr}\frac{e^{\int_0^t\frac{A^{\delta}_r  - \delta \kappa_r}{\eta_r}\,dr}}{A^{\delta}_t} - \frac{1}{A^{\delta}_0} - \int_0^t e^{\int_0^s\frac{A^{\delta}_r}{\eta_r}\,dr} \frac{\lambda_se^{\int_0^s\frac{A^{\delta}_r-\delta \kappa_r}{\eta_r}\,dr}}{(A^{\delta}_s)^{2}}\,ds \right) \\
&= \frac{e^{\int_0^t\frac{A^{\delta}_r  - \delta \kappa_r}{\eta_r}\,dr}}{A^{\delta}_t} + e^{-\int_0^t\frac{A^{\delta}_r}{\eta_r}\,dr}\left(-\int_0^t \frac{\lambda_s e^{\int_0^{s}\delta\frac{\kappa_r}{\eta_r}\,dr}}{(\alpha^{\delta}_s)^{2}}\,ds - \frac{1}{A^{\delta}_0}\right)\\
&\ge \frac{1}{\alpha^{\delta}_t} - Ce^{-\int_0^t\frac{A^{\delta}_r}{\eta_r}\,dr}, \qquad t\in[0,T),
\end{split}
\end{equation}
where $C>0$ is a constant that by Lemma~\ref{lem:A-kappa} and Lemma~\ref{lem:convergence-alpha} can be chosen independently of $\delta\in[0,1]$.
Furthermore combining the estimates \eqref{eq:A_h_upper} and \eqref{eq:A_h_lower} and letting $t\nearrow T$ we see that $h^{\delta}_T = 1/\alpha^{\delta}_T$.
Finally, note that $h^{\delta}$ is continuously differentiable on $[0,T)$.
Hence, using the lower bound \eqref{eq:A_h_lower} and once more that $\alpha^{\delta}$ is non-increasing we obtain the following uniform bound over $\delta\in[0,1]$
\begin{align*}
\dot h^{\delta}_t &=~ -\frac{A^{\delta}_t}{\eta_t}h^{\delta}_t + \frac{1}{\eta_t}e^{\int_0^{t}\frac{A^{\delta}_r- \delta\kappa_r}{\eta_r}\,dr} \\
&\le~ -\frac{A^{\delta}_t}{\eta_t}\left(\frac{1}{\alpha^{\delta}_t} - Ce^{-\int_0^t\frac{A^{\delta}_r}{\eta_r}\,dr}
\right) + \frac{1}{\eta_t}e^{\int_0^{t}\frac{A^{\delta}_r- \delta\kappa_r}{\eta_r}\,dr} \\
&=~  -\frac{1}{\eta_t}\frac{A^{\delta}_t}{\alpha^{\delta}_t}+ C\frac{\alpha_t}{\eta_t}e^{ -\int_0^t \frac{\kappa_r}{\eta_r}\,dr }  + \frac{1}{\eta_t}e^{\int_0^{t}\frac{A^{\delta}_r- \delta\kappa_r}{\eta_r}\,dr} \\
&\le~ CA^{\delta}_0\frac{1}{\eta_t} \\
&\le~ CA^{\delta}_0\Vert 1/\eta\Vert_\infty, \qquad t\in[0,T].
\end{align*}
On the other hand, using the upper bound \eqref{eq:A_h_upper} we obtain
\begin{align*}
\dot h^{\delta}_t &\ge~ -\frac{A^{\delta}_t}{\eta_t} \frac{1}{\alpha^{\delta}_t} \left(1-e^{	-\int_0^t\frac{A^{\delta}_r}{\eta_r}\,dr	}\right) + \frac{1}{\eta_t}e^{\int_0^{t}\frac{A^{\delta}_r- \delta\kappa_r}{\eta_r}\,dr} \\
&=~ -\frac{A^{\delta}_t}{\eta_t} \frac{1}{\alpha^{\delta}_t}  + \frac{A^{\delta}_t}{\eta_t} \frac{1}{\alpha^\delta_t}e^{	-\int_0^t\frac{A^{\delta}_r}{\eta_r}\,dr	} + \frac{1}{\eta_t}e^{\int_0^{t}\frac{A^{\delta}_r- \delta\kappa_r}{\eta_r}\,dr} \\
&=~ \frac{1}{\eta_t} e^{-\int_0^t \frac{\delta\kappa_r}{\eta_r}\,dr} \\
&\ge~ \frac{1}{\Vert \eta \Vert_\infty}e^{-\int_0^T \frac{\delta\kappa_r}{\eta_r}\,dr}, \qquad t\in[0,T].
\end{align*}
This proves in particular that $h^{\delta}$ is strictly increasing.

Finally, the a.e. convergence $h^\delta\rightarrow h^0$ and $\dot h^\delta\rightarrow \dot h^0$ can be easily obtained by the expressions of $h^\delta$ and $\dot h^\delta$, as well as the convergence of $A^\delta\rightarrow A^0$ established in Lemma \ref{lem:A-kappa}. 
\end{proof}

The following version of Gr\"onwall's inequality is frequently used in the main text, but it is difficult to locate it in the literature, where
 the function $\beta$ is usually assumed to be non-negative.
\begin{lemma}[Lower Gr\"onwall's inequality]\label{lem:lower_gronwall}
Let $u$ be an absolutely continuous function on $[0,T]$, with $u_T > 0$ and such that there exists a continuous function $\beta$ such that
\begin{align*}
\dot u_t ~\le~ \beta_t u_t  \qquad \text{ for a.e.   }t \in [0,T].
\end{align*}
Then it holds
\begin{align*}
 u_t \ge  u_T\exp\!\left(-\int_t^{T}\beta_s\, ds\right), \qquad t\in[0,T].
\end{align*}
\end{lemma}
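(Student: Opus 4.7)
The plan is to reduce the claim to the monotonicity of a suitably chosen auxiliary function. First I would introduce the integrating factor
\[
\Phi_t := \exp\!\left(\int_t^{T}\beta_s\,ds\right),\qquad t\in[0,T],
\]
which is strictly positive and, since $\beta$ is continuous, continuously differentiable with $\dot\Phi_t = -\beta_t\Phi_t$. Setting $v_t := u_t\Phi_t$, the product of an absolutely continuous function with a $C^1$ function is again absolutely continuous, and the product rule yields the almost-everywhere identity
\[
\dot v_t = \dot u_t \Phi_t + u_t\dot\Phi_t = \Phi_t\bigl(\dot u_t - \beta_t u_t\bigr).
\]

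The hypothesis $\dot u_t \le \beta_t u_t$ then gives $\dot v_t \le 0$ for a.e.\ $t\in[0,T]$. Since $v$ is absolutely continuous, this suffices to conclude that $v$ is non-increasing on $[0,T]$; in particular $v_t \ge v_T = u_T$ for every $t\in[0,T]$. Dividing by the positive factor $\Phi_t$ rearranges this to exactly
\[
u_t \;\ge\; u_T\,\Phi_t^{-1} \;=\; u_T\exp\!\left(-\int_t^{T}\beta_s\,ds\right),
\]
which is the claim.

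There is really no obstacle to speak of. The only items to justify carefully are that an absolutely continuous function with non-positive almost-everywhere derivative is non-increasing (a standard consequence of the fundamental theorem of calculus for absolutely continuous functions) and that the product rule applies in the absolutely continuous setting. I would also note in passing that the assumption $u_T>0$ plays no role in the derivation itself; its purpose is merely to make the resulting lower bound informative, as otherwise the conclusion degenerates to $u_t\ge 0$. Finally, the reason this statement does not appear in the standard references is that the usual formulation of Gr\"onwall's lemma assumes $\beta\ge 0$ to handle an integral inequality; here the differential formulation combined with the continuity of $\beta$ allows $\beta$ to be of arbitrary sign, which is precisely what is needed in the main text to accommodate the term $(\dot\eta-\delta\kappa)/\eta$.
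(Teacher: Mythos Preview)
Your proof is correct and follows essentially the same integrating-factor argument as the paper: the paper defines $\nu_t := \exp(\int_t^T\beta_s\,ds)/u_T$ and shows $\frac{d}{dt}(\nu_t u_t)\le 0$, which is identical to your computation with $\Phi_t$ up to the harmless constant factor $1/u_T$. One small inaccuracy in your closing commentary: if $u_T<0$ the conclusion does not degenerate to $u_t\ge 0$ but rather gives a (negative) lower bound $u_T\exp(-\int_t^T\beta_s\,ds)$, though this does not affect the proof itself.
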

\begin{proof}
Define the differentiable function $\nu_\cdot := \exp(\int_\cdot^{T}\beta_sds)/ u_T$. 
Since $u_T>0$ by assumption it holds $\nu >0$.
At any point $t\in[0,T]$ where $u$ is differentiable we obtain
\begin{align*}
\frac{d}{dt}(\nu_t u_t) =&~ \dot\nu_t u_t + \nu_t\dot u_t \\
\le&~ -\beta_t \nu_t u_t + \beta_t \nu_t u_t = 0.
\end{align*}
Since $u$ is almost everywhere differentiable, integration then yields
\begin{align*}
u_T\nu_T - u_t\nu_t = \int_t^{T}\frac{d}{dt}(\nu_su_s)\, ds\le 0,
\end{align*}
and with $\nu_Tu_T = 1$ the stated estimate follows.
\end{proof} 
 
\bibliography{bib_FHH}

\end{document}